\providecommand{\keyword}[1]{\textbf{Keywords: } #1}
\newcommand{\nit}{\noindent}
\newcommand{\be}{\begin{equation}}
\newcommand{\ee}{\end{equation}}
\newcommand{\ba}{\begin{eqnarray*}}
\newcommand{\ea}{\end{eqnarray*}}
\newcommand{\bi}{\begin{itemize}}
\newcommand{\ei}{\end{itemize}}
\newcommand{\eps}{\mbox{$\epsilon$}}
\newtheorem{theo}{\textbf{Theorem}}[section]
\newtheorem{defin}{\textbf{Definition}}[section]
\newtheorem{lem}{\textbf{Lemma}}[section]
\newtheorem{rmk}{Remark}[section]
\newtheorem{cond}{\textbf{Condition}}
\newcommand{\comments}[1]{}
\begin{document}
\title{Minimization of Transformed $L_1$ Penalty: Theory, Difference of Convex Function Algorithm, 
and Robust Application in Compressed Sensing}

\author{Shuai~Zhang,
        and~Jack~Xin
\thanks{
S. Zhang and J. Xin were partially supported by NSF grants DMS-0928427,
DMS-1222507, and DMS-1522383. They are with the Department of Mathematics, University of California, Irvine, CA, 92697, USA.
E-mail: szhang3@uci.edu; jxin@math.uci.edu.
}}

\maketitle

\begin{abstract}
We study the minimization problem of a non-convex sparsity promoting penalty function, 
the transformed $l_1$ (TL1), and its application in compressed sensing (CS). The TL1 penalty 
interpolates $l_0$ and $l_1$ norms through a nonnegative parameter $a \in (0,+\infty)$,  
similar to $l_p$ with $p \in (0,1]$, and is known to satisfy   
unbiasedness, sparsity and Lipschitz continuity properties. 
We first consider the constrained minimization problem, and discuss the 
exact recovery of $l_0$ norm minimal solution based on the null space property (NSP).
We then prove the stable recovery of $l_0$ norm minimal solution if 
the sensing matrix $A$ satisfies a restricted 
isometry property (RIP).  We formulated a normalized problem to overcome the 
lack of scaling property of the TL1 penalty function. 
For a general sensing matrix $A$, we show that the support set of a local minimizer 
corresponds to linearly independent columns of $A$. 
Next, we present difference of convex algorithms for 
TL1 (DCATL1) in computing TL1-regularized constrained and 
unconstrained problems in CS. 
The DCATL1 algorithm involves outer and inner loops of iterations, one time matrix inversion, 
repeated shrinkage operations and matrix-vector multiplications. 
The inner loop concerns an $l_1$ minimization problem on which we employ 
the Alternating Direction Method of Multipliers (ADMM). 
For the unconstrained problem, we prove convergence of 
DCATL1 to a stationary point satisfying 
the first order optimality condition. 
In numerical experiments, we identify the optimal value $a=1$, 
and compare DCATL1 with other CS algorithms on two classes of 
sensing matrices: Gaussian random matrices and over-sampled discrete 
cosine transform matrices (DCT). 
Among existing algorithms, the iterated reweighted least squares method based on $l_{1/2}$ norm 
is the best in sparse recovery for Gaussian matrices, 
and the DCA algorithm based on $l_1$ minus $l_2$ 
penalty is the best for over-sampled DCT matrices. 
We find that for both classes of sensing matrices, the 
performance of DCATL1 algorithm (initiated with $l_1$ minimization) 
always ranks near the top (if not the top), and 
is the {\it most robust choice} insensitive to the 
conditioning of the sensing matrix $A$. 
DCATL1 is also competitive in comparison with DCA 
on other non-convex penalty functions commonly used in statistics 
with two hyperparameters.
\end{abstract}

\keyword{
Transformed $l_1$ penalty, sparse signal recovery theory, difference of convex function algorithm, 
convergence analysis, coherent random matrices, compressed sensing, robust recovery. }

\begin{AMS}
90C26, 65K10, 90C90
\end{AMS}

\section{Introduction}\label{intro}
Compressed sensing \cite{candes2006stable,Don_06} has generated enormous interest and research 
activities in mathematics, statistics, information science and elsewhere. 
A basic problem is to reconstruct a sparse signal under a few linear 
measurements (linear constraints) far less than the dimension of the ambient space of the signal. 
Consider a sparse signal $x \in \Re^N $, an $M\times N$ sensing matrix $A$ and an observation 
$y \in \Re^M$, $M \ll N$, such that: $	y = A\, x + \epsilon, $
where $\epsilon$ is an $N$-dimensional observation error. The main objective 
is to recover $x$ from $y$. 
\medskip

The direct approach is $l_0$ optimization in either a constrained formulation: 
\begin{equation}\label{eq:l0 cons}
\min \limits_{x \in \Re^N} \| x \| _0, \;\; s.t. \;\;\  A x=y,
\end{equation}
or an unconstrained $l_0$ regularized optimization: 
\begin{equation} \label{eq:l0 uncons}
\min \limits_{x \in \Re^N} \{  \| y - A\, x \|_2^2  + \lambda \|x\|_0 \}
\end{equation}
with a positive regularization parameter $\lambda$. Since minimizing $l_0$ norm is NP-hard 
\cite{L0-NP-natarajan1995sparse}, many viable alternatives have been sought. Greedy methods 
(matching pursuit \cite{MP-mallat1993matching}, othogonal matching pursuits (OMP) 
\cite{OMP-tropp2005signal}, and regularized OMP (ROMP) \cite{ROMP-needell2010signal}) work 
well if the dimension $N$ is not too large. For the unconstrained problem (\ref{eq:l0 uncons}), 
the penalty decomposition method \cite{penalty-decomp} replaces the term $\lambda \| x \|_0$ 
by $\rho_k \|x - z\|_{2}^{2} + \lambda \| z \|_0$, and minimizes over $(x, z)$ for a diverging 
sequence $\rho_k$. The variable $z$ allows the iterative hard thresholding procedure. 
\medskip

The relaxation approach is to replace $l_0$ norm by a continuous sparsity promoting penalty 
function $P(x)$. Convex relaxation uniquely selects $P(\cdot)$ as the $l_1$ norm. The 
resulting problem is known as basis pursuit (LASSO in the over-determined regime \cite{lasso}). 
The $l_1$ algorithms include $l_1$-magic \cite{candes2006stable}, Bregman and split Bregman 
methods \cite{splitBregman, Bregman:yin2008} and yall1 \cite{yall1}. 
Theoretically, Cand\`es, Tao and coauthors  
introduced the restricted isometry property (RIP) on $A$  
to establish the equivalent and unique global solution to $l_0$ 
minimization and stable sparse recovery results 
\cite{candes2005decoding, candes2005error, candes2006stable}. 
\medskip

There are many choices of $P(\cdot)$ for non-convex relaxation, almost all of them are known in 
statistics \cite{fan2001variable,spnet_11}. One is the $l_p$ norm (a.k.a. bridge penalty)  
($p \in (0,1)$) with known $l_0$ equivalence under RIP \cite{chartrand2007nonconvex}. 
The $l_{1/2}$ norm is representative of this class of penlty functions, with the associated  
reweighted least squares and half-thresholding algorithms 
for computation \cite{reweighted-l1/2-WTYing-2013improved,xian-half, xian-half-sa}. 
Near the RIP regime, 
$l_{1/2}$ penalty tends to have higher success rate of sparse reconstruction than $l_1$. 
However, it is not as good as $l_1$ if the sensing matrix $A$ is 
far from RIP \cite{l1-l2-lou2014computing,l1-l2-yinminimization} 
as we shall see later as well. In the highly non-RIP (coherent) regime, 
it is recently found that the difference of 
$l_1$ and $l_2$ norm minimization gives the best sparse 
recovery results \cite{l1-l2-yinminimization,l1-l2-lou2014computing}. 
{\it It is therefore of both theoretical and practical interest to find a 
non-convex penalty that is consistently 
better than $l_1$ and always ranks among the top in sparse recovery 
whether the sensing matrix satisfies RIP or not.} 
\medskip

The $l_p$ penalty functions are however known to bias towards large peaks.  
In the statistics literature of variable selection, 
Fan and Li \cite{fan2001variable} advocated for classes 
of penalty functions with three desired properties: 
{\it unbiasedness, sparsity} and {\it continuity}. 
To help identify such a penalty function denoted by $\rho(\cdot)$, 
Fan and Lv \cite{transformed-l1} proposed 
the following condition for characterizing unbiasedness 
and sparsity promoting properties.
\begin{cond}
The penalty function $ \rho(\cdot) $ satisfies:
\begin{enumerate}
	\item[(i)] $\rho(t)$ is increasing and concave in $t \in [0,\infty)$, 
    \item[(ii)] $\rho'(t)$ is continuous with $\rho'(0+) \in(0, \infty)$.
\end{enumerate}
\end{cond}
It follows that $\rho'(t)$ is positive and decreasing, and $\rho'(0+)$ is the upper bound of $\rho'(t)$.
The penalties satisfying Condition 1 and $\lim_{t \to \infty} \rho^{'}(t) = 0 $ 
enjoy both unbiasedness and sparsity \cite{transformed-l1}. 
Though continuity does not generally hold for 
this class of penalty functions, a special one parameter family of Lipschitz 
continuous functions, the so called 
transformed $l_1$ functions \cite{original TL1}, satisfy all three desired properties above \cite{transformed-l1}. 
\medskip

In this paper, we show that {\bf minimizing the non-convex  
\textbf{transformed $l_1$ functions} (TL1) by the difference of convex (DC) function algorithm  
provides a robust CS solution insensitive to the conditioning of $A$. 
Since verifying the incoherence condition like RIP or null space 
property \cite{CDD_09} on a specific matrix is NP 
hard, such robustness is a significant attribute of an algorithm}. 
Let us consider TL1 function of the form \cite{transformed-l1}: 
\be \label{TL1a}
	\rho_a(t)  =  \dfrac{(a+1)|t|}{a+|t|},   \  \ \;\;   \forall t \in \Re,
\ee 
with parameter $a \in (0,+\infty)$, see \cite{original TL1,DCA:le2013sparse} for alternative forms and 
the $l_0$ approximation property \cite{DCA:le2013sparse}. Another nice property of TL1 is that the TL1 proximal operator has closed form analytical 
solutions for all values of parameter $a$. Fast TL1 iterative theresholding algorithms have been 
devised and studied for both sparse vector and low rank matrix recovery problems 
lately \cite{TL1 threshold,TS1}. 
\medskip

The rest of the paper is organized as follows. In section 2, we study theoretical 
properties of TL1 penalty and TL1 regularized models in exact and stable recovery 
of $\ell_0$ minimal solutions. We show the advantage of TL1 over $\ell_1$ in exact 
recovery under linear constraint using the generalized null space 
property \cite{TW_17} and explicit examples. We analyze stable recovery for linear 
constraint with observation error based on a RIP condition.
We overcome the lack of scaling property of TL1 by introducing 
a normalized TL1 regularized problem. Though the RIP condition is 
not sharp, our analysis is the first of such kind for stable recovery by TL1.  
We also prove that the local minimizers of the TL1 
constrained model extract independent columns from the sensing matrix. 
In section 3, we present two DC algorithms for TL1 optimization (DCATL1). 
In section 4, we compare the performance of DCATL1 with 
some state-of-the-art methods using two classes of matrices: the Gaussian 
and the over-sampled discrete cosine transform (DCT) matrices. 
Numerical experiments indicate that DCATL1 is robust 
and consistently top ranked while maintaining high sparse 
recovery rates across sensing matrices of varying coherence. In section 5, we compare 
DCATL1 with DCA on other non-convex penalties such 
as PiE \cite{Nguyen-DCA-2015}, MCP \cite{MCP}, 
and SCAD \cite{fan2001variable}, and found DCATL1 to be competitive as well. 
Concluding remarks are in section 6. 

\section{Transformed $l_1$ (TL1) and its regularization models} 
\setcounter{equation}{0}

\begin{figure}
\def\arraystretch{3}
\begin{tabular}{l r}
\begin{minipage}[t]{0.45\linewidth}
\includegraphics[scale=0.35]{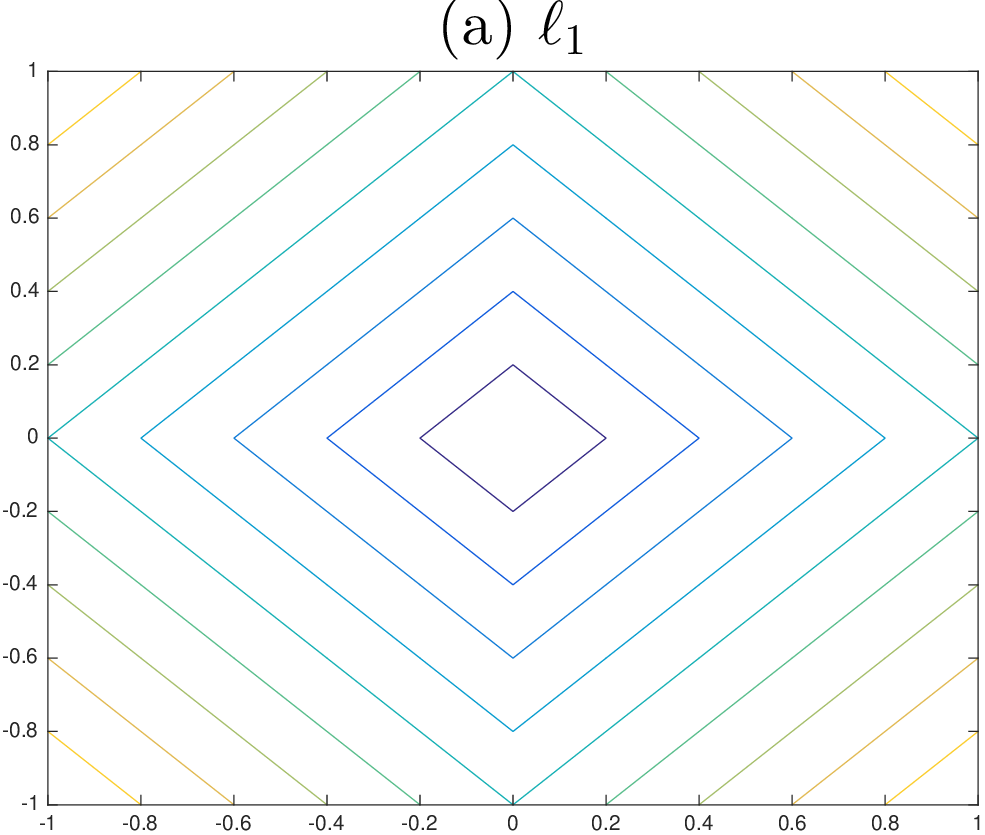} 
\end{minipage} & 
\begin{minipage}[t]{0.45\linewidth}
\includegraphics[scale= 0.35]{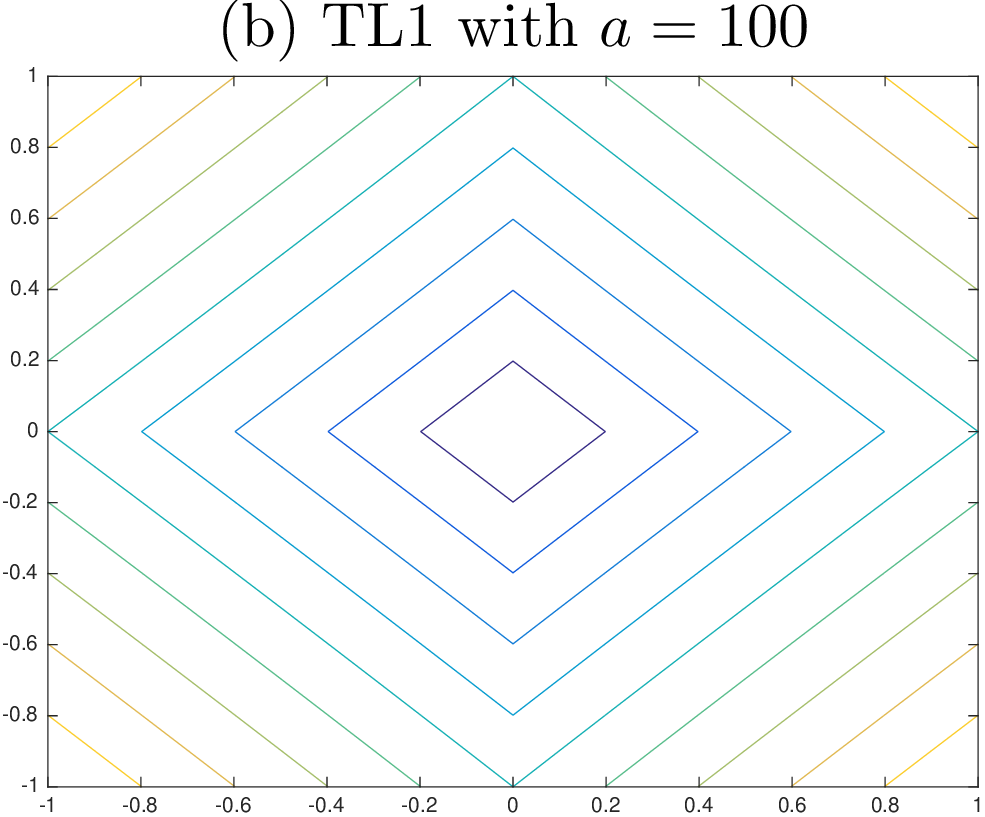} 
\end{minipage} \\
\begin{minipage}[t]{0.45\linewidth}
\includegraphics[scale=0.35]{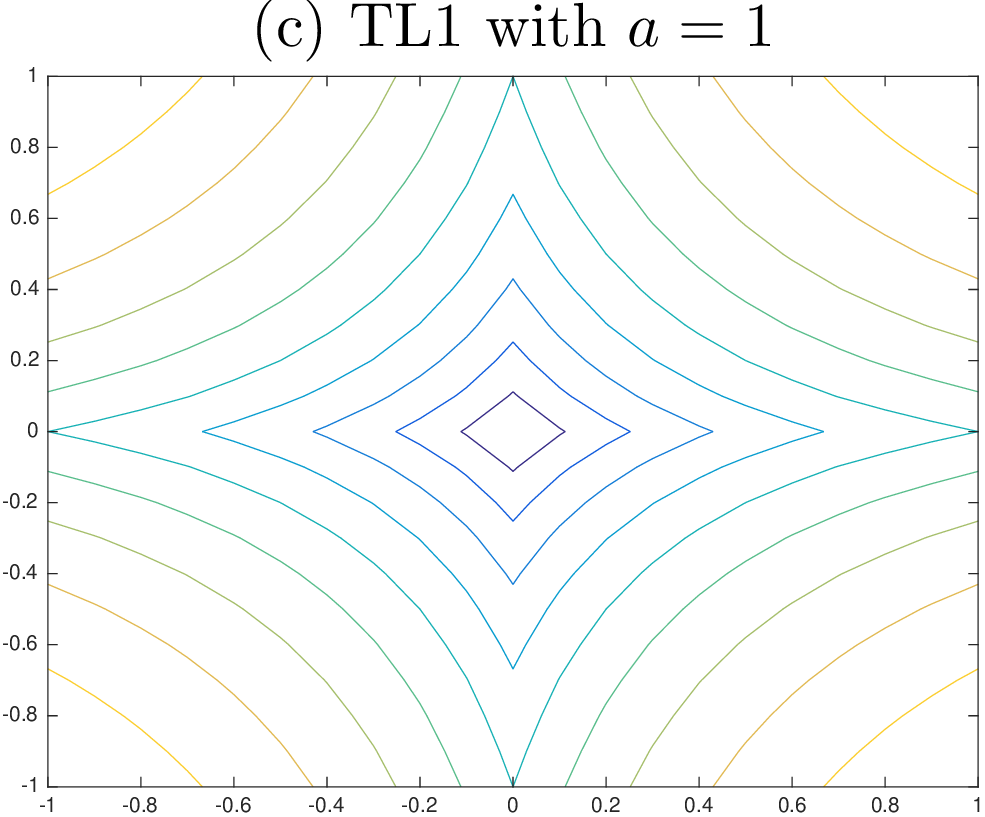} 
\end{minipage} & 
\begin{minipage}[t]{0.45\linewidth}
\includegraphics[scale= 0.35]{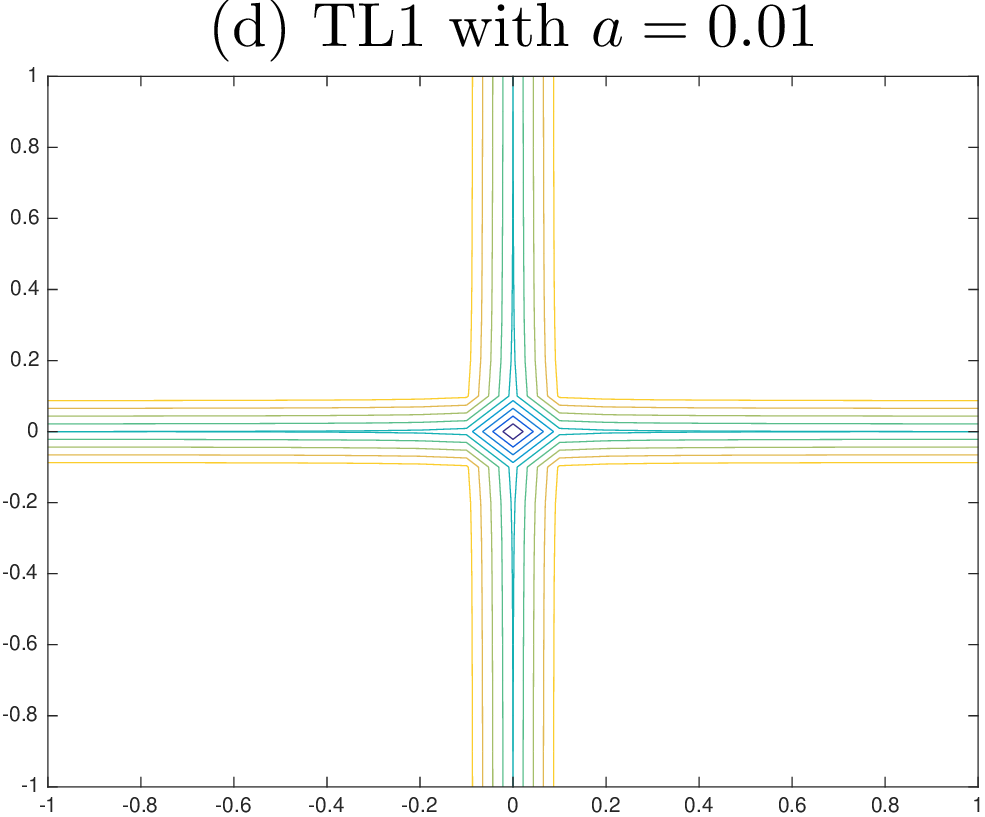} 
\end{minipage}
\end{tabular} 
\caption{Level lines of TL1 with different parameters: $a=100$ (figure b), $a=1$ (figure c), 
$a=0.01$ (figure d). For large parameter `$a$', the graph looks almost the same as $l_1$ (figure a). 
While for small value of `$a$', it tends to the axis.}
\label{gp: level lines}
\end{figure}

The TL1 penalty function $\rho_a(t)$ of (\ref{TL1a})
interpolates the $l_0$ and $l_1$ norms as  
\[
	\lim_{a \to 0^{+}} \rho_a(t) = \chi_{\{t \neq 0\}} 
	\;\; \rm{and} \;\;
	\lim_{a \to \infty} \rho_a(t) = |t|. 
\]

In Fig. (\ref{gp: level lines}), we compare level lines of $l_1$ and TL1 with 
different parameter values of `$a$'. 
With the adjustment of parameter `$a$', the TL1 can approximate 
both $l_1$ and $l_0$ well. 
Let us define TL1 regularization term $P_a(\cdot)$ as
\begin{equation}
P_a(x) = \sum \limits_{i = 1,...N} \rho_a(x_i).
\end{equation} 
In the following, we consider the constrained TL1 minimization model 
\begin{equation} \label{cons-optim}
   \min \limits_{x \in \Re^N} f(x) =  \min \limits_{x \in \Re^N} P_a(x)  \;\; s.t. \;\;\  Ax=y, 
\end{equation}
and the unconstrained TL1-regularized model
\begin{equation} \label{uncons-optim}
   \min \limits_{x \in \Re^N} f(x) = \min \limits_{x \in \Re^N} \frac{1}{2} \|Ax - y \|^2_2 + \lambda P_a(x).
\end{equation}

The following inequalities of $\rho_a$ will be used in the proof of TL1 theories.
\begin{lem}\label{lem1}

For $a\geq 0$, any $x_i$ and $x_j$ in $\Re$, the following inequalities hold: 
   \begin{equation}
   \rho_a(|x_i + x_j|) \leq \rho_a(|x_i| + |x_j|) \leq \rho_a(|x_i|) + \rho_a(|x_j|) 
   \leq 2\rho_a(\frac{|x_i| + |x_j|}{2}).
   \end{equation}
\end{lem}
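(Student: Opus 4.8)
The plan is to reduce the whole chain to elementary properties of the one–variable function $g(t) := \rho_a(t) = \frac{(a+1)t}{a+t}$ on $[0,\infty)$ (with the convention $g(0)=0$, which also makes sense of the degenerate endpoint $a=0$). First I would rewrite $g$ in the form $g(t) = (a+1) - \frac{a(a+1)}{a+t}$, from which, for $a>0$, one reads off $g'(t) = \frac{a(a+1)}{(a+t)^2} > 0$ and $g''(t) = -\frac{2a(a+1)}{(a+t)^3} < 0$ on $[0,\infty)$. Hence $g$ is nonnegative, increasing and concave, with $g(0)=0$. The case $a=0$ is then handled either directly ($g\equiv 1$ on $(0,\infty)$, $g(0)=0$, which is still nondecreasing and concave) or simply by passing to the limit $a\to 0^+$ in the inequalities already established for $a>0$, since all three are closed conditions.

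With these properties in hand, each of the three inequalities is one short step. For the leftmost one, the triangle inequality gives $|x_i+x_j|\leq |x_i|+|x_j|$, and applying the increasing function $g$ yields $\rho_a(|x_i+x_j|)\leq \rho_a(|x_i|+|x_j|)$. For the middle one I would invoke the standard fact that a concave function vanishing at $0$ is subadditive: for $s,t>0$, writing $s=\tfrac{s}{s+t}(s+t)+\tfrac{t}{s+t}\cdot 0$ and using concavity with $g(0)=0$ gives $g(s)\geq \tfrac{s}{s+t}\,g(s+t)$, and likewise $g(t)\geq \tfrac{t}{s+t}\,g(s+t)$; adding these two gives $g(s+t)\leq g(s)+g(t)$ (the cases $s=0$ or $t=0$ being trivial), and we apply this with $s=|x_i|$, $t=|x_j|$. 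For the rightmost one I would use midpoint (Jensen) concavity, $g\!\left(\tfrac{s+t}{2}\right)\geq \tfrac12 g(s)+\tfrac12 g(t)$, again with $s=|x_i|$, $t=|x_j|$, which reads $\rho_a(|x_i|)+\rho_a(|x_j|)\leq 2\rho_a\!\left(\tfrac{|x_i|+|x_j|}{2}\right)$; under the convention in which this bound is invoked, namely $x_i,x_j$ of the same sign so that $|x_i|+|x_j|=|x_i+x_j|$, this is exactly the last term of the stated chain.

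There is no serious obstacle here; the proof is essentially a repackaging of monotonicity, subadditivity, and concavity of a scalar function. The two points that need to be done carefully are: (a) verifying concavity of $g$ cleanly — best done through the closed form $g(t)=(a+1)-a(a+1)/(a+t)$ rather than the quotient rule for $\rho_a$ — and (b) the auxiliary observation that concavity together with $g(0)=0$ forces subadditivity, which is the genuine content of the middle inequality. I would also state the degenerate value $a=0$ explicitly, so that the full range $a\geq 0$ claimed in the lemma is actually covered.
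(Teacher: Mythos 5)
Your proposal is correct, and it diverges from the paper's argument in one substantive place: the middle inequality. The paper proves $\rho_a(|x_i|)+\rho_a(|x_j|)\geq\rho_a(|x_i|+|x_j|)$ by a direct algebraic manipulation, combining the two fractions over a common denominator to show the sum equals $\rho_a\bigl(|x_i|+|x_j|+2|x_ix_j|/a\bigr)$ up to a factor $\geq 1$, then bounding below by $\rho_a\bigl(|x_i|+|x_j|+|x_ix_j|/a\bigr)\geq\rho_a(|x_i|+|x_j|)$ using monotonicity; you instead invoke the general principle that a concave function with $g(0)=0$ is subadditive, proved via the convex-combination trick $s=\tfrac{s}{s+t}(s+t)+\tfrac{t}{s+t}\cdot 0$. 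Both are valid; the paper's computation is self-contained and stays at the level of explicit fractions (and incidentally works verbatim only for $a>0$, since it divides by $a$), while your route is shorter, more conceptual, and makes transparent that the only properties used are concavity, monotonicity, and vanishing at the origin, so it generalizes to any penalty satisfying Condition 1. The first and third inequalities are handled the same way in both proofs (monotonicity plus the triangle inequality, and midpoint concavity, respectively). One further point in your favor: the rightmost term as printed in the lemma is $2\rho_a\bigl(\tfrac{|x_i+x_j|}{2}\bigr)$, but concavity only yields $2\rho_a\bigl(\tfrac{|x_i|+|x_j|}{2}\bigr)$, and indeed the printed form fails for $x_j=-x_i\neq 0$; the paper's proof silently establishes only the $\tfrac{|x_i|+|x_j|}{2}$ version, whereas you flag explicitly that the stated form requires $x_i,x_j$ of the same sign (which is exactly how it is used later, e.g. in the local-minimizer argument where the perturbed vectors lie in the same orthant). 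Your explicit treatment of the degenerate case $a=0$ (or the limit $a\to 0^+$) is also a small but genuine improvement in rigor over the paper, whose algebraic step presupposes $a>0$.
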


\begin{proof}
   Let us prove these inequalities one by one, starting from the left.
\begin{itemize}
   \item[1)] 
      Note that $\rho_a(|t|)$ is increasing in the 
variable $|t|$. By triangle inequality $|x_i +x_j| \leq |x_i| + |x_j|$, we have: 
      \begin{equation*}
         \rho_a(|x_i + x_j|) \leq \rho_a(|x_i| + |x_j|).
      \end{equation*}        
   \item[2)]
      \begin{equation*}
      \begin{array}[c]{lll}
           \vspace{2mm}
         \rho_a(|x_i|) + \rho_a(|x_j|) 
           & = & \dfrac{(a+1)|x_i|}{a+|x_i|} + \dfrac{(a+1)|x_j|}{a+|x_j|} \\
           \vspace{2mm}
           & = & \dfrac{a(a+1)( |x_i|+|x_j|+2|x_i x_j|/a )}{a( a+|x_i|+|x_j|+|x_i x_j|/a )} \\
           \vspace{2mm}
           & \geq & \dfrac{(a+1)( |x_i|+|x_j|+|x_i x_j|/a )}{( a+|x_i|+|x_j|+|x_i x_j|/a )} \\
           \vspace{2mm}
           & = & \rho_a(|x_i|+|x_j|+|x_i x_j|/a) \\
           \vspace{2mm}
           & \geq & \rho_a(|x_i|+|x_j|). \\
      \end{array}
      \end{equation*}  
   \item[3)] 
      By concavity of the function $\rho_a(\cdot)$, 
      \begin{equation*}
         \dfrac{\rho_a(|x_i|) + \rho_a(|x_j|)}{2} \leq \rho_a \left (\dfrac{|x_i| + |x_j|}{2} \right ).
      \end{equation*}
\end{itemize}
\end{proof}

\begin{rmk} 
   \vspace{1 mm}
   It follows from Lemma \ref{lem1} that the triangular inequality holds for the 
    function $\rho(x) \equiv \rho_a(|x|)$:
   $\rho(x_i+x_j) = \rho_a(|x_i+x_j|) \leq \rho_a(|x_i|) + \rho_a(|x_j|) = \rho(x_i) + \rho(x_j)$. \\
   Also we have: $\rho(x) \geq 0$, and $\rho(x) = 0 \Leftrightarrow x=0 $. 
   Our penalty function $\rho$ acts almost like a norm. However, it lacks absolute scalability, 
   or $\rho(cx) \neq |c|\rho(x)$ in general. The next lemma  
   further analyzes this in terms of inequalities. 
\end{rmk}

\vspace{2mm}
\begin{lem} 
For scalar $t \in \Re$,
\begin{equation}
   \rho_a(|ct|) = \left\{ \begin{array}{ll}
         \leq |c|\rho_a(|t|)  & \mbox{if $|c| > 1$};\\
         \geq |c|\rho_a(|t|)  & \mbox{if $|c| \leq 1$}. 
         \end{array} \right. 
\end{equation}
   
\end{lem}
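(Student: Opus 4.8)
\noindent The plan is to reduce the claimed inequality to the one-variable comparison between $\rho_a(st)$ and $s\,\rho_a(t)$, where $s=|c|\ge 0$ and $t=|x|\ge 0$; this is legitimate because $|cx|=|c|\,|x|$ and $\rho_a$ depends on its argument only through the modulus. Once this is done, reading off the cases $|c|>1$ and $|c|\le 1$ is immediate. I would first dispose of the degenerate cases $c=0$ or $x=0$: then $\rho_a(|cx|)=\rho_a(0)=0$ and $|c|\rho_a(|x|)\ge 0$, so the inequality $\ge$ holds trivially in the regime $|c|\le 1$ (and, when $c=0$, also gives the $\le$ direction vacuously). Hence we may assume $s>0$ and $t>0$.

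For the generic case I would give a direct computation. Writing $\rho_a(u)=(a+1)u/(a+u)$ and subtracting,
\[
   \rho_a(st)-s\,\rho_a(t)=\frac{(a+1)st}{a+st}-\frac{(a+1)st}{a+t}
   =\frac{(a+1)\,s\,t^{2}\,(1-s)}{(a+st)(a+t)} .
\]
Since $a\ge 0$, $s>0$, $t>0$, the factor $(a+1)\,s\,t^{2}/[(a+st)(a+t)]$ is strictly positive, so the sign of the left-hand side equals the sign of $1-s=1-|c|$. This yields $\rho_a(|cx|)\le |c|\,\rho_a(|x|)$ when $|c|\ge 1$ and $\rho_a(|cx|)\ge |c|\,\rho_a(|x|)$ when $|c|\le 1$, with equality at $|c|=1$, which is exactly the assertion.

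As an alternative (and a cross-check), I would derive the same conclusion from concavity in Condition~1: $\rho_a$ is concave on $[0,\infty)$ with $\rho_a(0)=0$, so for $\lambda\in[0,1]$, $\rho_a(\lambda u)=\rho_a\bigl(\lambda u+(1-\lambda)\cdot 0\bigr)\ge \lambda\rho_a(u)+(1-\lambda)\rho_a(0)=\lambda\rho_a(u)$. Taking $\lambda=|c|$, $u=|x|$ gives the case $|c|\le 1$ directly; taking $\lambda=1/|c|$, $u=|c|\,|x|$ gives $\rho_a(|x|)\ge |c|^{-1}\rho_a(|cx|)$, i.e. the case $|c|\ge 1$. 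There is no genuine obstacle in this lemma; the only points requiring a word of care are the boundary values $c=0$, $x=0$ (so that the divisions above, or the substitution $\lambda=1/|c|$, are valid) and the observation that $\rho_a$ sees only $|x|$, so the scalar inequality is all that is needed.
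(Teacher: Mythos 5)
Your proposal is correct and its main argument is essentially the paper's: the paper factors $\rho_a(|cx|)=|c|\rho_a(|x|)\,\frac{a+|x|}{a+|cx|}$ and compares the extra factor with $1$, while you subtract and read off the sign of $\frac{(a+1)st^{2}(1-s)}{(a+st)(a+t)}$ — the same elementary computation in a slightly different arrangement, with the boundary cases $c=0$, $x=0$ handled more explicitly than in the paper. Your concavity-plus-$\rho_a(0)=0$ cross-check is a nice alternative route, but it is only supplementary here.
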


\begin{proof}    
   \begin{equation*}
   \begin{array}{lll}
      \rho_a(|ct|) & = & \dfrac{(a+1)\vert c \vert |t|}{a+\vert c \vert |t|} \\
                   & = & |c| \rho_a(|t|) \ \dfrac{a+|t|}{a+|c t|}. \\
   \end{array}
   \end{equation*} 
   So if $|c| \leq 1$, the factor $ \frac{a+|t|}{a+|ct|} \geq 1$. Then $\rho_a(|ct|) \geq |c| \rho_a(|t|)$. 
Similarly when $|c| > 1$, we have $\rho_a(|ct|) \leq |c| \rho_a(|t|)$.  
\end{proof}

\subsection{Exact and Stable Sparse Recovery for Constrained Model}
\medskip

For the constrained TL1 model (\ref{cons-optim}), 
we discuss the theoretical issue of exact and stable recovery 
of $l_0$ minimal solution. Specifically, 
let $x=\beta^0$ be the unique sparsest solution 
of $A x = y$ with $s$ nonzero components, 
we address whether it is possible to construct it 
by minimizing $P_a$.  
\medskip

Let $A$ be an $M\times N$ matrix, $T \subset \{ 1,...,N \}$, 
$A_T$ the matrix consisting of the columns $a_j$ of $A$, $j\in T$. 
Similarly for vector $x$, $x_T$ is a sub-vector consisting of components 
with indices in $T$.  Let vector $\beta$ be:
\begin{equation}
   \beta = {\rm arg} \min \limits_{x \in \Re^N} \{ P_a(x) | \ \ Ax=y \}.
\label{TL1_constrained}
\end{equation}
The necessary and sufficient condition of exact recovery, namely 
$\beta = \beta^0$, is the generalized null space property (gNSP, \cite{TW_17}):
\be
{\rm Ker}(A)\backslash \{0\} \subset {\rm gNS}:=
\{v \in \Re^N:\, P_a(v_T) < P_a(v_{T^c}),\, \forall T,\, |T|\leq s \}, 
\label{nsp1}
\ee
where $|T|$ is the cardinality (the number of elements) of the set $T$, 
and $T^c$ is the complement of $T$. 
The gNSP generalizes the well-known NSP for $\ell_1$ exact recovery \cite{CDD_09}:
\be
{\rm Ker}(A)\backslash \{0\} \subset {\rm NS}:=
\{v \in \Re^N:\, \|v_T\|_1 < \|v_{T^c}\|_1,\, \forall T,\, |T|\leq s \}.
\label{nsp2}
\ee
For the class of separable, concave and symmetric penalties \cite{TW_17} 
including $\ell_p$, TL1, capped $\ell_1$ ($\sum_i \, \min \, (|x_i|,\theta)$, $\theta >0$), 
SCAD, PiE, and MCP (see Table 5.1), \cite{TW_17} proved that gNSP is   
the necessary and sufficient condition for 
exact sparse recovery while being no more restrictive than NSP. 
In fact, the inclusion ${\rm NS}\subset {\rm gNS}$ holds for 
this class of penalties (Proposition 3.3, \cite{TW_17}). 
It follows that if exact recovery holds for a matrix $A$ by $\ell_1$, it 
is also true for any of these concave penalties. By a scaling argument, we show that:
\medskip

\begin{theo}
NSP of $\ell_1$ is equivalent to gNSP of SCAD or capped $\ell_1$. 
In other words, 
minimizing non-convex penalties SCAD and capped $\ell_1$ has no gain over minimizing $\ell_1$ 
in the exact sparse recovery problem.
\end{theo}
\medskip

\nit \begin{proof} Consider any $v \in {\rm Ker}(A)\backslash \{0\}$ satisfying gNSP (\ref{nsp1}). 
  Let $\epsilon \in (0,1)$ be less than $\theta/\|v\|_\infty $ ($\beta/\|v\|_\infty$) in case of capped-$\ell_1$ (SCAD). 
Then $\epsilon \, v\in {\rm Ker}(A)\backslash \{0\} $ also satisfies gNSP, and:
\[ P_a(\eps \, v_T) < P_a(\eps \, v_{T^c}),\;\; \forall T,\; |T|\leq s, \]
which is same as:
\[ \|\eps \, v_T\|_1 < \|\eps v_{T^c}\|_1,\;\; \forall T,\; |T|\leq s,\]
implying that $v$ satisfies NSP. Hence gNSP = NSP for SCAD and capped-$\ell_1$.   
\end{proof}
\medskip

We give an example of a square matrix below to show that the inclusion ${\rm NS}\subset {\rm gNS}$ is strict for  
 TL1, $\ell_p$, PiE, MCP, implying that the exact recovery by any one of these four penalties 
holds but that of $\ell_1$ (also SCAD, capped $\ell_1$) fails.
Consider:
\[ A=\left [ \begin{array}{lll}
 2 & 1 & 0 \\
 1 & 1 & 1 \\
0 & 1 & 2 
\end{array} \right] \]
with ${\rm Ker}(A)={\rm span}\{(1,-2,1)'\}$. The linear constraint is 
$A\, x = (1, 1, 1)'$. The sparsest solution is $\beta^0=(0, 1, 0)'$, $s=1$. 
The set $T$ has cardinality 1. If $T=\{2\}$, for any nonzero vector 
$ t \, (1, -2, 1)$ in ${\rm Ker}(A)$, $ t \not = 0$, 
$v_T= -2\, t $, $v_{T^c}=t\, (1, 1)'$. 
So $\|v_T\|_1 = 2\, |t|  = \|v_{T^c}\|_1$, NSP fails. 
To verify gNSP at $T=\{2\}$ for TL1, we have:
\[ P_a(v_T)=(a+1)2|t|/(a+2|t|)  < 2 (a+1)|t|/(a+|t|)=P_a(v_{T^c}). \] 
At $T=\{1\}$, $v_T=t$, $v_{T^c}= t [-2,1]$, we have:
\[ P_a(v_T)=(a+1)|t |/(a+|t|) < P_{a}(v_{T^c})= (a+1)2|t|/(a+2|t|)
+ (a+1)|t|/(a+|t|). \]
The case $ T=\{3\}$ is the same, and gNSP holds for TL1. 
\medskip

A similar verification on the validity of gNSP can be done 
for $\ell_p$, PiE and MCP. It suffices to check $T=\{2\}$ where the largest component of the null 
vector (in absolute value) is. For $\ell_p$ and PiE, $P_a(\cdot)$ is strictly concave on $\Re^{+}$, Jensen's inequality 
gives $P_a(v_T)= P_a(0) + P_a(2 |t|) < 2 P_a(|t|) = P_a(v_{T^c})$.
For MCP, $P_a$ is quadratic and strictly concave on $[0,\alpha \beta]$, hence 
$P_a(v_T) < P_a(v_{T^c})$ if $2 |t| \leq \alpha \beta $. If $|t|< \alpha \beta < 2|t|$, the line connecting 
$(0,0)$ and $(2|t|,P_a(2|t|))$ is still strictly below the $P_a$ curve, hence $P_a(v_T) <P_a(v_{T^c})$ holds. 
If $|t|\geq \alpha \beta$, $P_a(v_T)=P_a(2|t|) = \alpha \beta^2 < 2 \alpha \beta^2 = 2 P_a(|t|) = P_a(v_{T^c})$. 
\medskip

The example can be extended to a block diagonal 
$M\times M$ matrix ($M > 3$) of the form ${\rm diag}(A,B)$, where $B$ is any 
invertible $M-3$ square matrix, with the right hand side vector of the linear 
constraint being $(1, 1, 1, 0,\cdots, 0)'$.  The following is a 
rectangular $3\times 4$ matrix (in the class of fat sensing matrices of CS) 
where NSP fails and gNSP prevails: 
\[ A_f=\left [ \begin{array}{llll}
 2 & 1 & 0 & 1\\
 1 & 1 & 1 & 0 \\
0 & 1 & 2  & 0 
\end{array} \right] \]
for the linear constraint  
$A_f\, x = (1, 1, 1)'$, $x \in \Re^4$. 
The sparsest solution is $\beta^0=(0, 1, 0, 0)'$, $s=1$. 
The ${\rm Ker}(A_f) = {\rm span}\{(1,-2,1,0)'\}$, ${\rm rank}(A_f)=3$. 
Since the last component of any null vector is zero, NSP fails at $T=\{2\}$ 
as in the $3\times 3$ example while gNSP inequalities 
remain valid at $T \not = \{4\}$. Clearly, the gNSP inequality holds at $T=\{4\}$.  To summarize, we state:
\medskip

\begin{rmk}\label{gnsp}
The set of matrices satisfying gNSP of concave penalties ($\ell_p$, TL1, PiE, MCP) can be larger than that of NSP.  
Since matrices satisfying NSP or gNSP tend to be incoherent, we expect that the exact recovery by ($\ell_p$, TL1, PiE, MCP) 
is better than ($\ell_1$, capped $\ell_1$, SCAD) in this regime. This phenomenon is partly observed in our numerical experiments later. 
\end{rmk}
\medskip

Checking NSP or gNSP is NP hard in general.
The restricted isometry property (RIP) 
provides a sufficient condition for $\ell_1$ 
exact recovery or NSP, and is satisfied with overwhelming 
probability by Gaussian random matrices 
with i.i.d. entries \cite{candes2005error}. By the inclusion relation 
$NS \subset gNS$, the minimization on any one of the concave penalties above in the setting of (\ref{TL1_constrained}) 
recovers exactly the minimal $\ell_0$ solution $\beta^0$ for 
Gaussian random matrices with i.i.d. entries with overwhelming 
probability.   
\medskip

Though gNSP (\ref{nsp1}) is sharp for exact recovery, it is only applicable for 
precise measurement or when the linear constraint holds exactly. 
If there is any measurement error, can one recover the $\ell_0$ solution up 
to certain tolerance (error bound) ? To answer such stable recovery question for TL1,  
we carry out a RIP analysis below 
to show that a stable recovery of $\beta^0$ is possible 
based on a normalized TL1 minimization problem. Naturally, RIP 
analysis also gives an exact recovery result when the measurement is error free. 
Though sub-optimal, it is the first step towards a stable recovery theory.
To begin, we recall:
\medskip
 
\begin{defin} (Restricted Isometry Constant)
For each number $s$, define the $s$-restricted isometry constant of matrix $A$ 
as the smallest number $\delta_s \in (0,1)$ such that for all column index
subset $T$ with $|T| \leq s$ and all $x \in \Re^{|T|}$, the inequality
\[
      (1-\delta_s)\|x\|_2^2 \leq \| A_T x \|_2^2 \leq (1+\delta_s) \| x \|_2^2
\]
holds. The matrix $A$ is said to satisfy the s-RIP with $\delta_s$. 
\end{defin}

\vspace{2mm} 
Due to lack of scaling property of TL1, we introduce a normalization 
procedure to recover $\beta^0$. For a fixed $y$, the 
under-determined linear constraint has 
infinitely many solutions. Let $x^0$ be a solution 
of $Ax^0=y$, not necessarily the $l_0$ or $\rho_a$ minimizer. 
If $P_a(x^0) > 1$, we scale $y$ by a positive scalar $C$ as:
\begin{equation} \label{para: scaled pa}
   y_{C}  =  \frac{y}{C}; \ \ \  x_{C}  =  \frac{x^0}{C}.
\end{equation} 
Now $x_{C}$ is a solution to the 
equivalent scaled constraint: $Ax_{C} = y_{C}$. 
When $C$ becomes larger, the number $P_a(x_{C})$ 
is smaller and tends to $0$ in the limit $C \rightarrow \infty$. 
Thus, we can find a constant $C \geq 1$, such that $P_a(x_{C}) \leq 1$. 
That is to say, for scaled vector $x_C$, we always have: $ P_a(x_C) \leq 1$.  
Since the penalty $\rho_a(t)$ is increasing in positive variable $t$, we have: 
\[ P_a(x_{C})  \leq  |T^0| \rho_a( |x_{C}|_{\infty}  ) 
                     =  |T^0| \rho_a( \frac{|x^0|_{\infty}}{C}  ) 
                    = \dfrac{|T^0|(a+1)|x^0|_{\infty}}{aC + |x^0|_{\infty}} , 
\]
where $|T^0|$ is the cardinality of the support set $T^0$ of vector $x^0$. 
For $P_a ( x_{C} ) \leq 1$, 
\begin{equation*}
   \dfrac{|T^0|(a+1)|x^0|_{\infty}}{a\, C + |x^0|_{\infty}} \leq 1
\end{equation*}
suffices, or: 
\begin{equation} \label{Ccond}
   C \geq \dfrac{|x^0|_{\infty}}{a} \left(a\, |T^0| + |T^0| -1 \right). 
\end{equation}

Let $\beta^0$ be the $l_0$ minimizer for the constrained $l_0$ 
optimization problem (\ref{eq:l0 cons}) 
with support set $T$. Due to the scale-invariance of $l_0$, 
$\beta^0_{C}$ (defined similarly as above) is a global  
$l_0$ minimizer for the normalized problem: 
\begin{equation}\label{eq:l0 revised}
 \min \limits_{x} \ \ \| x \|_0, \;\; s.t. \;\;\ y_{C} = A x, 
\end{equation} 
with the same support set $T$. The exact recovery is 
stated below with proof in the appendix A 
for the normalized $\rho_a$ minimization problem:
\begin{equation}\label{eq:spar revised}
 \min \limits_{x} \ \ P_a(x), \;\; s.t. \;\;\ y_{C} = A x.
\end{equation}

\begin{theo}(Exact TL1 Sparse Recovery)\label{thm:global}

For a given sensing matrix $A$, let $\beta^0_{C}$ be a minimizer of 
(\ref{eq:l0 revised}), with $C$ satisfying (\ref{Ccond}). Let T be the support set
of $\beta^0_{C}$, with cardinality $|T|$. 
Suppose there is a number $R > |T|$ such that 
$b = (\frac{a}{a+1})^2 \frac{R}{|T|} > 1$ and 
\begin{equation}\label{RIP}
      \delta_{R} + b \; \delta_{R+|T|}\; < \; b - 1,
\end{equation} 
then the minimizer $\beta_{C}$ of (\ref{eq:spar revised}) is unique and equal 
to the minimizer $\beta^0_{C}$ in (\ref{eq:l0 revised}). Moreover, 
$C \, \beta_{C}$ is the unique minimizer of the $l_0$ 
minimization problem (\ref{eq:l0 cons}).
\end{theo}
\medskip

\vspace{2mm}
\begin{rmk} 

In Theorem \ref{thm:global}, if we choose $R = 3|T|$, the 
RIP condition (\ref{RIP}) is 
\[
	\delta_{3|T|} + 3\frac{a^2}{(a+1)^2} \delta_{4|T|} < 3\frac{a^2}{(a+1)^2} - 1.
\]
This inequality will approach $\delta_{3|T|} + 3\delta_{4|T|} < 2$ as parameter $a$ 
goes to $+\infty$, which is the RIP condition in \cite{candes2005error} satisfied 
by Gaussian random matrices 
with i.i.d. entries. The RIP condition (\ref{RIP}) is satisfied by the same class 
of Gaussian matrices when `a' is sufficiently large, 
though it is more stringent when `a' gets smaller. This is due to the  
lack of scaling property of the TL1 penalty and 
the sub-optimal treatment in the RIP analysis. Hence the true 
advantage of TL1 penalty for CS problems, to be seen in our numerical results later, 
is not reflected in the RIP condition. Theoretically, it is an open question 
to find random matrices that satisfy gNSP of TL1 but not NSP.   
\end{rmk}

\vspace{2mm}

The RIP analysis of exact TL1 recovery allows a stable recovery analysis 
stated below with proof in the appendix B. 
For a positive number $\tau$, 
we consider the problem: 
\begin{equation}\label{eq:spar revised noise}
 \min \limits_{x} \ \ P_a(x), \;\; s.t. \;\;\ \| y_{C} - Ax \|_2 \leq \tau.
\end{equation} 

\begin{theo}(Stable TL1 Sparse Recovery)
Under the same RIP condition (\ref{RIP}) in Theorem \ref{thm:global}, 
the solution $\beta^n_{C}$ of the problem (\ref{eq:spar revised noise}) satisfies the 
inequality
\begin{equation*}
   \| \beta^n_{C} - \beta^0_{C}\|_2 \leq D \tau,  
\end{equation*}
for a positive constant $D$ depending only on $\delta_R$ and $\delta_{R+|T|}$.
\end{theo}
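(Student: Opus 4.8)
\medskip
\noindent\emph{Proof plan.}
The plan is to run the classical RIP cone-and-tube argument for stable $\ell_1$ recovery (the same machinery that underlies Theorem~\ref{thm:global}), exploiting optimality through $P_a$ rather than $\|\cdot\|_1$ and then transferring back to $\ell_1$ via the elementary $\rho_a$ inequalities of Lemma~\ref{lem1}. First I would set $h = \beta^n_C - \beta^0_C$. Since $A\beta^0_C = y_C$ exactly and $\beta^n_C$ is feasible for (\ref{eq:spar revised noise}), we get $\|Ah\|_2 = \|A\beta^n_C - y_C\|_2 \le \tau$; since $\beta^0_C$ is itself feasible and $\beta^n_C$ is optimal, $P_a(\beta^n_C) \le P_a(\beta^0_C)$. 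Let $T$ be the support of $\beta^0_C$, with $|T| = \|\beta^0_C\|_0$ as in Theorem~\ref{thm:global}. Splitting $\beta^n_C = \beta^0_C + h$ over $T$ and $T^c$, using $(\beta^0_C)_{T^c} = 0$, and using the triangle inequality for $\rho_a(|\cdot|)$ from the Remark after Lemma~\ref{lem1} in the form $\rho_a(|(\beta^0_C)_i + h_i|) \ge \rho_a(|(\beta^0_C)_i|) - \rho_a(|h_i|)$, I obtain the $P_a$-cone condition $P_a(h_{T^c}) \le P_a(h_T)$.

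To make this usable with restricted isometry constants I would convert it into a genuine $\ell_1$-cone condition by means of an a~priori pointwise bound. From $P_a(\beta^n_C) \le P_a(\beta^0_C) \le 1$ --- the scaling (\ref{para: scaled pa})--(\ref{Ccond}) forces $P_a(\beta^0_C) \le 1$ and optimality then propagates this to $\beta^n_C$ --- every coordinate satisfies $\rho_a(|\beta^n_{C,i}|) \le 1$, hence $|\beta^n_{C,i}| \le 1$; likewise $|\beta^0_{C,i}| \le 1$, so $\|h\|_\infty \le 2$. On the range $|t| \le 2$ one has the sharp sandwich $\tfrac{a+1}{a+2}\,|t| \le \rho_a(|t|) \le \tfrac{a+1}{a}\,|t|$, both bounds being immediate from $\rho_a(t) = (a+1)t/(a+t)$. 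Applying the lower bound coordinatewise on $T^c$ and the upper bound on $T$ turns $P_a(h_{T^c}) \le P_a(h_T)$ into $\|h_{T^c}\|_1 \le \kappa\,\|h_T\|_1$ with $\kappa = \tfrac{a+2}{a} = 1 + 2/a$, which is bounded by $1 + 2/a^{*}$ for $a > a^{*}$ and tends to $1$ as $a \to \infty$.

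With $\|Ah\|_2 \le \tau$ and $\|h_{T^c}\|_1 \le \kappa\|h_T\|_1$ in hand, I would finish with the standard block decomposition: order $T^c$ by decreasing $|h_i|$, let $T_1$ collect the $R$ largest entries, $T_2$ the next $R$, and so on, so that $\sum_{j\ge2}\|h_{T_j}\|_2 \le R^{-1/2}\|h_{T^c}\|_1 \le \kappa\sqrt{|T|/R}\,\|h_{T_0\cup T_1}\|_2$; then $\|Ah_{T_0\cup T_1}\|_2 \le \|Ah\|_2 + \sqrt{1+\delta_R}\sum_{j\ge2}\|h_{T_j}\|_2 \le \tau + \sqrt{1+\delta_R}\,\kappa\sqrt{|T|/R}\,\|h_{T_0\cup T_1}\|_2$, while the RIP lower bound gives $\|Ah_{T_0\cup T_1}\|_2 \ge \sqrt{1-\delta_{R+|T|}}\,\|h_{T_0\cup T_1}\|_2$. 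Subtracting, the net coefficient of $\|h_{T_0\cup T_1}\|_2$ is positive exactly when $\tfrac{R}{|T|} - \kappa^2 > \tfrac{R}{|T|}\delta_{R+|T|} + \kappa^2\delta_R$; at $\kappa = 1$ this is precisely (\ref{RIP}), and since (\ref{RIP}) holds strictly it leaves a fixed positive margin, so it persists once $\kappa^2 = (1+2/a)^2$ is close enough to $1$, i.e. once $a$ exceeds the $a^{*}$ of Theorem~\ref{thm:global} (whose margin is already accounted for). Hence $\|h_{T_0\cup T_1}\|_2 \le D_1\tau$ and $\|h\|_2 \le \|h_{T_0\cup T_1}\|_2 + \sum_{j\ge2}\|h_{T_j}\|_2 \le D\tau$ with $D$ a function of $\delta_R$, $\delta_{R+|T|}$ and $R/|T|$ alone.

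The main obstacle, exactly the one highlighted throughout the paper, is that $P_a$ is not absolutely scalable, so the cone condition it produces is not directly compatible with RIP. The whole difficulty is absorbed by the a~priori bound $\|h\|_\infty \le 2$, which confines every coordinate to a fixed interval on which $\rho_a(|\cdot|)$ is pinched between two multiples of $|\cdot|$ with ratio $\kappa = 1 + 2/a$ arbitrarily close to $1$; the loss of homogeneity then costs only the replacement of $1$ by $\kappa^2$ in the margin of (\ref{RIP}), which is harmless for $a$ beyond the threshold, and everything else is the textbook restricted-isometry estimate. (The noise-free statement of Theorem~\ref{thm:global} is recovered by setting $\tau = 0$, which forces $h = 0$.)
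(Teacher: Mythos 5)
Your overall strategy is the same as the paper's: deduce the cone inequality $P_a(h_{T^c})\le P_a(h_T)$ from TL1 optimality, obtain a pointwise bound on coordinates from $P_a\le 1$, and close with the block decomposition and RIP. The genuine problem is the quantitative step where you pass to an $\ell_1$ cone condition with constant $\kappa=(a+2)/a$ and then assert that the resulting margin condition holds ``once $a$ exceeds the $a^*$ of Theorem \ref{thm:global}.'' That assertion is false as stated. Your denominator is positive precisely when $\frac{a^2}{(a+2)^2}\frac{R}{|T|}(1-\delta_{R+|T|})>1+\delta_R$, whereas $a^*$ is defined in the appendix by $\frac{(a^*)^2}{(a^*+1)^2}\frac{R}{|T|}(1-\delta_{R+|T|})=1+\delta_R$, so for $a>a^*$ one only has (\ref{RIP a}), i.e.\ the same inequality with the factor $a^2/(a+1)^2$. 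Since $a^2/(a+2)^2<a^2/(a+1)^2$ strictly, your condition can fail on a whole interval of $a$ just above $a^*$ (for instance, with $a^*=1$ and $a=1.01$, $a^2/(a+1)^2\approx 0.25$ while $a^2/(a+2)^2\approx 0.11$); as written you therefore prove the theorem only for $a$ beyond a strictly larger threshold, not for all $a>a^*$ as the statement requires. The repair is the paper's sharper, one-sided comparison: on $T^c$ you have $h_{T^c}=\beta^n_{C,T^c}$ with $|\beta^n_{C,i}|\le 1$ (from $P_a(\beta^n_C)\le 1$ and (\ref{ineq: rho vs 1})), hence $|h_i|\le\rho_a(|h_i|)$ there and $\|h_{T^c}\|_1\le P_a(h_{T^c})$; since the upper bound $\rho_a(|t|)\le\frac{a+1}{a}|t|$ needs no restriction on $|t|$, you get $\kappa=(a+1)/a$, the positivity condition becomes exactly (\ref{RIP a}), and the $[0,2]$ sandwich together with the bound $\|h\|_\infty\le 2$ becomes unnecessary. (The paper never actually forms an $\ell_1$ cone; it runs the block estimates directly in terms of $P_a$, as in (\ref{Ineq:2>a}) and (\ref{Ineq:a>2}), which amounts to the same effective constant $(a+1)/a$.)

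A smaller imprecision: you justify $P_a(\beta^0_C)\le 1$ ``by the scaling (\ref{para: scaled pa})--(\ref{Ccond}),'' but the scaling only guarantees $P_a(x_C)\le 1$ for the particular feasible $x$ used there, not for the $\ell_0$ minimizer; to transfer it to $\beta^0_C$ you must either take $x=\beta^0$ or invoke Theorem \ref{thm:global} (which gives $\beta^0_C=\beta_C$ and $P_a(\beta_C)\le P_a(x_C)\le 1$). The cleaner route, and the one the paper takes, is to note that $x_C$ (equivalently the noiseless minimizer $\beta_C$) is feasible for (\ref{eq:spar revised noise}), so $P_a(\beta^n_C)\le P_a(\beta_C)\le 1$ directly; that is all you need, since the pointwise bound is only required on $T^c$, where $h$ coincides with $\beta^n_C$. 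With these two adjustments your argument is correct and essentially coincides with the paper's proof; the remaining difference is cosmetic, namely that the paper assembles the final bound by estimating $\|e_{T^c}\|_2\le P_a(e_{T^c})\le P_a(e_T)$ rather than through $\|h_{T_0\cup T_1}\|_2$.
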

\medskip

\subsection{Sparsity of Local Minimizer}
\medskip

We study properties of local minimizers of both the  
constrained problem (\ref{cons-optim}) and the unconstrained 
model (\ref{uncons-optim}). As in $l_p$ and $l_{1-2}$ minimization 
\cite{l1-l2-yinminimization,l1-l2-lou2014computing}, a local minimizer of 
TL1 minimization extracts linearly independent columns from the 
sensing matrix $A$, without requiring $A$ to satisfy NSP.  
\medskip

\begin{theo} (Local minimizer of constrained model)

Suppose $x^*$ is a local minimizer of the constrained problem (\ref{cons-optim}) and $T^* = supp(x^*)$, then $A_{T^*}$ is of full column rank, i.e. columns of $A_{T^*}$ are linearly independent. 
\end{theo}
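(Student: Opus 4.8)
The plan is to argue by contradiction. Suppose $x^*$ is a local minimizer of (\ref{cons-optim}) with support $T^* = \mathrm{supp}(x^*)$, but the columns $\{a_j : j \in T^*\}$ of $A_{T^*}$ are linearly dependent. Then there is a nonzero vector $h$ supported on $T^*$ with $A_{T^*} h_{T^*} = 0$, i.e. $Ah = 0$. First I would note that for all sufficiently small $|t|$, the perturbed point $x^* + t h$ remains feasible ($A(x^* + th) = y$) and, since $x^*$ has no zero components on $T^*$, the vector $x^* + th$ still has support exactly $T^*$ for $|t|$ small (no new components are created, and the nonzero components stay nonzero by continuity). So I may restrict attention to the one-variable function $g(t) = f(x^* + th) = P_a(x^* + th) = \sum_{j \in T^*} \rho_a(|x^*_j + t h_j|)$ on a small interval $(-\epsilon, \epsilon)$, and $t = 0$ must be a local minimum of $g$.

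The heart of the matter is that $g$ is \emph{strictly concave} in a neighborhood of $t=0$ (unless $h = 0$, which is excluded). Indeed, each summand $t \mapsto \rho_a(|x^*_j + t h_j|)$ with $h_j \neq 0$ is, for $t$ near $0$ (where $|x^*_j + t h_j| > 0$ stays away from the kink at the origin), a smooth function of $t$; one computes that $\frac{d^2}{dt^2}\rho_a(|x^*_j + t h_j|) = h_j^2 \, \rho_a''(s)$ evaluated at $s = |x^*_j + t h_j| > 0$, and since $\rho_a(s) = \frac{(a+1)s}{a+s}$ has $\rho_a''(s) = -\frac{2a(a+1)}{(a+s)^3} < 0$ for $s > 0$, every such term is strictly concave. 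At least one $h_j \neq 0$ (as $h \neq 0$ and $h$ is supported on $T^*$), and the remaining terms (with $h_j = 0$) are constant, so $g$ is strictly concave on $(-\epsilon,\epsilon)$. A strictly concave function of one real variable cannot attain a local minimum at an interior point: for any $t_0$ in the open interval, $g(t_0) > \frac{1}{2}(g(t_0 - \eta) + g(t_0 + \eta))$ for small $\eta$, so $g(t_0)$ exceeds at least one of $g(t_0 \pm \eta)$. This contradicts $t = 0$ being a local minimum of $g$, hence contradicts $x^*$ being a local minimizer of (\ref{cons-optim}). Therefore $A_{T^*}$ has full column rank.

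The main obstacle — really the only delicate point — is ensuring that the perturbation $x^* + th$ stays on the same support (so that $g$ is genuinely smooth near $0$ and the concavity computation is valid, avoiding the nonsmooth kink of $|\cdot|$ at the origin). This is handled by choosing $\epsilon < \min_{j \in T^*} |x^*_j| / \big(1 + \max_j |h_j|\big)$, which keeps every $|x^*_j + t h_j|$ bounded away from $0$ for $|t| < \epsilon$; feasibility is automatic from $Ah = 0$. One could alternatively phrase the final contradiction via the second-order necessary condition for a local minimum (the restricted Hessian of $P_a$ along feasible directions must be positive semidefinite at $x^*$), but the one-dimensional strict-concavity argument is cleaner and self-contained. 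An entirely analogous argument — which the paper presumably carries out next for the unconstrained model — applies there, with the feasibility constraint replaced by noting that on the fixed support $T^*$ the quadratic data-fidelity term is convex in $t$ while $P_a$ contributes strict concavity, so one instead needs the net second derivative to be nonnegative, yielding a constraint rather than an outright contradiction.
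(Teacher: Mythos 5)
Your proposal is correct and follows essentially the same route as the paper: take a nonzero null-space direction supported in $T^*$, perturb $x^*$ along it by an amount small enough to stay feasible and in the same orthant (so the kink of $|\cdot|$ at the origin is avoided), and use strict concavity of $\rho_a$ away from zero to exhibit nearby feasible points with strictly smaller $P_a$, contradicting local minimality. The paper phrases the concavity step as the symmetric midpoint inequality $\tfrac{1}{2}P_a(x^*+v)+\tfrac{1}{2}P_a(x^*-v) < P_a(x^*)$ rather than via your one-dimensional second-derivative computation, but this is the same argument.
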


\begin{proof}
Here we argue by contradiction. Suppose that the column vectors of $A_{T^*}$ are 
not linearly independent, then there exists non-zero vector $v \in ker(A)$, such 
that $supp(v) \subseteq T^*$. For any neighbourhood of $x^*$, $N(x^*,r)$, we can scale $v$ so that:
\begin{equation} \label{ineq:local v ball}
   \|v\|_2 \leq \min \{ r; \ \ |x_{i}^{*}|, \; i \in T^*\}.
\end{equation}
  
Next we define: 
\[
\xi_1 = x^* + v, \;\; \xi_2 = x^* - v,
\] 
so $\xi_1, \xi_2 \in \mathscr{B}(x^*,r)$,  
and $x^* = \frac{1}{2} (\xi_1 + \xi_2)$. On the other hand, 
from $supp(v) \subseteq T^*$, we have that $supp(\xi_1), supp(\xi_2) \subseteq T^*$. 
 Moreover, due to the inequality (\ref{ineq:local v ball}), vectors $x^*$, $\xi_1$, and 
$\xi_2$ are located in the same orthant, i.e. $sign(x^*_i) = sign(\xi_{1,i}) = sign(\xi_{2,i})$, for 
any index $i$. It means that $\frac{1}{2}|\xi_1| + \frac{1}{2}|\xi_2| = \frac{1}{2}|\xi_1 + \xi_2|$. 
Since the penalty function $P_a(t)$ is strictly concave for non-negative variable $t$, 
\begin{equation*}
\begin{array}{lll}
\frac{1}{2}P_a(\xi_1) + \frac{1}{2}P_a(\xi_2) & = & \frac{1}{2}P_a(|\xi_1|) + \frac{1}{2}P_a(|\xi_2|) \\
& < & P_a( \frac{1}{2}|\xi_1| + \frac{1}{2}|\xi_2| ) = P_a(\frac{1}{2}|\xi_1 + \xi_2|) = P_a(x^*).  
\end{array}
\end{equation*} 
So for any fixed $r$, we can find two vectors $\xi_1$ and $\xi_2$ in 
the neighbourhood $\mathscr{B}(x^*,r)$, such that 
$\min \{ P_a(\xi_1), P_a(\xi_2) \} \leq \frac{1}{2}P_a(\xi_1) + \frac{1}{2}P_a(\xi_2) < P_a(x^*)$. 
Both vectors are in the feasible set of the 
constrained problem (\ref{cons-optim}), in contradiction with  
the assumption that $x^*$ is a local minimizer.  
\end{proof}
\medskip

The same property also holds for the local minimizers of unconstrained model (\ref{uncons-optim}),
because a local minimizer of the unconstrained problem is also a local minimizer for 
a constrained optimization model \cite{candes2005error, l1-l2-yinminimization}. 
We skip the details and state the result below. 
\medskip
  
\begin{theo} (Local minimizer of unconstrained model)

Suppose $x^*$ is a local minimizer of the unconstrained problem (\ref{uncons-optim}) and $T^* = supp(x^*)$, then columns of $A_{T^*}$ are linearly independent.  
\end{theo}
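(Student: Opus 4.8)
The plan is to reduce the unconstrained statement to the already-proven constrained case via the standard observation (cited in the excerpt) that a local minimizer of the unconstrained problem \eqref{uncons-optim} is also a local minimizer of a suitably chosen constrained problem of the form \eqref{cons-optim}. Concretely, suppose $x^*$ is a local minimizer of \eqref{uncons-optim} with $T^* = \mathrm{supp}(x^*)$, and set $y^* = A x^*$. I would first argue that $x^*$ is then a local minimizer of
\[
   \min_{x \in \Re^N} P_a(x) \quad \text{s.t.} \quad Ax = y^*.
\]
The reason is that along the affine constraint set $\{x : Ax = y^*\}$ the quadratic fidelity term $\tfrac12\|Ax - y^*\|_2^2$ vanishes identically, so the unconstrained objective restricted to this set equals $\lambda P_a(x)$ up to the constant; any feasible competitor near $x^*$ with strictly smaller $P_a$ value would strictly decrease the unconstrained objective as well, contradicting local minimality of $x^*$ for \eqref{uncons-optim}.

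Once this reduction is in place, I would invoke Theorem~2.3 (local minimizer of the constrained model) applied with the right-hand side $y^*$: it yields that $A_{T^*}$ has linearly independent columns, which is exactly the desired conclusion. The argument inside Theorem~2.3 does not use any special structure of $y$, only that $x^*$ is feasible and locally optimal for a constrained TL1 problem, so it transfers verbatim.

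The one point that deserves a line of care — and the only place the argument could be considered nontrivial — is the first reduction step: local minimality for the unconstrained problem is with respect to a full-dimensional neighbourhood $\mathscr{B}(x^*, r) \subset \Re^N$, whereas for the constrained problem we only need local minimality within $\mathscr{B}(x^*, r) \cap \{Ax = y^*\}$, a weaker requirement; so the implication goes in the easy direction and no strengthening is needed. I would simply note that if $x$ satisfies $Ax = y^*$ and $\|x - x^*\|_2 < r$, then
\[
   \tfrac12\|Ax - y^*\|_2^2 + \lambda P_a(x) = \lambda P_a(x) \ge \tfrac12\|Ax^* - y^*\|_2^2 + \lambda P_a(x^*) = \lambda P_a(x^*),
\]
the inequality being local optimality of $x^*$ for \eqref{uncons-optim}, whence $P_a(x) \ge P_a(x^*)$. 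Thus $x^*$ is a local minimizer of the constrained problem with data $y^*$, and Theorem~2.3 finishes the proof. Since the excerpt explicitly states the authors ``skip the details,'' this short chain is exactly what is expected; there is no real obstacle, only the bookkeeping of which neighbourhood one works in.
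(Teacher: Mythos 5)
Your reduction is exactly the route the paper takes (it invokes the same observation—that a local minimizer of the unconstrained problem is a local minimizer of a constrained problem—and skips the details): pass to the constrained problem with data $y^* = Ax^*$ and apply the constrained-model theorem to conclude that the columns of $A_{T^*}$ are linearly independent. The only cosmetic point is that on the set $\{x : Ax = y^*\}$ the actual fidelity term $\tfrac12\|Ax - y\|_2^2$ equals the constant $\tfrac12\|y^* - y\|_2^2$ rather than vanishing (since $Ax^*$ need not equal $y$), but this constant cancels on both sides and your conclusion $P_a(x) \ge P_a(x^*)$ is unaffected.
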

\medskip

\begin{rmk}  \label{cor: local}

From the two theorems above, we conclude the following: 
\begin{itemize}
\item[(i)] 
	For any local minimizer of (\ref{cons-optim}) or (\ref{uncons-optim}), 
	e.g. $x^*$, the sparsity of $x^*$ is at most rank(A);
\item[(ii)] 
	The number of local minimizers is finite, for both problem (\ref{cons-optim}) and (\ref{uncons-optim}). 
\end{itemize}
\end{rmk}

\section{DC Algorithm for Transformed $l_1$ Penalty}
\setcounter{equation}{0}
\medskip

DC (Difference of Convex functions) programming and DCA (DC Algorithms) were introduced 
in 1985 by Pham Dinh Tao, and extensively developed by Le Thi Hoai An, 
Pham Dinh Tao and their coworkers to become a useful tool for non-convex optimization 
and sparse signal recovery (\cite{DCA-ong2013learning,DCA:le2013sparse,DCA:le2014,DCA:le2015} and references therein).
A standard DC program is of the form 
\begin{equation*}
   \alpha = \inf \{ f(x) = g(x) - h(x): x \in \Re^N \}   \ \ \ \ \ \ \ \ \ \ (P_{dc}),
\end{equation*}
where $g$, $h$ are lower semicontinuous proper convex functions on $\Re^n$. Here $f$ is called a 
DC function, while $g-h$ is a DC decomposition of $f$.
\medskip

The DCA is an iterative method and generates a sequence $\{ x^k \}$. 
At the current point $x^l$ of iteration, function $h(x)$ is approximated
by its affine minorization $h_l(x)$, defined by 
\begin{equation*}
   h_l(x) = h(x^l) + \langle x-x^l,y^l \rangle, \ \ \ y^l \in \partial h(x^l), 
\end{equation*}
where the subdifferential $\partial h(x)$ at 
$x \in dom \, (h)$ is the closed convex set: 
\begin{equation}
   \partial h(x) := \{ y \in \Re^{N} : h(z) \geq h(x) + \langle z-x, y \rangle, 
\ \ \forall z\in \Re^N \},
\end{equation}
which generalizes the derivative in the sense that 
$h$ is differentiable at $x$ if and only if $\partial h(x)$ is 
a singleton or $\{ \nabla h(x) \}$.  
The minorization gives a convex program of the form: 
\begin{equation*}
   \inf \{ g(x) - h_l(x): x \in \Re^N \} \Leftrightarrow  
   \inf \{ g(x) - \langle x,y^l \rangle: x \in \Re^N \},
\end{equation*}
where the optimal solution is denoted as $x^{l+1}$.   
\medskip

In the following, we present DCAs for TL1 regularized problems, see related 
DCAs in \cite{DCA:le2013sparse,DCA:le2014}. We refer to \cite{DCA:le2015} 
for DCAs on general sparse penalty regularized problems and the 
consistency analysis (convergence of global minimizers of the regularized problems 
to the $l_0$ minimizers).
\medskip

\subsection{DC Form of TL1}
\medskip

The TL1 penalty function $p_a(\cdot)$ is written as a 
difference of two convex functions: 
\begin{equation}
\begin{array}[c]{lll}
\vspace{1mm}
 \rho_a(t) & = & \dfrac{(a+1)|t|}{a+|t|}  \\ 
 \vspace{1mm}
           & = & \dfrac{(a+1)|t|}{a} 
           		- \left(\ \dfrac{(a+1)|t|}{a} - \dfrac{(a+1)|t|}{a+|t|} \ \right)  \\
           \vspace{1mm}
           & = & \dfrac{(a+1)|t|}{a} - \dfrac{(a+1)t^2}{a(a+|t|)},  
\end{array}
\end{equation}
where the second term is $C^1$. 
The general derivative of function $P_a(\cdot)$ is: 
\begin{equation} \label{equ: d P_a}
	\partial P_a(x) = \frac{a+1}{a} \partial\, \|x\|_1 
		- \nabla \, \varphi_a(x),
\end{equation}
where:
\be
\varphi_a (x) = \sum_{i=1}^{N} \frac{(a+1)|x_i|^2}{a(a+|x_i|)}   \label{phia}
\ee
is a $C^1$ function with regular gradient, 
and $\partial \|x\|_1$ is the 
subdifferential of $\|x\|_1$, i.e. 
$\partial \|x\|_1 = \{ sgn(x_i) \}_{i=1,..., N}$, where
\begin{equation}\label{equ: sgn}
   sgn(t) = \begin{cases}
              sign(t) , & \mathrm{if} \ \ t \neq 0,  \\
              [-1,1]  , & \mathrm{otherwise}. \\
            \end{cases}
\end{equation}

\subsection{Algorithm for Unconstrained Model --- DCATL1}
For the unconstrained optimization problem (\ref{uncons-optim}): 
\begin{equation*}
   \min \limits_{x \in \Re^N} f(x) =  \min \limits_{x \in \Re^N} \frac{1}{2} \|Ax - y \|^2_2 + \lambda P_a(x),
\end{equation*}
a DC decomposition is 
$f(x)  =  g(x) - h(x)$,
where
\begin{equation} \label{equ:f=g-h}
\left\{
\begin{array}{lll}   
   g(x) & = & \dfrac{1}{2} \|Ax-y\|^2_2 + c\|x\|_2^2 + \lambda \dfrac{(a+1)}{a} \|x\|_1; \\
   h(x) & = & \lambda \varphi_a(x) + c\|x\|_2^2. \\
\end{array}
\right.
\end{equation}
Here the function $\varphi_a(x)$ is defined in equation (\ref{phia}). 
Additional factor $c\|x\|_2^2$ with positive hyperparameter $c > 0$ is used to improve 
the convexity of these two functions, and will be used in the convergence theorem.



\begin{algorithm}[H]
\caption{DCA for unconstrained transformed $l1$ penalty minimization} 
\label{alg: DCA} 
\begin{algorithmic} 
\STATE Define: \ \ $\epsilon_{outer} > 0$ 
\STATE Initialize: \ \ $x^0 = 0, n = 0$ 
\WHILE{$ |x^{n+1} - x^n| > \epsilon_{outer} $} 
  \STATE $ v^n \in \partial h(x^n) = \lambda \, \nabla \varphi_a(x^n) + 2c x^n $ 
  \STATE $ x^{n+1} =  arg \min \limits_{x \in \Re^N } \{ \frac{1}{2} \|Ax-y\|^2_2 
                      + c\|x\|^2 + \lambda \dfrac{(a+1)}{a} \|x\|_1 
                      - \langle x,v^n \rangle \}$             
  \STATE then $n+1 \rightarrow n$
\ENDWHILE 
\end{algorithmic} 
\end{algorithm} 

At each step, we solve a strongly convex $l_1$-regularized sub-problem: 
\begin{equation} \label{ADMM: sub}
\begin{array}{lll}
   x^{n+1} & = & arg \min \limits_{x \in \Re^N } \{ \frac{1}{2} \|Ax-y\|^2_2 
                      + c\|x\|^2 + \lambda \dfrac{(a+1)}{a} \|x\|_1 
                      - \langle x,v^n \rangle \}  \\
           & = & arg \min \limits_{x \in \Re^N } \{ \frac{1}{2} x^t(A^tA+2cI)x 
                      - \langle x,v^n + A^t y  \rangle \ 
                      + \lambda \dfrac{(a+1)}{a} \|x\|_1 \}.
\end{array}
\end{equation}
We now employ the Alternating Direction Method of Multipliers (ADMM), \cite{Boyd_11}. 
After introducing a new variable $z$, the sub-problem is recast as: 
\begin{equation}
\begin{array}{r}
  \min \limits_{x, z \in \Re^N } \{\ \ \frac{1}{2} x^t(A^tA+2cI)x 
                      - \langle x,v^n + A^t y  \rangle \ 
                      + \lambda \dfrac{(a+1)}{a} \|z\|_1  \}  \\
   s.t. \  x-z = 0.
\end{array}
\end{equation}

Define the augmented Lagrangian function as:
\begin{equation*}
   L(x,z,u) = \frac{1}{2} x^t(A^tA+2cI)x 
                      - \langle x,v^n + A^t y  \rangle \ 
                      + \lambda \dfrac{(a+1)}{a} \|z\|_1 + \frac{\delta}{2} \| x - z \|_2^2 + u^t(x-z),
\end{equation*}
where $u$ is the Lagrange multiplier, and $\delta > 0$ is a penalty parameter.
The ADMM consists of three iterations: 
\begin{equation*} 
\left \{
\begin{array}{rrl} 
  x^{k+1} & = & arg \min \limits_x \ \ L(x,z^k,u^k); \\
  z^{k+1} & = & arg \min \limits_z \ \ L(x^{k+1},z,u^k); \\
  u^{k+1} & = & u^k + \delta (x^{k+1} - z^{k+1}).
\end{array}
\right.
\end{equation*} 
The first two steps have closed-form solutions and are described 
in Algorithm \ref{alg:ADMM}, where $shrink(\cdot,\cdot)$ is a soft-thresholding operator given by: 
\begin{equation*}
   shrink(x,r)_i = sgn(x_i) \max\{ |x_i| -r, 0 \}. 
\end{equation*}

\begin{algorithm}[H]
\caption{ADMM for subproblem (\ref{ADMM: sub})}  
\label{alg:ADMM}
\begin{algorithmic} 
\STATE Initial guess: $x^0$, $z^0$, $u^0$ and iterative index $k = 0$
\WHILE{ not converged} 
  \STATE  $ x^{k+1} := (A^tA + 2cI + \delta I)^{-1} (A^t y - v^n + \delta z^k- u^k)$
  \STATE  $ z^{k+1} := shrink( \ \ x^{k+1} + u^k, \frac{a+1}{a\delta}\lambda \ \ )$
  \STATE  $ u^{k+1} := u^k + \delta (x^{k+1} - z^{k+1}) $
  \STATE then $k+1 \rightarrow k$
\ENDWHILE 
\end{algorithmic} 
\end{algorithm} 
\medskip

\subsection{Convergence Theory for Unconstrained DCATL1}
\medskip

We present a convergence theory for the Algorithm \ref{alg: DCA} (DCATL1). 
We prove that the sequence $\{ f(x^n) \}$ is decreasing and convergent, 
while the sequence $\{ x^n \}$ is bounded under some requirement on $\lambda$. 
Its subsequential limit vector $x^*$ is a stationary point  
satisfying the first order optimality condition. 
Our proof is based on the convergence theory of $l_1 - l_2$ penalty function 
\cite{l1-l2-yinminimization} besides the general DCA results \cite{DCA:tao1997convex, DCA:tao1998dc}.    
\medskip

\begin{defin} (Modulus of strong convexity)
For a convex function $f(x)$ , the modulus of strong convexity of $f$ 
on $\Re^N$, denoted as $m(f)$, is defined by 
\begin{equation*}
   m(f) := \sup\{ \rho > 0 : f - \frac{\rho}{2} \|\cdot\|_2^2 \text{ is convex on } \Re^N \}.
\end{equation*}
\end{defin}
\medskip

Let us recall an inequality from Proposition A.1 in \cite{DCA:tao1998dc} concerning the 
sequence $f(x^n)$. 
\begin{lem} \label{lem:f(x)}
Suppose that $f(x) = g(x) - h(x)$ is a D.C. decomposition, and the sequence $\{ x^n \}$ is generated 
by (\ref{ADMM: sub}), then 
\begin{equation*}
   f(x^n) - f(x^{n+1}) \geq \dfrac{m(g) + m(h)}{2} \| x^{n+1} - x^n \|_2^2. 
\end{equation*} 
\end{lem}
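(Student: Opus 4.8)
The plan is to unwind one step of the DCA iteration and combine two strong‑convexity estimates so that the linear cross terms cancel. Recall from Algorithm \ref{alg: DCA} that $v^n \in \partial h(x^n)$ and that $x^{n+1}$ is, by (\ref{ADMM: sub}), the exact minimizer of the convex function $\psi_n(x) := g(x) - \langle x, v^n \rangle$. Because of the auxiliary term $c\|x\|_2^2$ in the decomposition (\ref{equ:f=g-h}), both $g$ and $h$ are strongly convex, with moduli $m(g), m(h) \geq 2c > 0$, so in particular $\psi_n - \tfrac{m(g)}{2}\|\cdot\|_2^2$ is convex.

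First I would write the first‑order optimality condition $0 \in \partial \psi_n(x^{n+1})$ and apply the subgradient inequality for the convex function $\psi_n - \tfrac{m(g)}{2}\|\cdot\|_2^2$ at the point $x^{n+1}$, tested against $x^n$; after rearranging the quadratic terms this yields
\[
   g(x^n) - g(x^{n+1}) \ \geq\ \langle x^n - x^{n+1},\, v^n \rangle + \frac{m(g)}{2}\,\|x^{n+1} - x^n\|_2^2 .
\]
Second, since $v^n \in \partial h(x^n)$ and $h - \tfrac{m(h)}{2}\|\cdot\|_2^2$ is convex, the strengthened subgradient inequality for $h$ at $x^n$, evaluated at $x^{n+1}$, gives
\[
   h(x^{n+1}) - h(x^n) \ \geq\ \langle x^{n+1} - x^n,\, v^n \rangle + \frac{m(h)}{2}\,\|x^{n+1} - x^n\|_2^2 .
\]

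Finally I would add these two inequalities after writing $f(x^n) - f(x^{n+1}) = \big(g(x^n) - g(x^{n+1})\big) + \big(h(x^{n+1}) - h(x^n)\big)$: the terms $\langle x^n - x^{n+1}, v^n \rangle$ and $\langle x^{n+1} - x^n, v^n \rangle$ are opposite and cancel, leaving $f(x^n) - f(x^{n+1}) \geq \tfrac{m(g)+m(h)}{2}\|x^{n+1}-x^n\|_2^2$, which is the claim. Since this is essentially Proposition A.1 of \cite{DCA:tao1998dc}, no genuine obstacle arises; the two points worth a careful word are that $x^{n+1}$ really solves the inner convex subproblem exactly (this is the idealized assumption underlying the analysis, with the ADMM loop of Algorithm \ref{alg:ADMM} as its computational realization), and that the ``modulus at the minimizer'' form of the inequality is valid for a general lower semicontinuous proper strongly convex function rather than only a differentiable one, which follows directly from the Definition of $m(\cdot)$ together with the ordinary subgradient inequality.
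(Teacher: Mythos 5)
Your proof is correct: the two strengthened subgradient inequalities (for the strongly convex subproblem objective at its exact minimizer $x^{n+1}$, and for $h$ at $x^n$ with $v^n \in \partial h(x^n)$) do make the cross terms cancel and yield exactly the stated decrease estimate. The paper itself does not prove this lemma but simply recalls it as Proposition A.1 of \cite{DCA:tao1998dc}; your argument is the standard one behind that cited result, so there is nothing to reconcile beyond noting that your remark $m(g), m(h) \geq 2c$ is specific to the decomposition (\ref{equ:f=g-h}) and is not needed for the inequality itself, which holds for any D.C. decomposition with exact inner minimization.
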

\medskip

The convergence theory is below for our unconstrained Algorithm \ref{alg: DCA} --- DCATL1.  
The objective function is : $f(x) = \frac{1}{2} \|Ax - y \|^2_2 + \lambda P_a(x)$. 
\medskip

\begin{theo}
The sequences $\{ x^n \}$ and $\{ f(x^n) \}$ 
in Algorithm \ref{alg: DCA} satisfy: 
\begin{enumerate}
\item Sequence $\{ f(x^n) \}$ is decreasing and convergent. 
\item $ \| x^{n+1} - x^n \|_2 \rightarrow 0 $ as $n \rightarrow \infty$. If 
$\lambda > \dfrac{\|y\|_2^2}{2(a+1)} $, $\{ x^n \}_{n=1}^{\infty}$ is bounded. 
\vspace{1mm}
\item Any subsequential limit vector $x^*$ of $\{ x^n \}$ satisfies the first 
order optimality condition: 
\begin{equation} \label{equ:opti-cond}
	0 \in A^T(Ax^* - y) + \lambda \partial P_a(x^*),
\end{equation}
implying that $x^*$ is a stationary point of (\ref{uncons-optim}).
\end{enumerate}

\end{theo}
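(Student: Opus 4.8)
The plan is to follow the standard three-part DCA convergence template, specialized to the DC decomposition $f=g-h$ in (\ref{equ:f=g-h}). First I would establish part 1 and the first half of part 2 simultaneously by invoking Lemma \ref{lem:f(x)}: since $g(x) = \frac{1}{2}\|Ax-y\|_2^2 + c\|x\|_2^2 + \lambda\frac{a+1}{a}\|x\|_1$ has $m(g) \geq 2c > 0$ (from the $c\|x\|_2^2$ term) and $h(x) = \lambda\varphi_a(x) + c\|x\|_2^2$ has $m(h) \geq 2c > 0$, the inequality $f(x^n) - f(x^{n+1}) \geq \frac{m(g)+m(h)}{2}\|x^{n+1}-x^n\|_2^2 \geq 2c\|x^{n+1}-x^n\|_2^2$ shows $\{f(x^n)\}$ is nonincreasing. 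Since $f \geq 0$ is bounded below, $\{f(x^n)\}$ converges; telescoping the inequality gives $\sum_n \|x^{n+1}-x^n\|_2^2 < \infty$, hence $\|x^{n+1}-x^n\|_2 \to 0$.

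Next I would prove boundedness of $\{x^n\}$ under $\lambda > \frac{\|y\|_2^2}{2(a+1)}$. Because $\{f(x^n)\}$ is decreasing, $f(x^n) \leq f(x^0) = f(0) = \frac12\|y\|_2^2$ for all $n$. Thus $\lambda P_a(x^n) \leq f(x^n) \leq \frac12\|y\|_2^2$, so $P_a(x^n) \leq \frac{\|y\|_2^2}{2\lambda}$. The key elementary fact about the TL1 penalty is that $\rho_a(t) = \frac{(a+1)t}{a+t} \to a+1$ as $t\to\infty$, so $\rho_a(t) < a+1$ for all finite $t$; conversely, for any single component, $|x_i| \leq \frac{a\,\rho_a(|x_i|)}{a+1-\rho_a(|x_i|)}$, which stays finite precisely when $\rho_a(|x_i|) < a+1$. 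The hypothesis $\lambda > \frac{\|y\|_2^2}{2(a+1)}$ forces $P_a(x^n) \leq \frac{\|y\|_2^2}{2\lambda} < a+1$, and since each $\rho_a(|x_i^n|) \leq P_a(x^n)$, every component satisfies $\rho_a(|x_i^n|) < a+1 - \eta$ for a fixed $\eta>0$, giving a uniform bound on $|x_i^n|$ and hence on $\|x^n\|_2$. This is the step I expect to be the main obstacle — not because it is deep, but because it requires carefully extracting the uniform gap away from the asymptote $a+1$ and making sure the bound is genuinely uniform in $n$ and in the component index $i$.

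Finally, for part 3, let $x^*$ be a subsequential limit, say $x^{n_k}\to x^*$. The DCA subproblem defining $x^{n+1}$ in (\ref{ADMM: sub}) is the exact minimization of the strongly convex function $g(x) - \langle x, v^n\rangle$ with $v^n \in \partial h(x^n)$, so its optimality condition is $v^n \in \partial g(x^{n+1})$, i.e. $\partial h(x^n) \cap \partial g(x^{n+1}) \neq \emptyset$. Writing this out: $\lambda\nabla\varphi_a(x^n) + 2cx^n \in A^T(Ax^{n+1}-y) + 2cx^{n+1} + \lambda\frac{a+1}{a}\partial\|x^{n+1}\|_1$. Now I would pass to the limit along $n_k$: since $\|x^{n_k+1}-x^{n_k}\|_2\to 0$ we also have $x^{n_k+1}\to x^*$; the $2cx^n$ and $2cx^{n+1}$ terms cancel in the limit; $\nabla\varphi_a$ is continuous so $\nabla\varphi_a(x^{n_k})\to\nabla\varphi_a(x^*)$; and $\partial\|\cdot\|_1$ has closed graph, so the limiting subgradient inclusion holds at $x^*$. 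Rearranging yields $0 \in A^T(Ax^*-y) + \lambda\big(\frac{a+1}{a}\partial\|x^*\|_1 - \nabla\varphi_a(x^*)\big) = A^T(Ax^*-y) + \lambda\partial P_a(x^*)$ by the decomposition (\ref{equ: d P_a}), which is exactly (\ref{equ:opti-cond}). The only care needed here is the closed-graph/outer-semicontinuity argument for $\partial\|\cdot\|_1$, which is standard for convex subdifferentials.
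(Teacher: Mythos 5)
Your proposal is correct and follows essentially the same route as the paper: the same DC decomposition with the $c\|x\|_2^2$ augmentation, Lemma \ref{lem:f(x)} with $m(g), m(h) \geq 2c$ for monotonicity and $\|x^{n+1}-x^n\|_2 \to 0$, the bound $\lambda P_a(x^n) \leq \tfrac{1}{2}\|y\|_2^2$ inverted through $\rho_a$ to get the same uniform bound $a\|y\|_2^2/(2\lambda(a+1)-\|y\|_2^2)$, and passage to the limit in the subproblem optimality condition. The only cosmetic difference is in part 3, where you invoke the closed graph of $\partial\|\cdot\|_1$ on the explicitly determined convergent subgradient sequence, while the paper cites the inclusion $\partial\|x^{n_k}\|_1 \subseteq \partial\|x^*\|_1$ for large $n_k$ (Proposition 3.1 of the $l_1$--$l_2$ reference); both are standard and equivalent here.
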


\vspace{3mm}

\begin{proof}
\begin{enumerate}
\item 
By the definition of $g(x)$ and $h(x)$ in equation (\ref{equ:f=g-h}),  it is easy to see that: 
\begin{equation*}
\begin{array}[c]{lll}
   m(g) & \geq & 2c; \\
   m(h) & \geq & 2c.
\end{array}
\end{equation*}
By Lemma \ref{lem:f(x)}, we have: 
\begin{equation*}
\begin{array}[c]{lll}
   f(x^n) - f(x^{n+1}) & \geq & \dfrac{m(g) + m(h)}{2} \| x^{n+1} - x^n \|_2^2 \\
                       & \geq & 2c \| x^{n+1} - x^n \|_2^2.
\end{array}
\end{equation*}
So the sequence $\{ f(x^n) \}$ is decreasing and non-negative, thus convergent. 

\item 
\medskip
It follows from the convergence of $\{ f(x^n) \}$ that: 
\begin{equation*}
   \| x^{n+1} - x^n \|_2^2 \leq \frac{f(x^{n}) - f(x^{n+1})}{2c} \rightarrow 0, \ \ as \ n \rightarrow \infty. 
\end{equation*}

If $y=0$, since the initial vector $x^0 = 0$, and the sequence $\{ f(x^n) \}$ is decreasing, 
we have $f(x^n) = 0$, $\forall n \geq 1$. So $x^n = 0$, and the boundedness holds. 

Consider non-zero vector $y$. Then 
\begin{equation*}
	f(x^n) = \frac{1}{2} \|Ax^n - y \|^2_2 + \lambda P_a(x^n) 
		\leq f(x^0) = \frac{1}{2} \|y\|_2^2.
\end{equation*}
So $ \lambda P_a(x^n) \leq \frac{1}{2} \|y\|_2^2 $, implying 
$2 \lambda \rho_a( \|x^n \|_{\infty} ) \leq  \|y\|_2^2$, or:
\[  
	{ 2\lambda (a+1) || x^n \|_{\infty} \over a + \| x^n \|_{\infty} } 
		\leq \|y \|_{2}^{2}. 
\]
If $ \lambda > \dfrac{\|y\|_2^2}{2(a+1)}$, then
\begin{equation*}
	\|x^n\|_{\infty} \leq \dfrac{a \|y\|_2^2 }{2 \lambda (a+1) - \|y\|_2^2}.
\end{equation*}
Thus the sequence $\{ x^n \}_{n=1}^{\infty}$ is bounded.

\item 
Let $\{ x^{n_k} \}$ be a subsequence of $\{x^n\}$ which converges to $x^*$. 
So the optimality condition at the $n_k$-th step of Algorithm \ref{alg: DCA} is expressed as: 
\begin{equation} \label{equ:nk-opti}
\begin{array}{ll}
   0 \in & A^T(Ax^{n_k} - y) + 2c(x^{n_k} - x^{n_k-1}) \\
   		& + \lambda (\frac{a+1}{a}) \partial \|x^{n_k}\|_1 - \lambda \nabla \varphi_a(x^{n_k-1}).
\end{array}
\end{equation}
Since $ \| x^{n+1} - x^n \|_2 \rightarrow 0 $ as $n \rightarrow \infty$ and 
$x^{n_k}$ converges to $x^*$, as shown in 
Proposition 3.1 of \cite{l1-l2-yinminimization}, we have that for 
sufficiently large index $n_k$, 
\begin{equation*}
   \partial \| x^{n_k} \|_1 \subseteq \partial \| x^* \|_1.
\end{equation*}
Letting $n_k \rightarrow \infty$ in (\ref{equ:nk-opti}), we have 
\[
	0 \in A^T(Ax^* - y) + \lambda (\frac{a+1}{a}) \partial \|x^*\|_1 
		- \lambda \nabla \varphi_a(x^*).
\]
By the definition of $\partial P_a(x)$ at (\ref{equ: d P_a}), we have
$
	 0 \in A^T(Ax^* - y) + \lambda \partial P_a(x^*) .
$ 

\end{enumerate}
\end{proof}

\subsection{Algorithm for Constrained Model}

Here we also give a DCA scheme to solve the constrained problem (\ref{cons-optim})
\begin{equation*}
\begin{array}{l}
   \min \limits_{x \in \Re^N} P_a(x) \;\; s.t. \;\;\  Ax=y. \\
   \Leftrightarrow \\
   \min \limits_{x \in \Re^N} \dfrac{a+1}{a}\| x \|_1  - \varphi_a(x)  
   \;\; s.t. \;\;\ Ax=y.
\end{array}
\end{equation*}
We can rewrite the above optimization as
\begin{equation} \label{scheme: cons DCA}
	\min \limits_{x \in \Re^N}
	 \dfrac{a+1}{a}\| x \|_1 + \chi(x)_{\{Ax=y\}} - \varphi_a(x)
	 = g(x) - h(x),
\end{equation}
where $g(x) = \dfrac{a+1}{a}\| x \|_1 + \chi(x)_{\{Ax=y\}}$ is a polyhedral convex function
\cite{DCA:tao1997convex}. 

Let $z = \nabla \varphi_a(x)$, 
then the convex sub-problem is: 
\begin{equation} \label{equ: cons dual}
   \min \limits_{x \in \Re^N} \frac{a+1}{a}\| x \|_1 - \langle z, x \rangle   
   \;\; s.t. \;\;\ Ax=y.
\end{equation}
To solve (\ref{equ: cons dual}), we introduce two Lagrange multipliers $u,v$ and define an augmented Lagrangian: 
\begin{equation*}
   L_{\delta}(x,w,u,v) = \frac{a+1}{a} \|w\|_1 - z^tx + u^t(x-w) + v^t(Ax-y) + \frac{\delta}{2} \|x-w\|^2
   + \frac{\delta}{2}\|Ax-y\|^2,
\end{equation*}
where $\delta > 0$. ADMM finds a saddle point $(x^*,w^*,u^*,v^*)$, such that: 
\begin{equation*}
   L_{\delta}(x^*,w^*,u,v) \leq L_{\delta}(x^*,w^*,u^*,v^*) \leq L_{\delta}(x,w,u^*,v^*)  \ \ \  \forall x,w,u,v
\end{equation*}
by alternately minimizing $L_{\delta}$ with respect to $x$, 
minimizing with respect to $y$ and updating the dual variables $u$ and $v$. 
The saddle point $x^*$ will be a solution to (\ref{equ: cons dual}). 
The overall algorithm for solving the constrained TL1 is described in Algorithm (\ref{alg:cons ADMM}).

\begin{algorithm}[H]
\caption{DCA method for constrained TL1 minimization}  
\label{alg:cons ADMM}
\begin{algorithmic} 
\STATE Define $\epsilon_{outer} > 0$, $\epsilon_{inner} > 0$.
Initialize $x^0 = 0$ and outer loop index $n = 0$
\WHILE{ $\|x^n - x^{n+1}\| \geq \epsilon_{outer}$}
\STATE $z = \nabla \varphi_a(x^n)$  \\
   Initialization of inner loop: 
   $x^0_{in} = w^0 = x^n$, $v^0 =0$ and $u^0=0$. \\
   Set inner index $j=0$.
  \WHILE {$\|x^j_{in} - x^{j+1}\| \geq \epsilon_{inner}$}  
  \STATE  $ x^{j+1}_{in} := (A^t A + I)^{-1} ( w^j + A^ty + \frac{z - u^j - A^t v^j}{\delta})$ 
  \STATE  $ w^j = shrink( \ \ x^{j+1}_{in} + \frac{u^j}{\delta}, \frac{a+1}{a\delta} \ \ )$ 
  \STATE  $ u^{j+1} := u^j + \delta(x^{j+1} - w^j ) $
  \STATE  $ v^{j+1} := v^j + \delta(Ax^{j+1} - y) $
  \ENDWHILE  
\STATE $x^n = x^j_{in} $ and $n = n+1$.
\ENDWHILE
\end{algorithmic} 
\end{algorithm} 
\medskip

According to DC decomposition scheme (\ref{scheme: cons DCA}), Algorithm 3 
is a polyhedral DC program. Similar convergence theorem as the unconstrained model 
in the last section can be proved. Furthermore, due to property of polyhedral DC programs,
this constrained DCA also has a finite convergence. It means that if the inner 
subproblem (\ref{equ: cons dual}) is exactly solved, $\{ x^n \}$, the sequence generated by this iterative DC 
algorithm, has finite subsequential limit points \cite{DCA:tao1997convex}.

\section{Numerical Results}

\setcounter{equation}{0}

In this section, we use two classes of randomly generated matrices 
to illustrate the effectiveness of our Algorithms: 
DCATL1 (difference convex algorithm for transformed $l_1$ penalty) 
and its constrained version. We compare them separately with 
several state-of-the-art solvers on recovering sparse vectors:
\medskip

\begin{itemize}
\item unconstrained algorithms: 
	\begin{enumerate}
	\item[(i)] Reweighted $l_{1/2}$ \cite{reweighted-l1/2-WTYing-2013improved};
	\item[(ii)] DCA $l_{1-2}$ algorithm \cite{l1-l2-yinminimization, l1-l2-lou2014computing};	
	\item[(iii)] CEL0 \cite{CEL0}
	\end{enumerate}
\item constrained algorithms: 
	\begin{enumerate}
	\item[(i)] Bregman algorithm \cite{Bregman:yin2008};
	\item[(ii)] Yall1; 
	\item[(iii)] $Lp-RLS$ \cite{lprls-chartrand2008iteratively}.	
	\end{enumerate}
\end{itemize} 
All our tests were performed on a $Lenovo$ desktop with 16 GB of RAM and Intel Core processor  
$i7-4770$ with CPU at $3.40GHz \times 8 $ under 64-bit Ubuntu system. \\

The two classes of random matrices are: 
\begin{itemize}
   \item[1)] Gaussian matrix.
   \item[2)] Over-sampled DCT with factor $F$.
\end{itemize}
\medskip

We did not use prior information of the true sparsity of the original signal $x^*$. 
Also, for all the tests, the computation is initialized with zero vectors. 
In fact, the DCATL1 does not guarantee a global minimum in general, due to 
nonconvexity of the problem. Indeed we observe that DCATL1 with 
random starts often gets stuck at local minima especially when the matrix $A$ 
is ill-conditioned (e.g. $A$ has a large condition number or is highly coherent). 
In the numerical experiments, by setting $x_0 = 0$, we find that 
DCATL1 usually produces an optimal solution, exactly or almost equal to the ground truth vector. 
The intuition behind our choice is that by using zero vector as initial guess, the first step
 of our algorithm reduces to solving an unconstrained weighted $l_1$ problem. 
 So basically we are minimizing TL1 on the basis of $l_1$,
 which is why minimization of TL1 initialized by $x_0 = 0$ always outperforms 
$l_1$, see \cite{YX_17} for a rigorous analysis.


\subsection{Choice of Parameter: `$a$'}

In DCATL1, parameter $a$ is also very important. When $a$ tends to zero, 
the penalty function approaches the $l_0$ norm. If $a$ goes to $+\infty$, 
objective function will be more convex and act like the $l_1$ optimization. 
So choosing a better $a$ will improve the effectiveness and success rate for our algorithm. 

\begin{figure}
\centering
\includegraphics[scale=0.7]{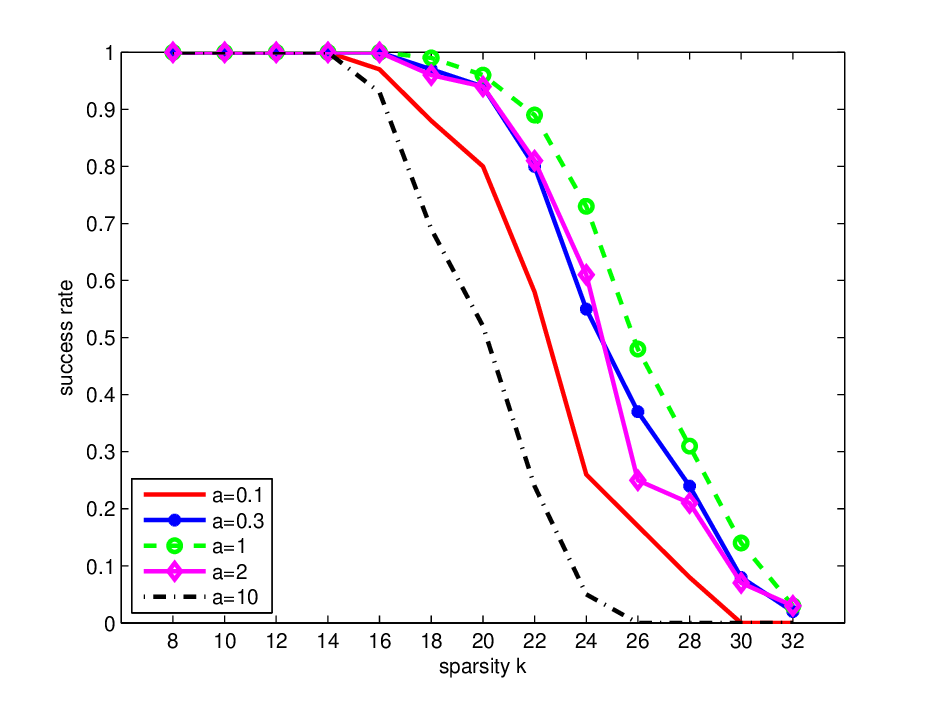}
\caption{Numerical tests on parameter $a$ with $M = 64$, $N = 256$ by the 
unconstrained DCATL1 method.}
\label{fig:choice of a}
\end{figure}
\medskip

We tested DCATL1 on recovering sparse vectors with different parameter $a$, 
varying among $\{0.1\ \ 0.3\ \ 1\ \ 2\ \ 10 \}$. In this test, $A$ is a 
$64 \times 256$ random matrix generated by normal Gaussian distribution. 
The true vector $x^*$ is also a randomly generated sparse vector with 
sparsity $k$ in the set $\{ 8 \ \ 10 \ \ 12 \ . . . \ \ 32 \}$. Here the regularization
parameter $\lambda$ was set to be $10^{-5}$ for all tests. Although the best $\lambda$ 
may be $k$-dependent in general, we considered the noiseless case and 
chose $\lambda = 10^{-5}$ (small and fixed) to approximately enforce $Ax = Ax^*$. 
For each $a$, we sampled 100 times with different $A$ and $x^*$. 
The recovered vector $x_r$ is accepted and recorded as one success if the 
relative error: $ \frac{ \|x_r - x^*\|_2 }{\|x^*\|_2} \leq 10^{-3} $.
\medskip

Fig. \ref{fig:choice of a} shows the success rate using DCATL1 
over 100 independent trials for various values of parameter $a$ and sparsity $k$. 
From the figure, we see that DCATL1 with $a=1$ is the best 
among all tested values. Also numerical results for $a = 0.3$ and $a=2$ (near 1), 
are better than those with 0.1 and 10. This is because the objective function 
is more non-convex at a smaller $a$ and thus more difficult to solve. On the 
other hand, iterations are more likely to stop at a local $\ell_1$ minima far from $\ell_0$ solution 
if $a$ is too large. Thus in all the following tests, we set the parameter $a=1$. 

\subsection{Numerical Experiment for Unconstrained Algorithm}
The stopping conditions for outer loops are relative iteration error 
$\frac{\| x^{n+1} - x^{n}\|_2}{\| x^{n+1}\|_2} < 10^{-5}$ and maximum 
iteration steps $20$. While, for the inner loop, the stopping condition are 
relative iteration error $10^{-8}$ and maximum iteration steps $5000$.
The methods in comparison methods are applied with default parameters. 

\subsubsection{Gaussian matrix}
\medskip

We use $\mathcal{N}(0,\Sigma)$, the multi-variable normal distribution 
to generate Gaussian matrix $A$. Here covariance matrix is  
$\Sigma = \{ (1-r)*\chi_{(i=j)} + r  \}_{i,j} $, where 
the value of `$r$' varies from 0 to 0.8. 
In theory, the larger the $r$ is, the more difficult it is 
to recover true sparse vector. For matrix $A$, the row number 
and column number are set to be $M = 64$ and $N = 1024$. 
The sparsity $k$ varies among $\{ 5 \ \ 7 \ \ 9 \ . . . \ \ 25 \}$. 
\medskip

We compare four algorithms in terms of success rate. 
Denote $x_r$ as a reconstructed solution by a certain algorithm. 
We consider one algorithm to be successful, if the relative error of $x_r$ to 
the truth solution $x$ is less that 0.001, $i.e.$, $\frac{\| x_r -x \|_2}{\| x \|_2} < 10^{-3}$. 
In order to improve success rates for all compared algorithms, we set tolerance parameter 
to be smaller or maximum cycle number to be higher inside each algorithm. 
\medskip

The success rate of each algorithm is plotted in 
Figure \ref{figure:Gaussian} with parameter $r$ from  
the set: $\{ 0 \ \ 0.2 \ \ 0.6 \ \ 0.8 \}$. 
For all cases, DCATL1 and reweighted $l_{1/2}$ algorithms (IRucLq-v) performed almost 
the same and both were much better than the other two, 
while the CEL0 has the lowest success rate.  

\begin{figure}
\begin{tabular}{lr}
\begin{minipage}[t]{0.5\linewidth}
\includegraphics[scale=0.35]{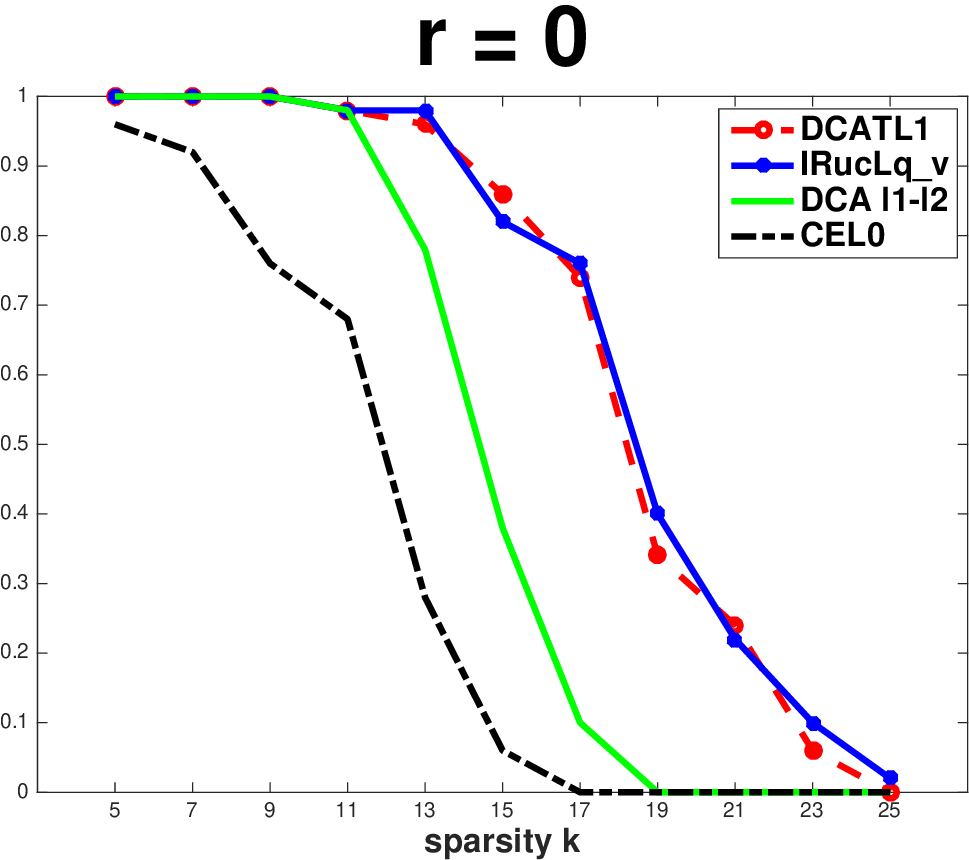}
\end{minipage}  &
\begin{minipage}[t]{0.5\linewidth}
\includegraphics[scale= 0.35]{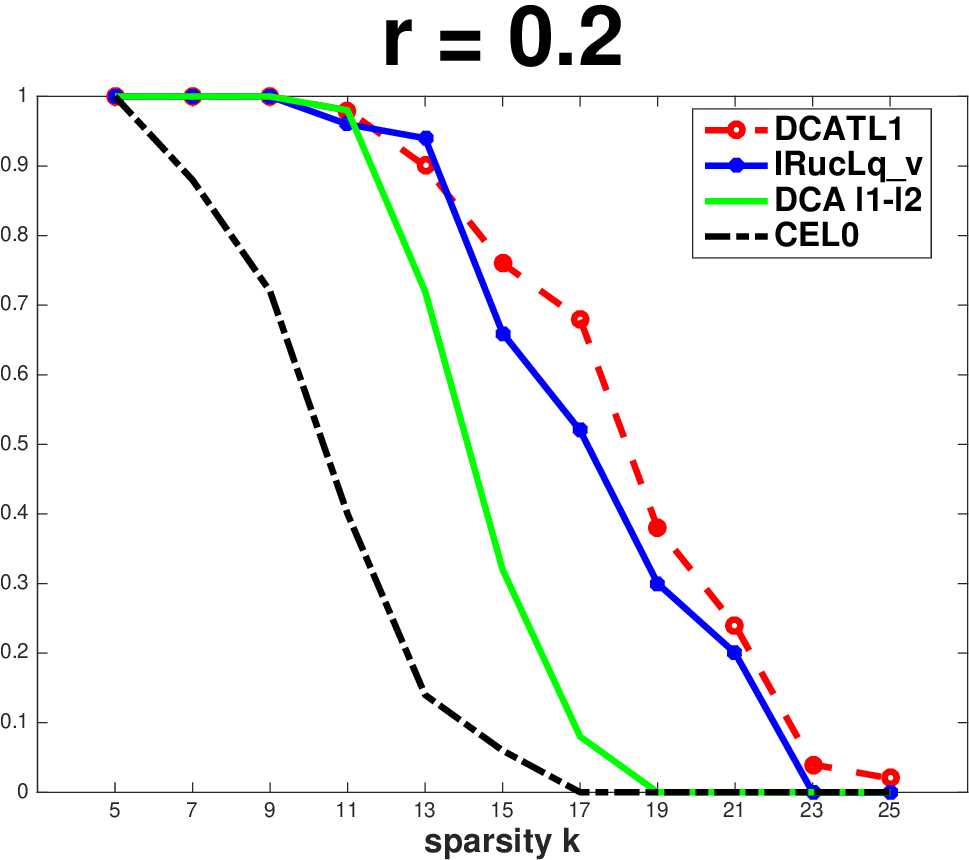}
\end{minipage} \\
\begin{minipage}[t]{0.5\linewidth}
\includegraphics[scale=0.35]{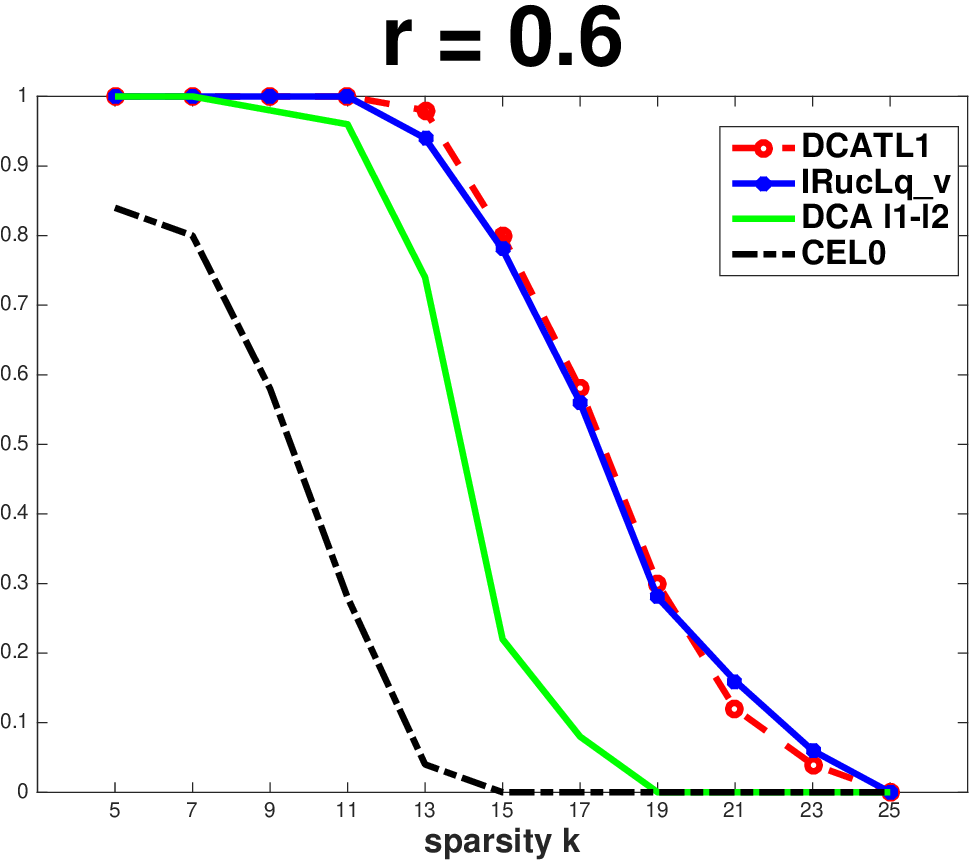}
\end{minipage}  &
\begin{minipage}[t]{0.5\linewidth}
\includegraphics[scale= 0.35]{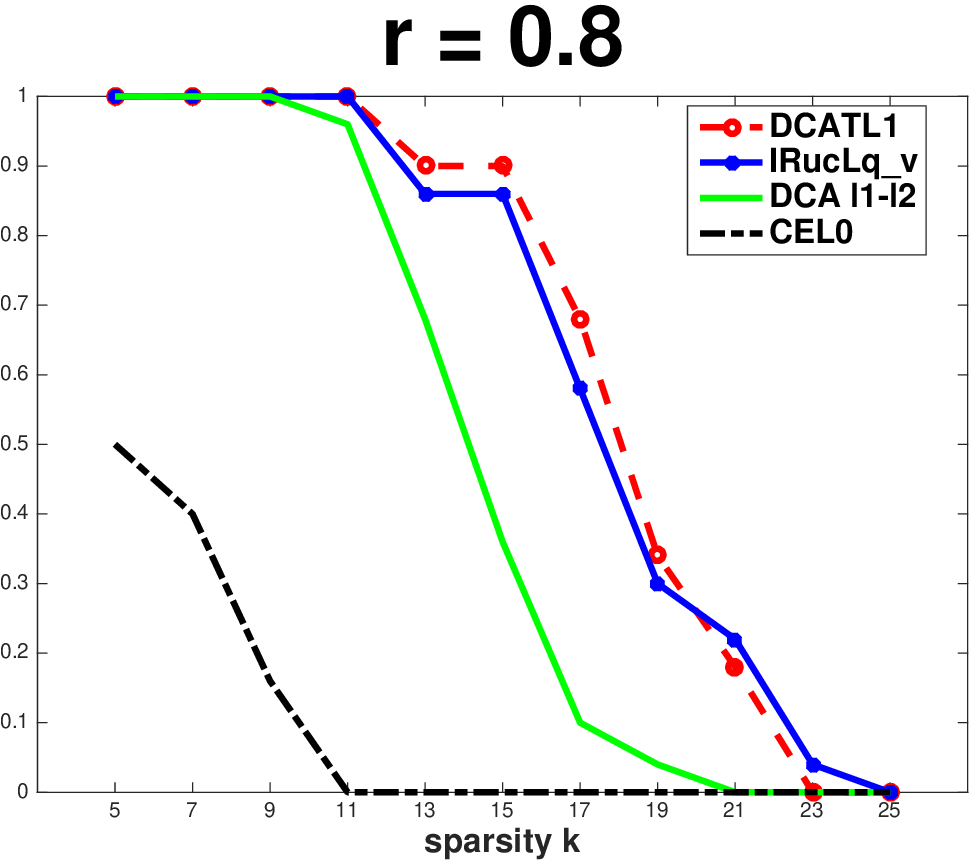}
\end{minipage}  \\
\end{tabular}
\caption{Numerical tests for unconstrained algorithms under Gaussian 
generated matrices: $M = 64$, $N = 1024$ with different coherence $r$.}
\label{figure:Gaussian}
\end{figure}

\subsubsection{Over-sampled DCT}
\medskip
 
The over-sampled DCT matrices \cite{Fann_12,l1-l2-lou2014computing,l1-l2-yinminimization} are: 
\begin{equation}
\begin{array}[c]{l}
   A  =  [a_1,...,a_N] \in \Re^{M \times N}, \\
    \ \ a_j = \dfrac{1}{\sqrt{M}} \cos(\dfrac{2 \pi \omega (j-1)}{F}), \ \ j = 1,...,N,   \\
   \text{ and $\omega$ is a random vector, drawn uniformly from $(0,1)^M$}.     
\end{array}
\end{equation}

Such matrices appear as the real part of the 
complex discrete Fourier matrices in spectral estimation \cite{Fann_12}.  
An important property  
is their high coherence: for a $100 \times 1000$ matrix with $F =10$, 
the coherence is 0.9981, while the coherence of the same size matrix with $F = 20$, is typically 0.9999.
\medskip

The sparse recovery under such matrices is possible only if the non-zero elements 
of solution $x$ are sufficiently separated. This phenomenon is characterized 
as $minimum$ $separation$ in \cite{Super:candes2013mini}, and this minimum 
length is referred as the Rayleigh length (RL). The value of RL for 
matrix $A$ is equal to the factor $F$. It is closely related to the coherence 
in the sense that larger $F$ corresponds to larger coherence of a matrix. 
We find empirically that at least 2RL is necessary to ensure optimal sparse recovery 
with spikes further apart for more coherent matrices.
\medskip

Under the assumption of sparse signal with 2RL separated spikes, we compare 
those four algorithms in terms of success rate. Denote $x_r$ as a reconstructed 
solution by a certain algorithm. We consider one algorithm successful, 
if the relative error of $x_r$ to the truth solution $x$ is less that $10^{-3}$, 
$i.e.$, $\frac{\| x_r -x \|_2}{\| x \|_2} < 10^{-3}$. The success rate is averaged 
over 50 random realizations.
\medskip

Fig. \ref{figure:F} shows success rates for those algorithms with increasing 
factor $F$ from 2 to 20. The sensing matrix is of size $100 \times 1500$. 
It is interesting to see that along with the increasing of value $F$, 
DCA of $l_{1}-l_{2}$ algorithm performs better and better, especially 
after $F \geq 10$, and it has the highest success rate among all. 
Meanwhile, reweighted $l_{1/2}$ is better for low coherent matrices. 
When $F \geq 10$, it is almost impossible for it to recover sparse solution 
for the highly coherent matrix. Our DCATL1, however, is more robust and 
consistently performed near the top, sometimes even the best. 
So it is a valuable choice for solving sparse optimization 
problems where coherence of sensing matrix is unknown. 
\medskip 

We further look at the success rates of DCATL1 with different combinations 
of sparsity and separation lengths for the over-sampled DCT matrix $A$. The rates are recorded 
in Table \ref{table:RL}, which shows that when the separation is above with the 
minimum length, the sparsity relative to $M$ 
plays more important role in determining the success rates of recovery.

\begin{table}
\caption{The success rates (\%) of DCATL1 for different combination of sparsity and minimum separation lengths.} 
\label{table:RL}
\centering 
\begin{tabular}{c rrrrrr} 
\hline\hline 
sparsity & 5 & 8 & 11 & 14 & 17 & 20 \\ [0.5ex] 
\hline 
1RL & 100 &  100  &  95  &  70  &  22  &   0 \\
2RL & 100 &  100  &  98  &  74  &  19  &   5 \\
3RL & 100 &  100  &  97  &  71  &  19  &   3 \\
4RL & 100 &  100  & 100  &  71  &  20  &   1 \\
5RL & 100 &  100  &  96  &  70  &  28  &   1 \\
[1ex] 
\hline 
\end{tabular} 
\label{tab:hresult} 
\end{table}

\begin{figure}
\begin{tabular}{lll}
\begin{minipage}[t]{0.32\linewidth}
\includegraphics[scale=0.26]{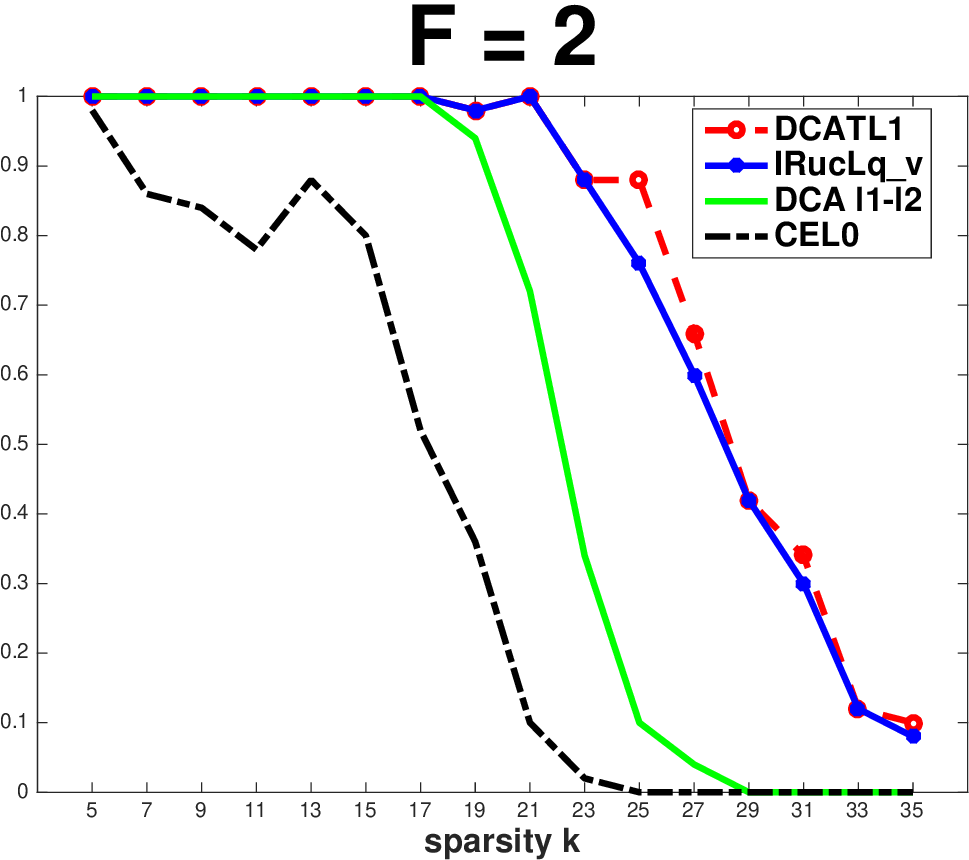}
\end{minipage} &
\begin{minipage}[t]{0.32\linewidth}
\includegraphics[scale= 0.26]{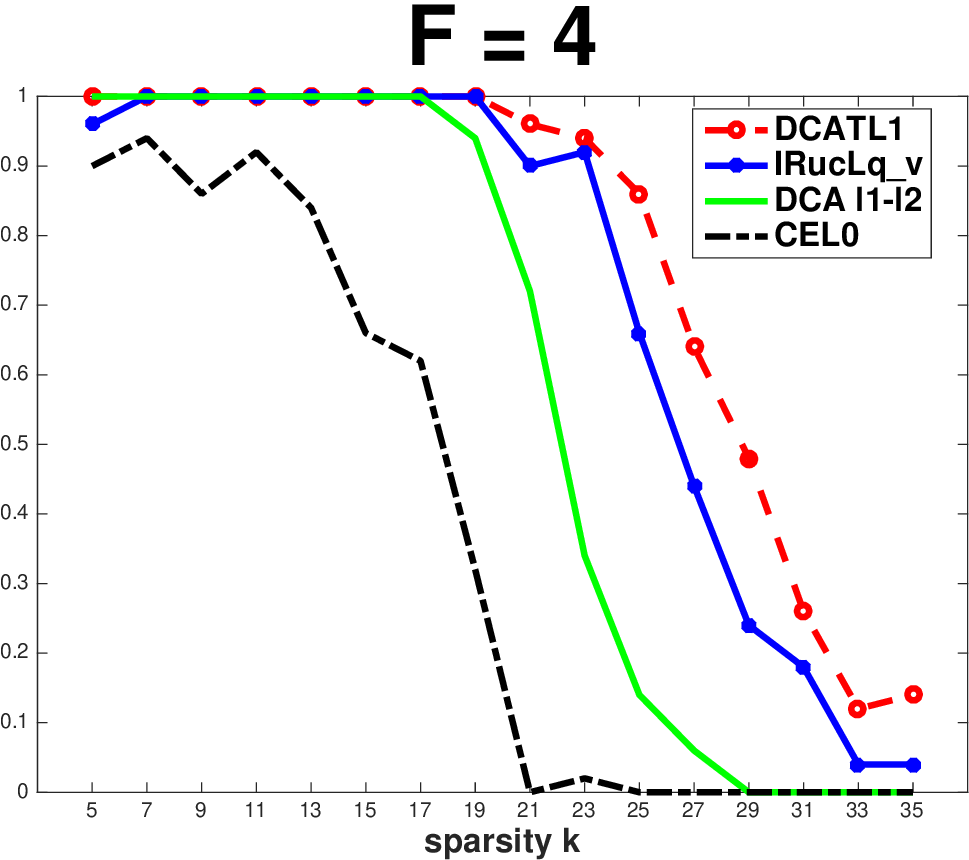}
\end{minipage} &
\begin{minipage}[t]{0.32\linewidth}
\includegraphics[scale=0.26]{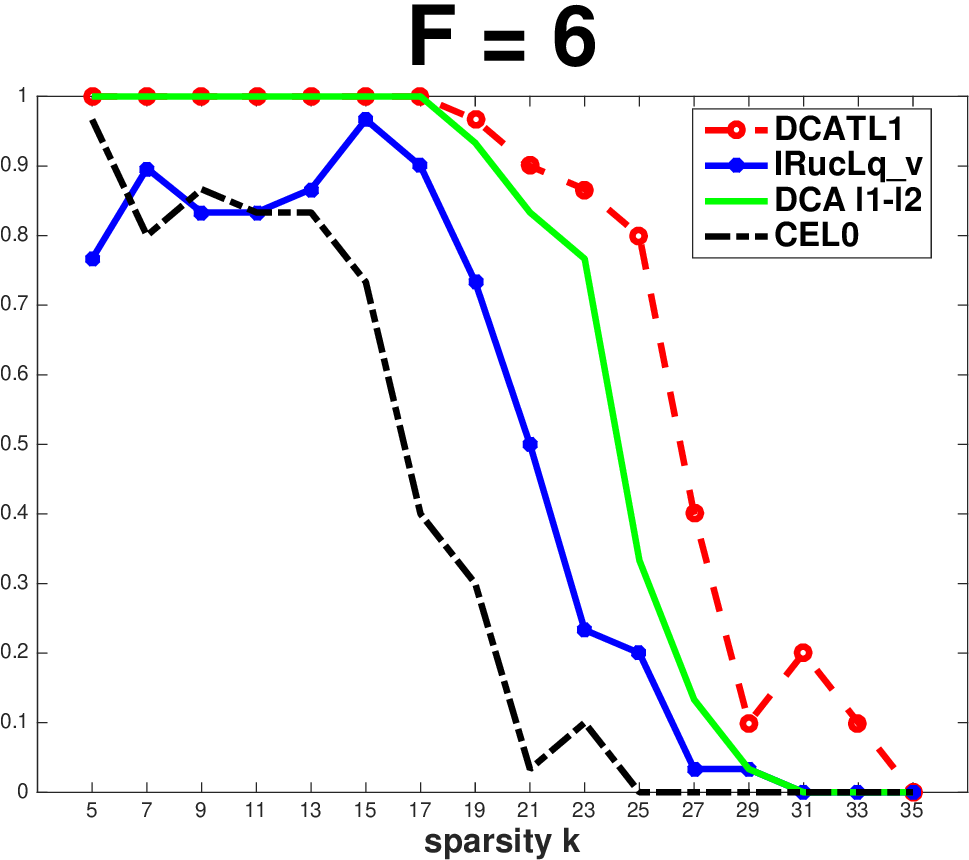}
\end{minipage} \\
\begin{minipage}[t]{0.32\linewidth}
\includegraphics[scale= 0.26]{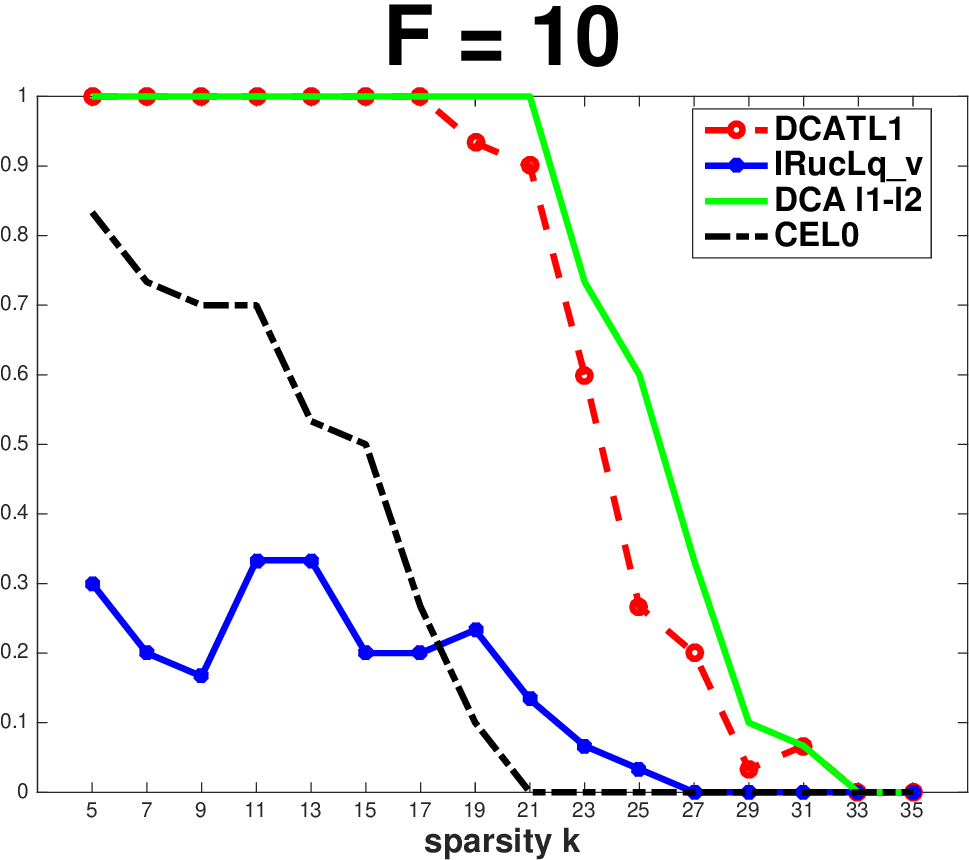}
\end{minipage}  &
\begin{minipage}[t]{0.32\linewidth}
\includegraphics[scale=0.26]{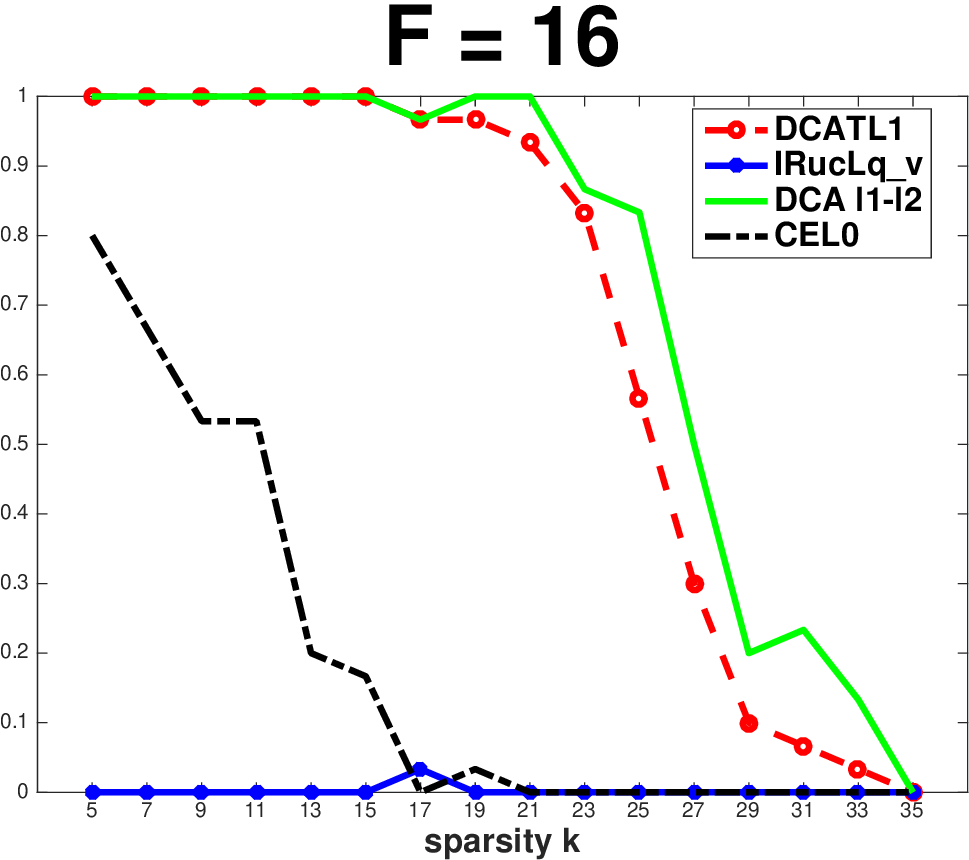}
\end{minipage} &
\begin{minipage}[t]{0.32\linewidth}
\includegraphics[scale= 0.26]{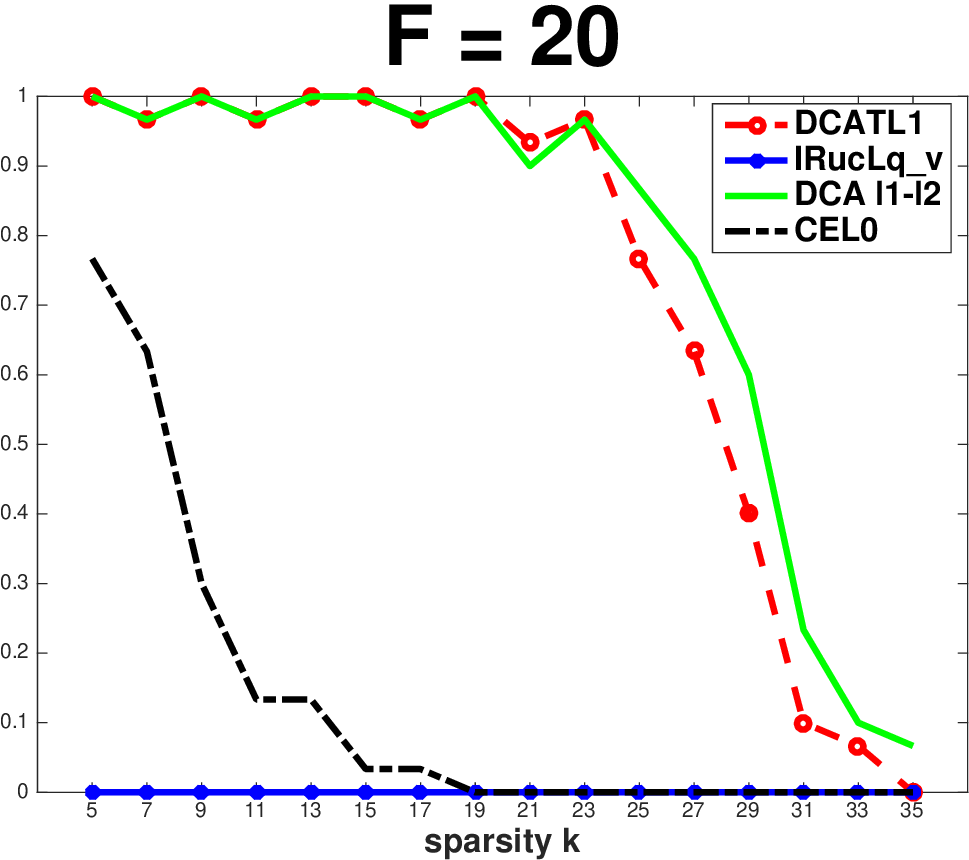}
\end{minipage} 
\end{tabular}
\caption{Numerical test for unconstrained algorithms under over-sampled DCT matrices: $M = 100$, $N = 1500$ 
with different $F$, and peaks of solutions separated by $2RL = 2F$.}
\label{figure:F}
\end{figure}

\subsection{Numerical Experiment for Constrained Algorithm}
For constrained algorithms, we performed similar numerical experiments. 
An algorithm is considered successful 
if the relative error of the numerical result $x_r$ from the ground truth $x$ 
is less than $10^{-3}$, or $\frac{\| x_r -x \|_2}{\| x \|_2} < 10^{-3}$. 
We did 50 trials to compute average success rates for all the numerical experiments as 
for the unconstrained algorithms.  

The stopping conditions for outer loop are relative iteration error 
$\frac{\| x^{n+1} - x^{n}\|_2}{\| x^{n+1}\|_2} < 10^{-5}$ and maximum 
iteration steps $20$. While, for the inner loop, the stopping condition are 
relative iteration error $10^{-5}$ and maximum iteration steps $1000$.
For other comparison methods, they are applied with default parameters. 

\subsubsection{Gaussian Random Matrices}
We fix parameters $(M,N)=(64,1024)$, while covariance parameter $r$ is varied from 0 to 0.8. 
Comparison is with the reweighted $l_{1/2}$ and two $l_1$ algorithms (Bregman and yall1). 
In Fig. (\ref{graph:cons-gaussian}), we see that $Lp-RLS$ is 
the best among the four algorthms with DCATL1 trailing not much behind. 

\begin{figure}
\begin{tabular}{lr}
\begin{minipage}[t]{0.5\linewidth}
\includegraphics[scale=0.35]{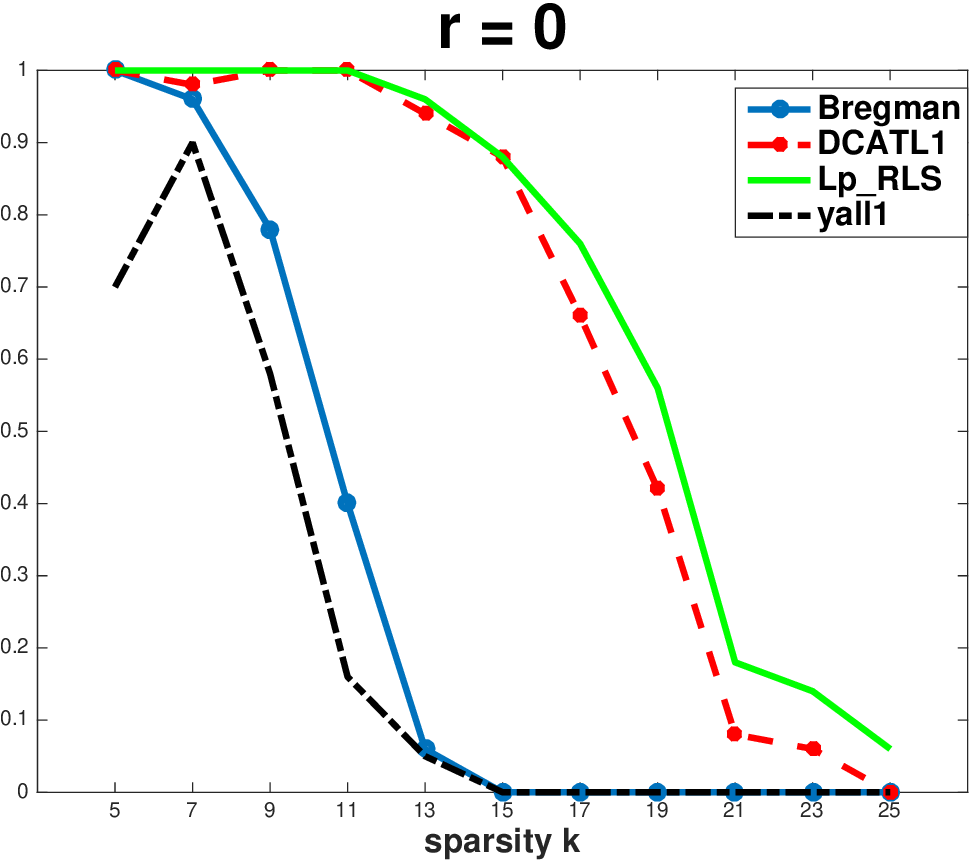}
\end{minipage}  &
\begin{minipage}[t]{0.5\linewidth}
\includegraphics[scale= 0.35]{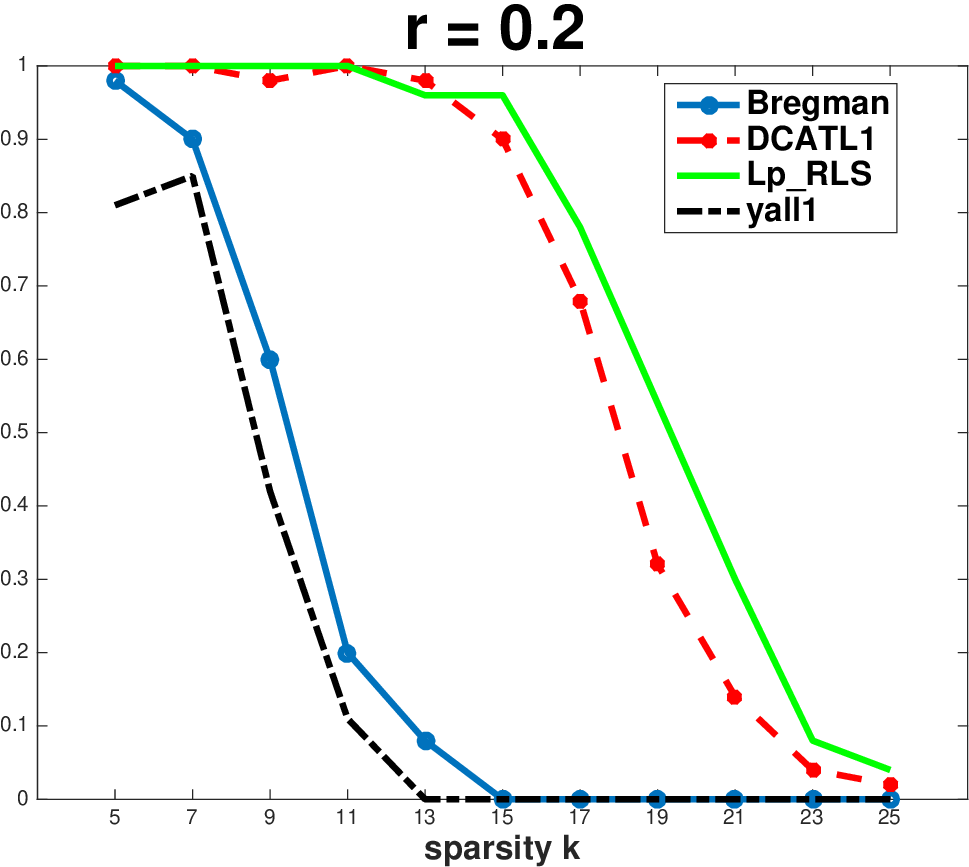}
\end{minipage} \\
\begin{minipage}[t]{0.5\linewidth}
\includegraphics[scale=0.35]{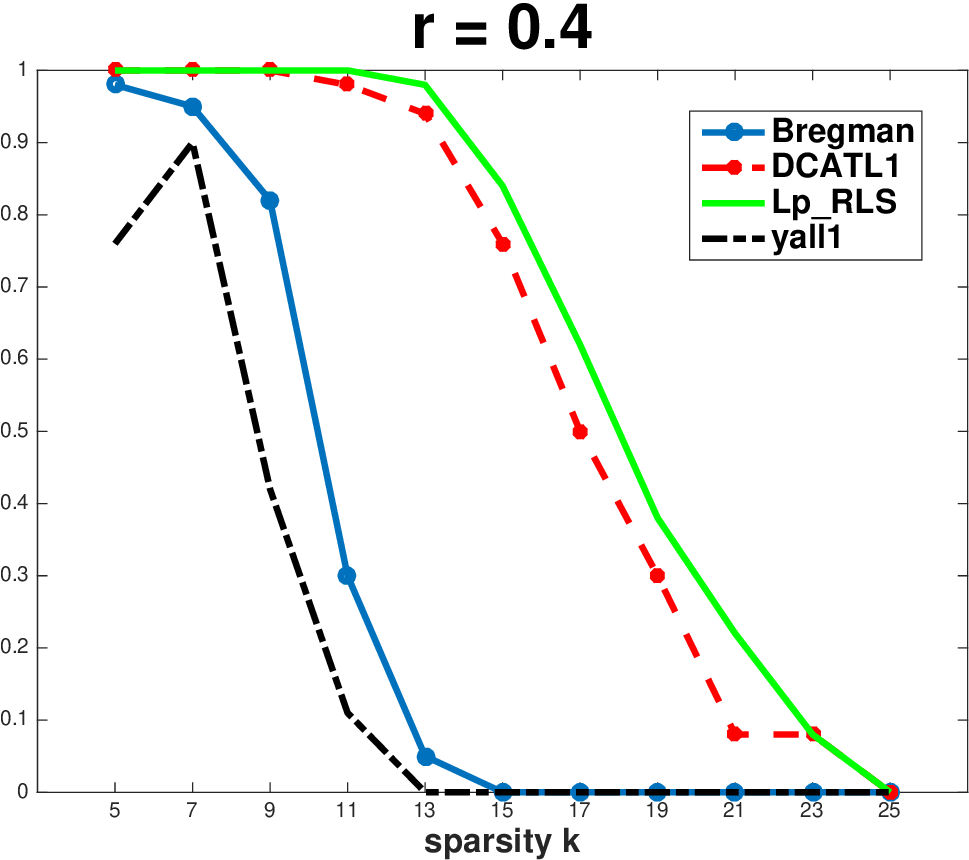}
\end{minipage}  &
\begin{minipage}[t]{0.5\linewidth}
\includegraphics[scale= 0.35]{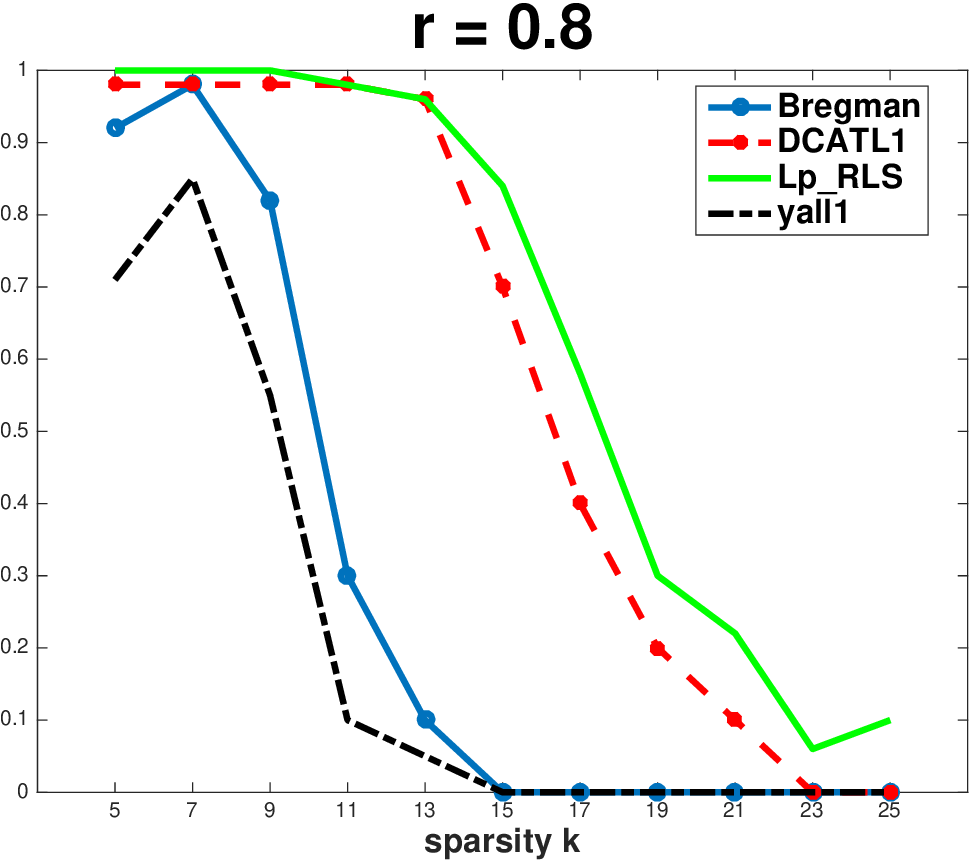}
\end{minipage}  \\
\end{tabular}
\caption{Comparison of constrained algorithms for $64\times 1024$ Gaussian random matrices 
with different coherence parameter $r$. The data points are averaged over 50 trials.}
\label{graph:cons-gaussian}
\end{figure}

\subsubsection{Over-sampled DCT} 
We fix $(M,N)=(100,1500)$, and vary parameter $F$ from 2 to 20, 
so the coherence of there matrices has a wider range and almost reaches 1 at the high end. 
In Fig. (\ref{graph:cons over}), when $F$ is small, say 
$ F = 2,4$, $Lp-RLS$ still performs the best, similar to the case of 
Gaussian matrices. However, with increasing $F$, the success rates for $Lp-RLS$ 
declines quickly, worse than the Bregman $l_1$ algorithm at $F=6,10$.
The performance for DCATL1 is very stable and maintains a high level 
consistently even at the very high end of coherence ($F=20$).     

\begin{figure}
\begin{tabular}{lll}
\begin{minipage}[t]{0.32\linewidth}
\includegraphics[scale=0.26]{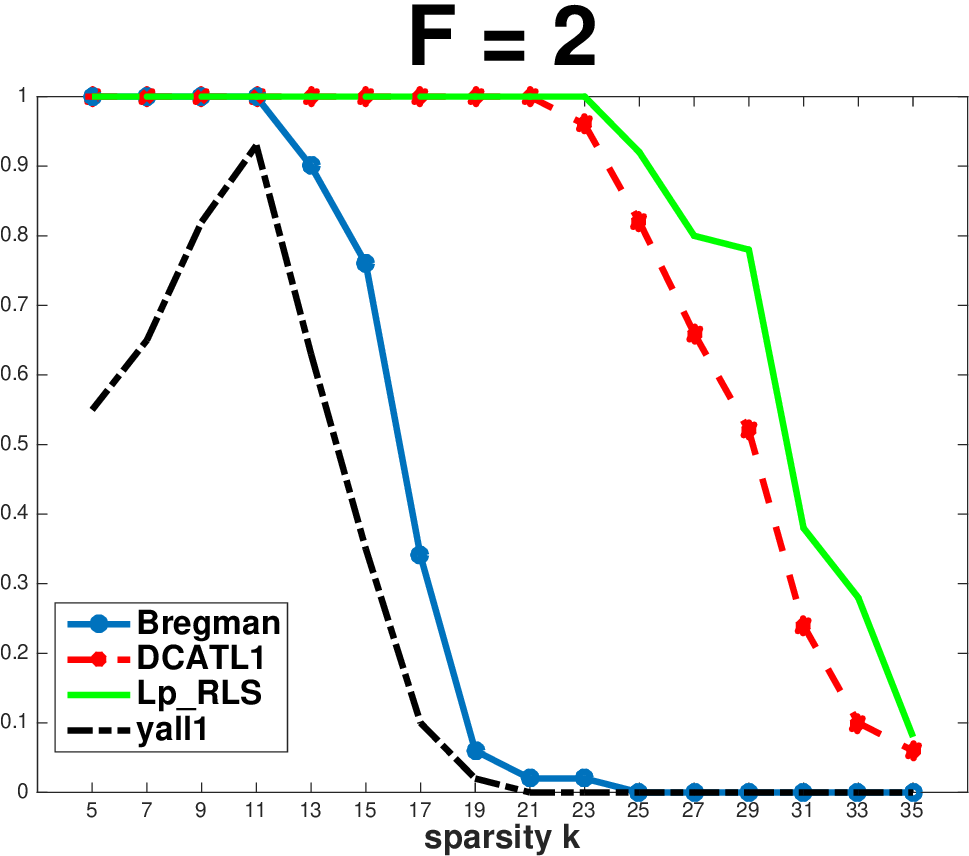}
\end{minipage} &
\begin{minipage}[t]{0.32\linewidth}
\includegraphics[scale= 0.26]{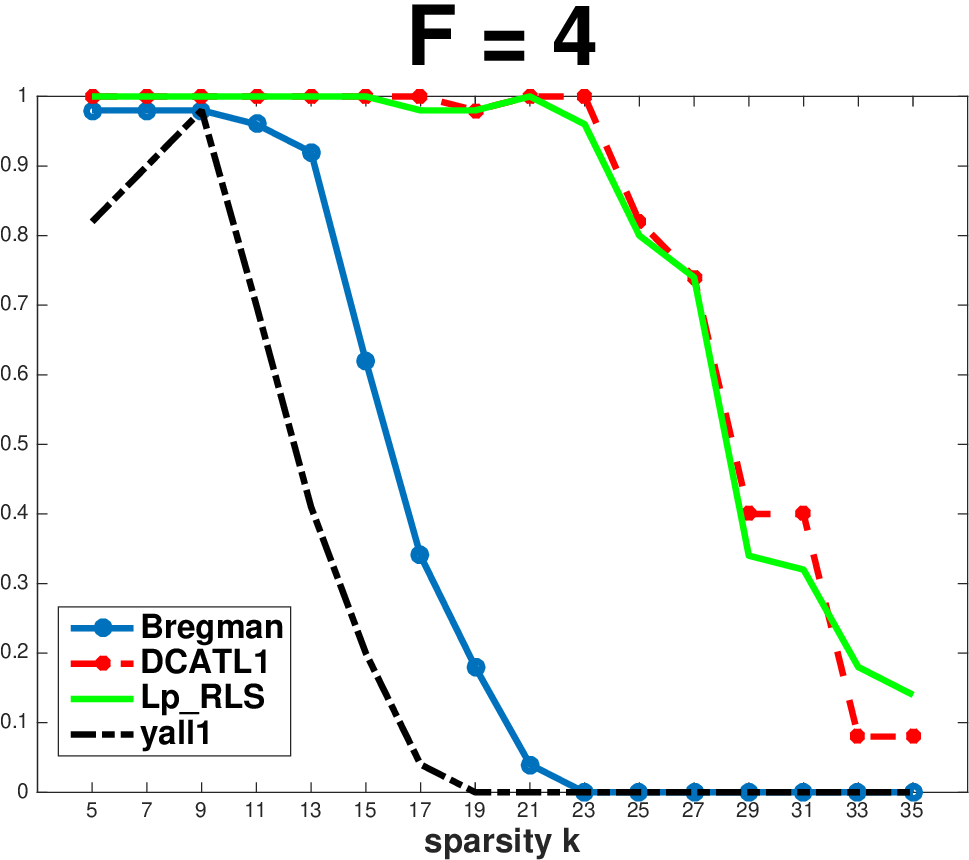}
\end{minipage} &
\begin{minipage}[t]{0.34\linewidth}
\includegraphics[scale= 0.26]{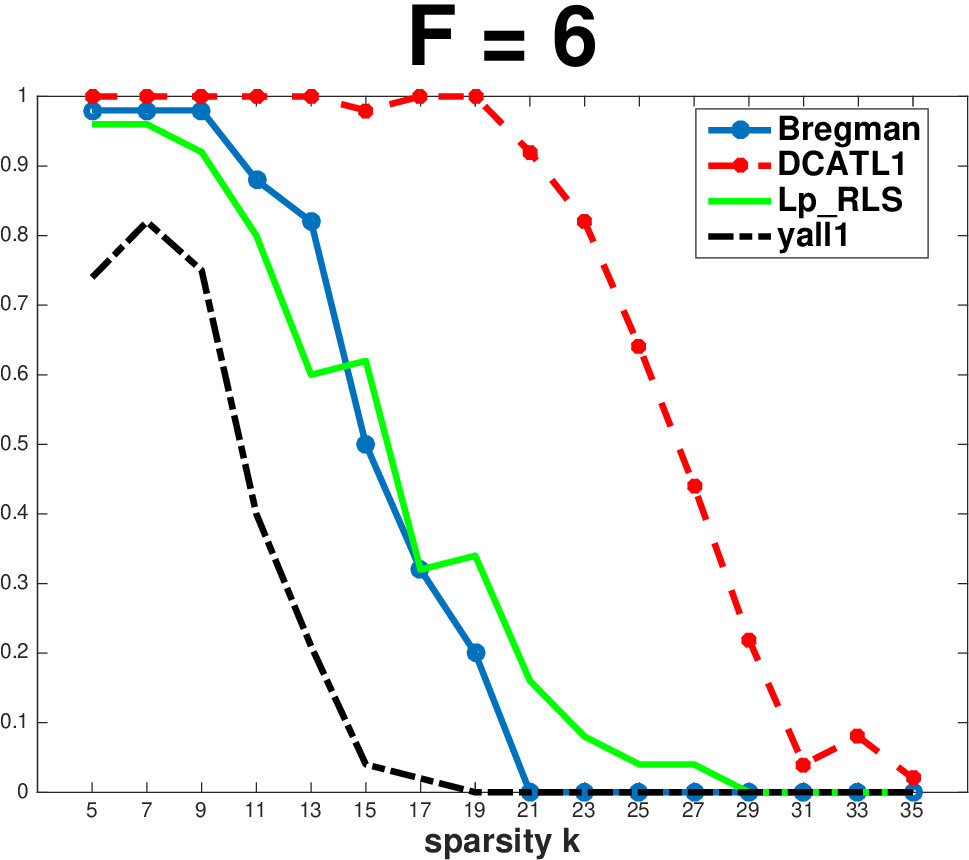}
\end{minipage} \\
\begin{minipage}[t]{0.32\linewidth}
\includegraphics[scale=0.26]{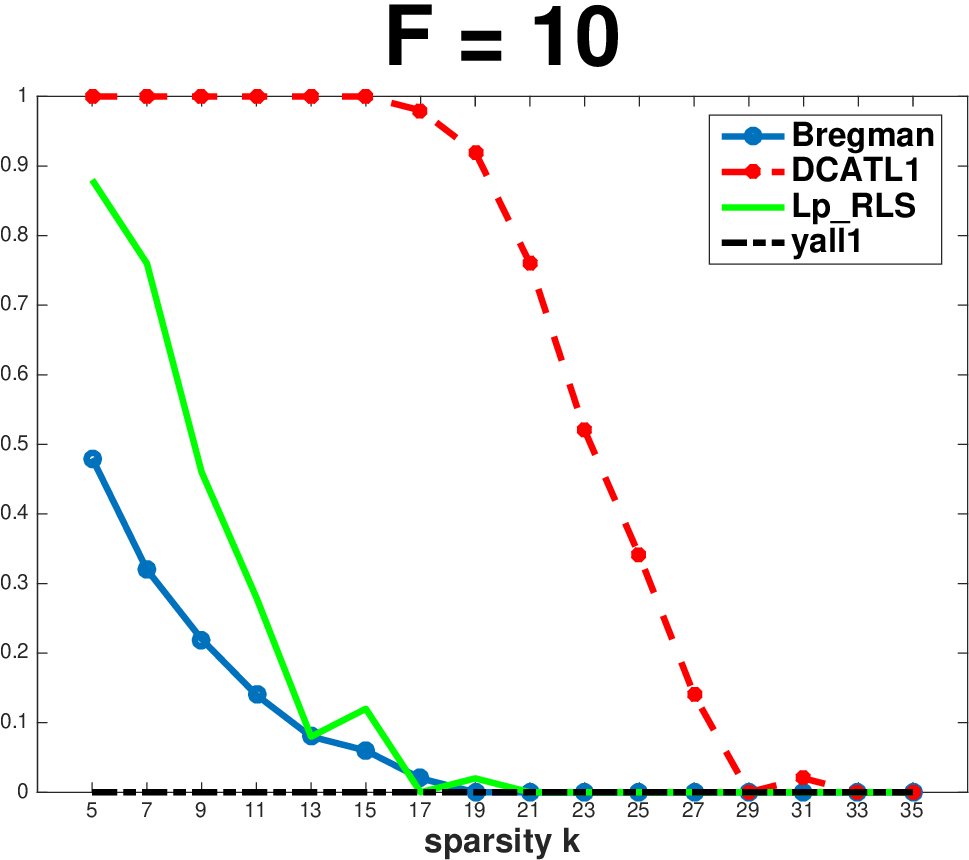}
\end{minipage} &
\begin{minipage}[t]{0.32\linewidth}
\includegraphics[scale= 0.26]{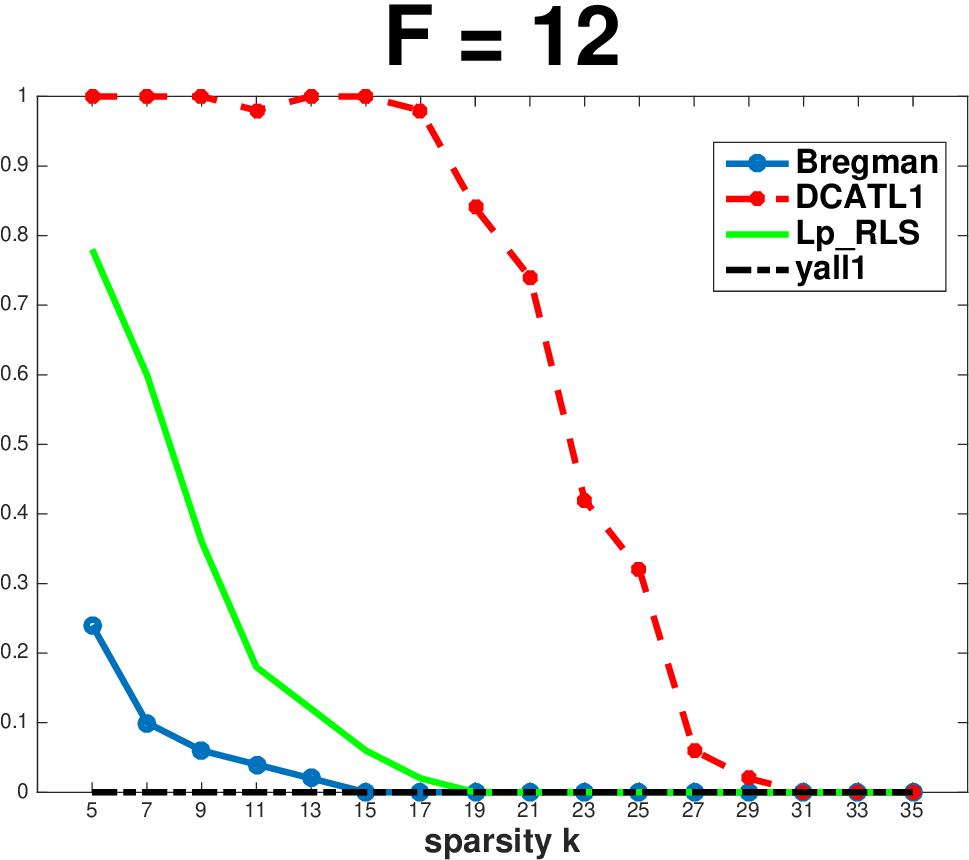}
\end{minipage}  &
\begin{minipage}[t]{0.34\linewidth}
\includegraphics[scale=0.26]{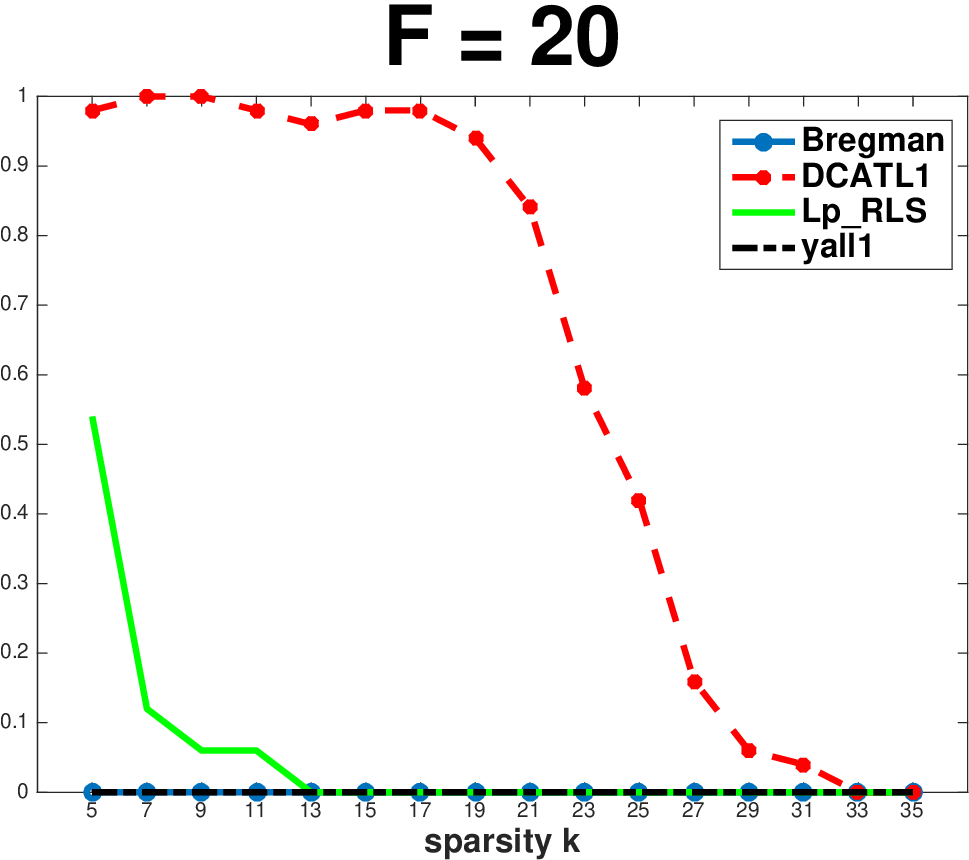}
\end{minipage} \\
\end{tabular}
\caption{Comparison of success rates of constrained algorithms for the over-sampled DCT random matrices: $(M,N) = (100,1500)$ 
with different $F$ values, peak separation by $2RL = 2F$.}
\label{graph:cons over}
\end{figure}
\medskip

\begin{rmk}
In view of evaluation results in this section (Fig. \ref{figure:Gaussian},
Fig. \ref{figure:F}, Fig. \ref{graph:cons-gaussian}, 
Fig. \ref{graph:cons over}), 
we see that DCATL1 algorithms offer robust solutions in CS problems 
for random sensing matrices with a broad range of coherence. In applications where 
sensing hardwares cannot be modified or upgraded, 
a robust recovery algorithm is a valuable tool for information retrieval. 
An example is super-resolution where sparse signals are recovered from 
low frequency measurements within the hardware resolution limit \cite{Super:candes2013mini,LYX_16}. 
\end{rmk}   

\section{Comparison of DCA on Different Non-convex Penalties}
\medskip

In this section, we compare DCA on 
other non-convex penalty functions such as 
PiE \cite{Nguyen-DCA-2015}, MCP \cite{MCP}, 
and SCAD \cite{fan2001variable}.  The computation is based on our 
DCA-ADMM scheme, which uses Algorithm \ref{alg: DCA}
to solve unconstrained optimization and 
Algorithm \ref{alg:ADMM} to solve 
subproblems. The DCA schemes of these penalty functions 
are same as in \cite{Miju, Nguyen-DCA-2015}.

Among the three penalties, PiE has one hyperparameter while MCP and 
SCAD have two hyperparameters. We used $64 \times 256$ Gaussian  
random matries to select best parameters for these penalties. 

\begin{table}
\begin{tabular}{ c| cl r  }
  \hline
 & Formula & Parameters \\
  \hline
	PiE 
		& $\phi(t) = 1 - e^{- \beta |t| }$ 
		& $\beta = 10$  \\  \hline
	MCP
		& $\phi(t) = \alpha \beta^2 - \dfrac{[(\alpha \beta - |t|)_{+}]^2}{\alpha}$ 
		& $\alpha = 5$, $\beta = 0.1$  \\  \hline
	SCAD
		& $ \phi(t) = \begin{cases}
			\beta |t| 
				, & \mathrm{if} \ \ |t| \leq \beta;  \\
             - \dfrac{t^2 - 2 \alpha \beta |t| + \beta^2}{2(\alpha - 1)}  
             	, & \mathrm{if} \ \ \beta < |t| \leq \alpha \beta;   \\
             \dfrac{(\alpha+1) \beta^2}{2} 
             	, & \mathrm{if} \ \ |t| > \alpha \beta.
            \end{cases} $
		& $\alpha = 5$, $\beta = 0.1$  \\  
   \hline
\end{tabular}
\label{table: other non-convex} 
\caption{Three non-convex penalty functions and their parameter values in the numerical experiments.}
\end{table}
\medskip

All other parameters for DCA algorithms during the numerical 
experiments are same as DCA-TL1. The success rate curves 
are shown in Fig. \ref{figure: DCA-compares}. DCA-MCP algorithm
has very good performance on all Gaussian and over-sampled DCT matrices. In all the 
experiments, DCA-TL1 achieves almost the same level of success rates as DCA-MCP.
Consistent with remark \ref{gnsp} and that the set of SCAD gNSP satisfying matrices 
is smaller than those of (PiE,TL1,MCP), SCAD is behind in the two plots of the 
first column of Fig. \ref{figure: DCA-compares} where the sensing matrices are 
in the incoherent regime. Interestingly in the highly coherent regime 
(for over-sampled DCT matrices at $F=20$), DCA-SCAD fares well. 
From PiE behind MCP, TL1 in Fig. \ref{figure: DCA-compares}, 
and $\ell_p$ (in view of Fig. \ref{figure:Gaussian}), the gNSP of PiE is likely to be 
more restrictive than those of MCP, TL1 and $\ell_p$, while the gNSP's of the latter three 
are rather close. A precise characterization of these gNSP's will be interesting for a future work.    
\medskip
  
It is worth pointing out that DCA-TL1 has only one hyperparameter and so 
is easier to adjust and adapt to different tasks. 
Two hyperparameters give more parameter space for  
improvement, but also require more efforts to search for optimal values. 

\begin{figure}
\begin{tabular}{lll}
\begin{minipage}[t]{0.32\linewidth}
\includegraphics[scale=0.24]{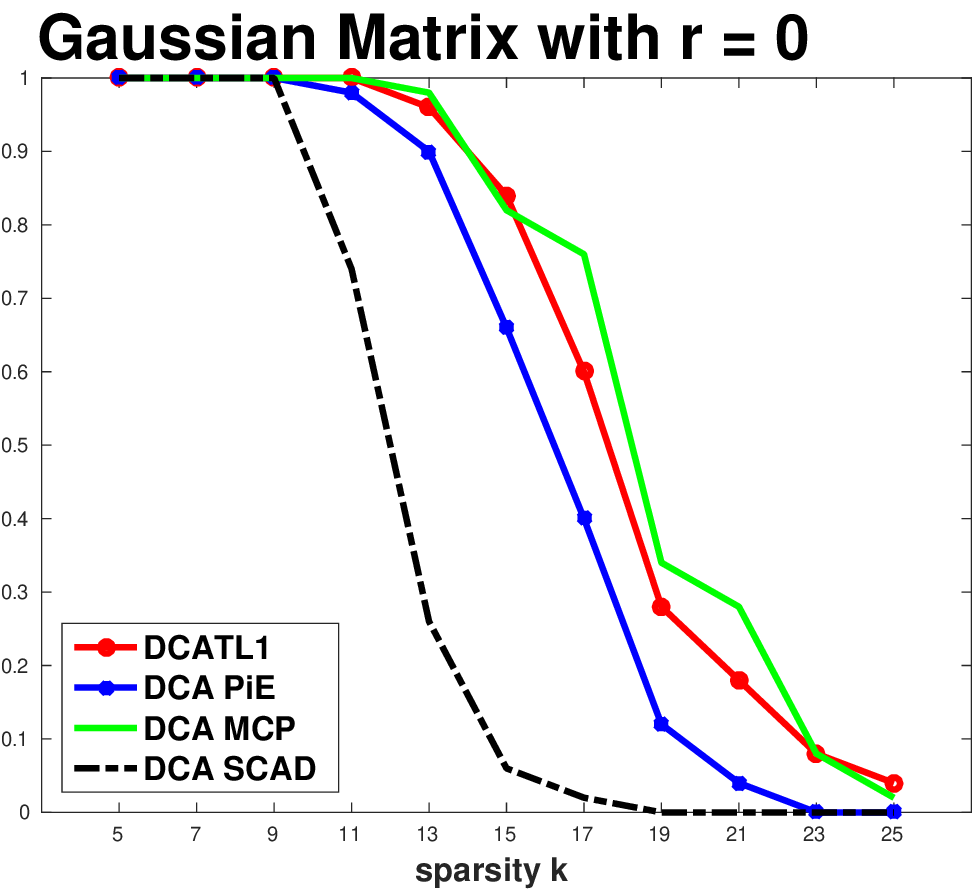}
\end{minipage} &
\begin{minipage}[t]{0.32\linewidth}
\includegraphics[scale= 0.24]{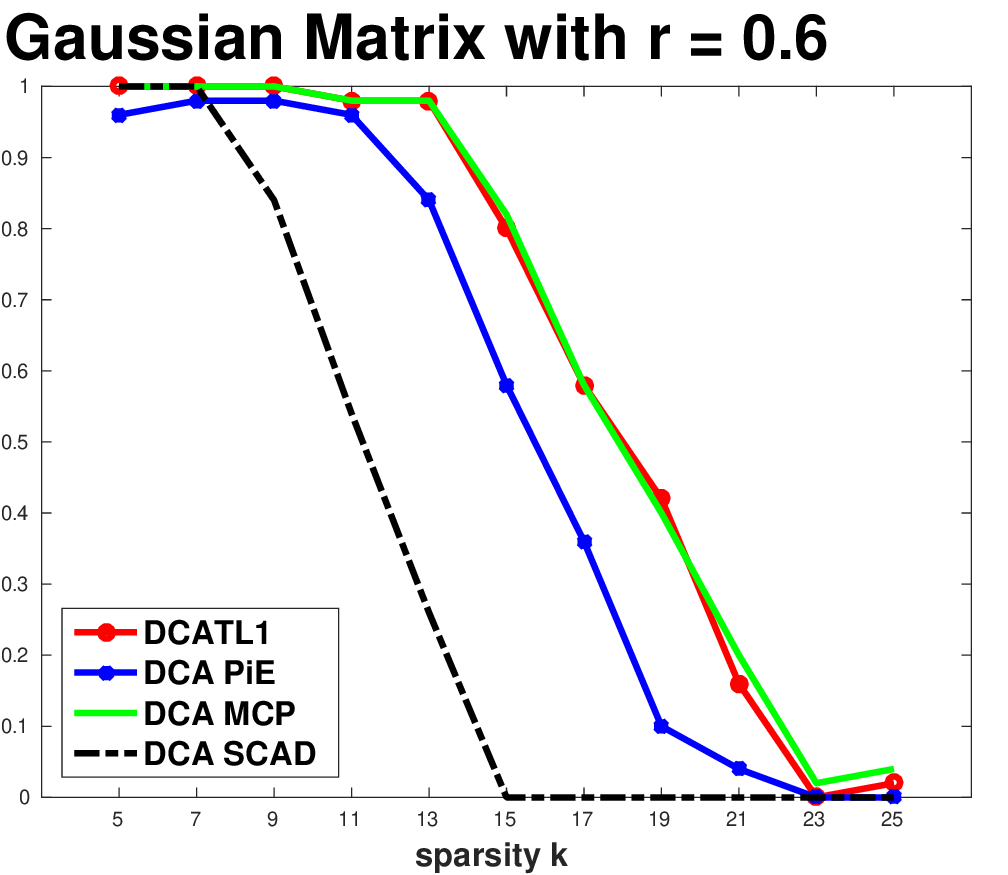}
\end{minipage} &
\begin{minipage}[t]{0.32\linewidth}
\includegraphics[scale=0.24]{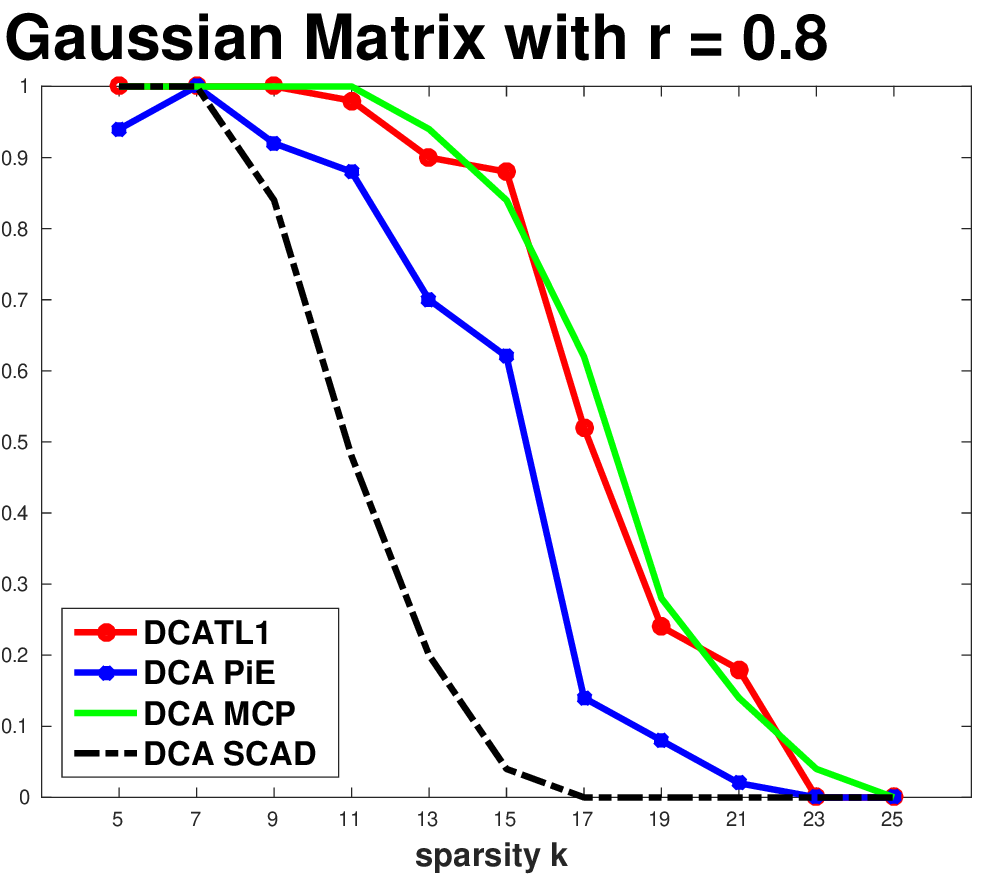}
\end{minipage} \\
\begin{minipage}[t]{0.32\linewidth}
\includegraphics[scale= 0.24]{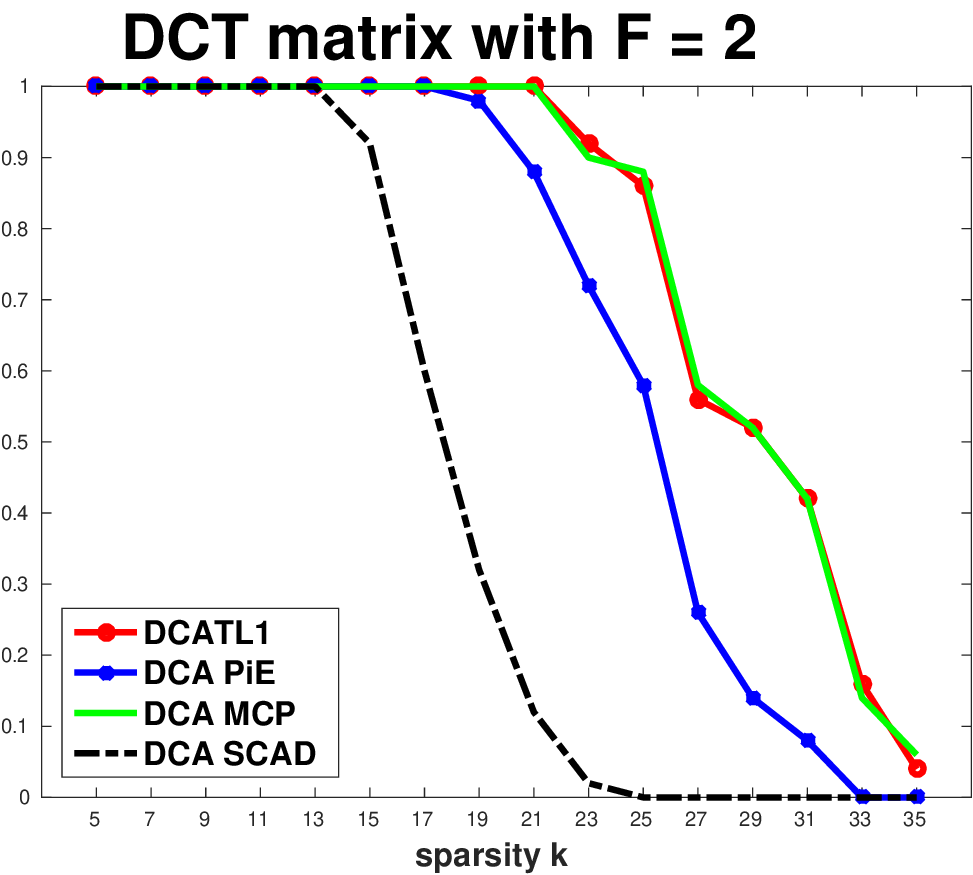}
\end{minipage}  &
\begin{minipage}[t]{0.32\linewidth}
\includegraphics[scale=0.24]{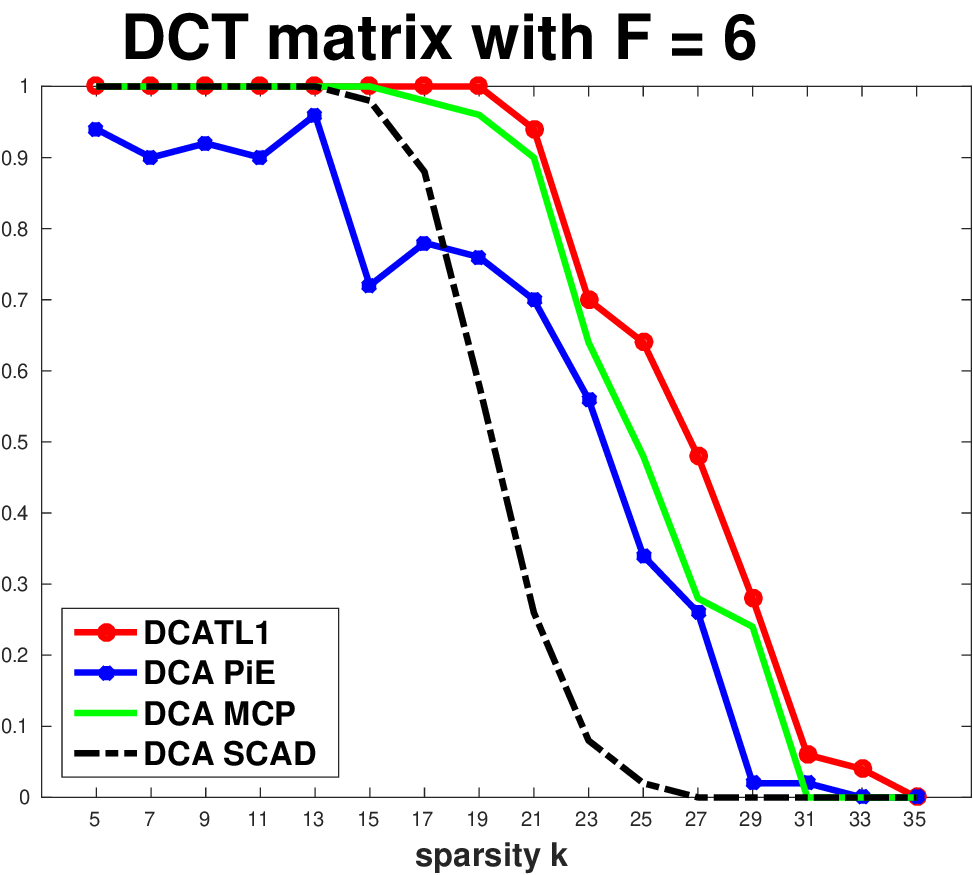}
\end{minipage} &
\begin{minipage}[t]{0.32\linewidth}
\includegraphics[scale= 0.24]{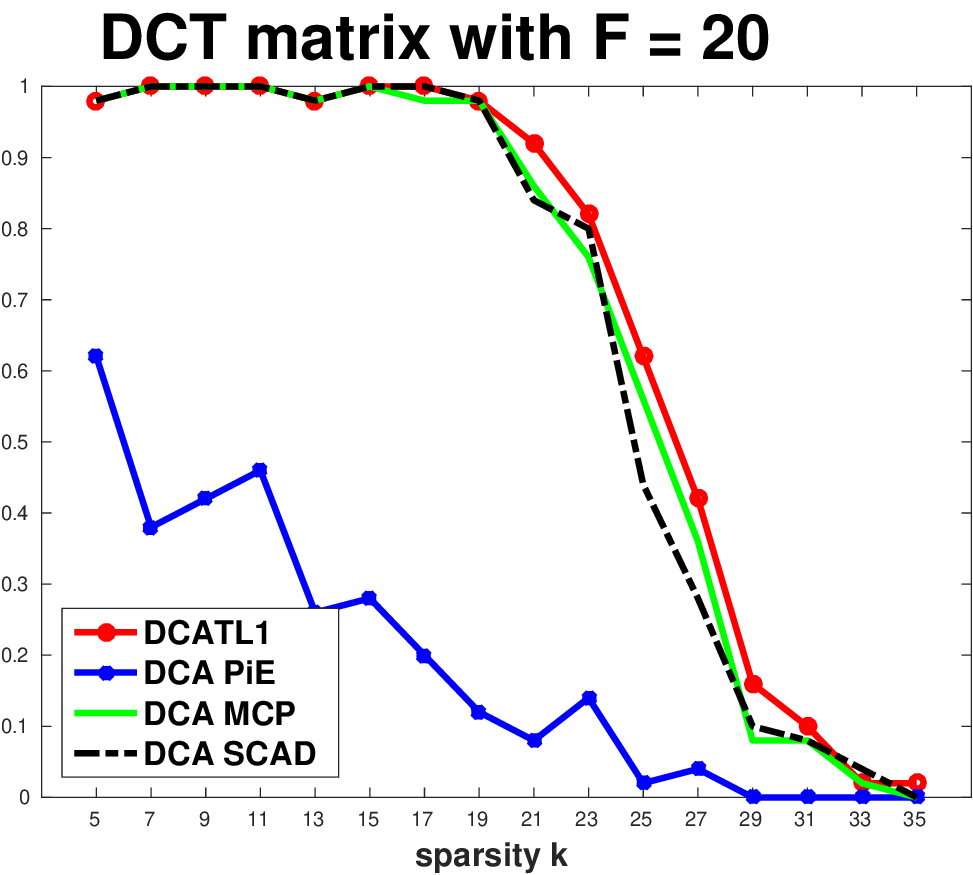}
\end{minipage} 
\end{tabular}
\caption{Numerical tests for DCA-ADMM scheme with different non-convex penalties 
using $64\times 1024$ random Gaussian matrices and $100 \times 1500$
over-sampled DCT matrices of varying $(r,F)$ values.}
\label{figure: DCA-compares}
\end{figure}

\section{Concluding Remarks}
\medskip

We have studied compressed sensing problems with the transformed $\ell_1$ penalty 
function (TL1) for both unconstrained and constrained models with 
random sensing matrices of a broad range of coherence. 
We discussed exact and stable recovery properties of TL1 using null space property 
and restricted isometry property of sensing matrices.  
We showed two DC algorithms along with a convergence theory. 
\medskip 

In numerical experiments, DCATL1 with ADMM solving the convex sub-problems 
is on par with the best method reweighted $l_{1/2}$ ($Lp-RLS$) in the 
unconstrained (constrained) model, in case  
of incoherent Gaussian sensing matrices.
For highly coherent over-sampled DCT random matrices, 
DCATL1 with ADMM solving the convex sub-problems 
is also comparable with the best method DCA $\ell_1-\ell_2$ algorithm.   
For random matrices of varying degree of coherence,   
the DCATL1 algorithm is the most robust for constrained 
and unconstrained models alike. We tested DCA with ADMM inner solver on other 
non-convex penalties (PiE, SCAD, MCP) with one and two hyperparameters, and  
found DCATL1 to be competitive as well. 
\medskip

In future work, we plan to develop TL1 algorithms for 
image processing and machine learning applications.

\section*{Acknowledgments}
The authors would like to thank Professor Wenjiang Fu for 
referring us to \cite{transformed-l1}, 
Professor Jong-Shi Pang for his helpful suggestions,  
and the anonymous referees for their constructive comments.
\medskip

\appendix
\section{Proof of Exact TL1 Sparse Recovery (Theorem 2.1)}
\begin{proof} 
The proof is along the lines of arguments in 
\cite{candes2005error} and \cite{chartrand2007nonconvex}, while using special 
properties of the penalty function $\rho_a$.  
For simplicity, we denote $\beta_{C}$ by $\beta$ and $\beta^0_{C}$ by $\beta^0$. \\

Let $e = \beta - \beta^0$, and we want to prove that the vector $e =0$. 
It is clear that, $e_{T^c} = \beta_{T^c}$, since $T$ is the 
support set of $\beta^0$. By the triangular inequality of $\rho_a$, we have: 
\begin{equation*}
   P_a(\beta^0) - P_a(e_T) = P_a(\beta^0) - P_a(-e_T) \leq P_a(\beta_T).
\end{equation*}
Then 
\begin{equation*}
\begin{array}{lll}
   P_a(\beta^0) - P_a(e_T) + P_a(e_{T^c}) & \leq & P_a(\beta_T) + P_a(\beta_{T^c}) \\
                                          & = & P_a(\beta) \\
                                          & \leq & P_a(\beta^0) \\
\end{array}
\end{equation*}
It follows that: 
\begin{equation}\label{Ineq:TTc}
   P_a(\beta_{T^c}) = P_a(e_{T^c}) \leq P_a(e_T).
\end{equation}

Now let us arrange the components at $T^c$ in the order of 
decreasing magnitude of $|e|$ and partition into $L$ parts: 
$T^c = T_1 \cup T_2 \cup ... \cup T_L$, where each $T_j$ has $R$ elements (except possibly 
$T_L$ with less). Also denote $T = T_0$ and $T_{01} = T \cup T_1$.  
Since $Ae = A(\beta - \beta^0) = 0$, it follows that 
\begin{equation}\label{Ineq:connector}
\begin{array}{lll}
   0 & = & \|Ae\|_2 \\
     & = & \| A_{T_{01}}e_{T_{01}} + \sum \limits_{j = 2}^L A_{T_j}e_{T_j} \|_2 \\
     & \geq & \| A_{T_{01}}e_{T_{01}} \|_2 - \sum \limits_{j = 2}^L  \| A_{T_j}e_{T_j} \|_2 \\
     & \geq & \sqrt{1-\delta_{|T|+R}} \|e_{T_{01}}\|_2 - \sqrt{1+\delta_{R}} 
     \sum \limits_{j = 2}^L \|e_{T_{j}}\|_2
\end{array}
\end{equation}

At the next step, we derive two inequalities between the $l_2$ norm and 
function $P_a$, in order to use the inequality (\ref{Ineq:TTc}). Since

\begin{equation*}
\begin{array}{lll}
   \rho_a(|t|) = \dfrac{(a+1)|t|}{a+|t|} & \leq & (\dfrac{a+1}{a})|t| \\
                           & = & (1 + \dfrac{1}{a})|t|  \\
\end{array}
\end{equation*}
we have: 
\begin{equation} \label{Ineq:2>a}
\begin{array}{lll}
   P_a(e_{T_{0}}) & = &  \sum \limits_{i \in T_{0}} \rho_a(|e_i|) \\
                   & \leq & (1 + \frac{1}{a}) \| e_{T_{0}} \|_1 \\
                   & \leq & (1 + \frac{1}{a}) \sqrt{|T|} \ \ \| e_{T_{0}} \|_2 \\ 
                   & \leq & (1 + \frac{1}{a}) \sqrt{|T|} \ \ \| e_{T_{01}} \|_2. \\
\end{array}
\end{equation}

Now we estimate the $l_2$ norm of $e_{T_j}$ from above in terms of $P_a$. 
It follows from $\beta$ being the minimizer of the problem (\ref{eq:spar revised}) and 
the definition of $x_C$ (\ref{para: scaled pa}) that 
\begin{equation*}
   P_a(\beta_{T^c})  \leq  P_a(\beta) 
                     \leq  P_a(x_{C}) 
                     \leq  1. 
\end{equation*}

For each $i \in T^c,\ \ \rho_a(\beta_i) \leq P_a(\beta_{T^c}) \leq 1$. Also since 
\begin{equation} \label{ineq: rho vs 1}
\begin{array}{ll}
                   & \dfrac{(a+1)|\beta_i|}{a+|\beta_i|} \leq 1 \\ 
   \Leftrightarrow & (a+1)|\beta_i| \leq a+|\beta_i| \\
   \Leftrightarrow & |\beta_i| \leq 1 \\
\end{array}
\end{equation}
we have
\begin{equation*}
   |e_i| = |\beta_i|   \leq  \dfrac{(a+1)|\beta_i|}{a+|\beta_i|} 
                       =  \rho_a(|\beta_i|) \ \ \  \mbox{for every $i \in T^c$}.  
\end{equation*}
It is known that function $\rho_a(t)$ is increasing for non-negative variable $t \geq 0$, and
\[
	|e_i| \leq |e_k|  \  \textit{ for } \   \forall \ i \in T_j 
	\   \textit{ and } \   \forall \ k \in T_{j-1},
\]
where $j = 2,3,...,L$. Thus we have
\begin{equation} \label{Ineq:a>2}
\begin{array}{ll}
\vspace{2mm}
   & |e_i| \leq \rho_a(|e_i|) \leq P_a(e_{T_{j-1}}) /R  \\
\vspace{2mm}
   \Rightarrow &  \|e_{T_j}\|_2^2 \leq \dfrac{P_a(e_{T_{j-1}})^2}{R} \\
\vspace{2mm}
   \Rightarrow &  \|e_{T_j}\|_2 \leq \dfrac{P_a(e_{T_{j-1}})}{R^{1/2}} \\ 
\vspace{2mm}
   \Rightarrow & \sum \limits_{j=2}^{L} \|e_{T_j}\|_2 \leq \sum \limits_{j=1}^{L} \dfrac{P_a(e_{T_j})}{R^{1/2}} \\
\end{array} 
\end{equation}

Finally, plug (\ref{Ineq:2>a}) and (\ref{Ineq:a>2}) into inequality (\ref{Ineq:connector}) to get:
\begin{equation}
\begin{array}{lll}
\vspace{2mm}
   0 & \geq & \sqrt{1-\delta_{|T|+R}} \dfrac{a}{(a+1)|T|^{1/2}}P_a(e_{T}) - \sqrt{1+\delta_{R}} \dfrac{1}{R^{1/2}} P_a(e_T) \\ 
\vspace{2mm}
     & \geq & \dfrac{P_a(e_T)}{R^{1/2}} \left( \sqrt{1-\delta_{R+|T|}} \dfrac{a}{a+1} \sqrt{\dfrac{R}{|T|}} - \sqrt{1+\delta_{R}} \right) 
\end{array}
\end{equation}

By the RIP condition (\ref{RIP}), the factor
$\sqrt{1-\delta_{R+|T|}} \dfrac{a}{a+1} \sqrt{\dfrac{R}{|T|}} - \sqrt{1+\delta_{R}}$ 
is strictly positive, hence $P_a(e_T) = 0$, and $e_T = 0$. 
Also by inequality (\ref{Ineq:TTc}), $e_{T^c} = 0$. 
We have proved that $\beta_{C} =  \beta^0_{C}$. The equivalence of (\ref{eq:spar revised}) 
and (\ref{eq:l0 revised}) holds. If another vector $\beta$ is the optimal solution 
of (\ref{eq:spar revised}), we can prove that it is also equal to $\beta^0_{C}$, using the same 
procedure. Hence $\beta_{C}$ is unique.

\end{proof}

\section{Proof of Stable TL1 Sparse Recovery (Theorem 2.2)}
\medskip

\begin{proof}
Set $ n = A \beta - y_C $. We use three notations below: 
\begin{itemize}
\item[(i)] $\beta_{C}^n \Rightarrow$ optimal solution for the constrained problem (\ref{eq:spar revised noise});
\item[(ii)] $\beta_{C}   \Rightarrow$ optimal solution for the constrained problem (\ref{eq:spar revised}); 
\item[(iii)] $\beta_{C}^0 \Rightarrow$ optimal solution for the $l_0$ problem (\ref{eq:l0 revised}).
\end{itemize}

Let $T$ be the support set of $\beta_{C}^0$, i.e., 
$T = supp(\beta_{C}^0)$, and vector $e = \beta_{C}^n - \beta_{C}^0$. 
Following the proof of Theorem \ref{thm:global}, we obtain: 
\begin{equation*}
\sum \limits_{j = 2}^{L} \| e_{T_j} \|_2 \leq \sum \limits_{j = 1}^{L} \frac{P_a(e_{T_j})}{R^{1/2}} = \frac{P_a(e_{T^c})}{R^{1/2}}
\end{equation*}
and 
\begin{equation*}
   \|e_{T_{01}} \|_2 \geq \frac{a}{(a+1) \sqrt{|T|}} P_a(e_T).
\end{equation*}

Further, due to the inequality $P_a(\beta_{T^c}^n) = P_a(e_{T^c}) \leq P_a(e_T)$ 
from (\ref{Ineq:TTc}) and inequalities in (\ref{Ineq:connector}), we get
\begin{equation*}
   \| Ae \|_2 \geq \frac{P_a(e_T)}{R^{1/2}} C_{\delta}, 
\end{equation*}
where $C_{\delta} = \sqrt{1-\delta_{R+|T|}} \dfrac{a}{a+1} \sqrt{\dfrac{R}{|T|}} - \sqrt{1+\delta_{R}}$.

By the initial assumption on the size of observation noise, we have  
\begin{equation} \label{ineq: A e vs tau}
\| Ae \|_2 =  \| A\beta_{C}^n - A \beta_{C}^0 \|_2 
            =  \| n \|_2 \leq \tau,
\end{equation}
so we have: $P_a(e_T) \leq \dfrac{\tau R_{1/2}}{C_{\delta}}$.

On the other hand, we know that $P_a(\beta_{C}) \leq 1 $ 
and $\beta_{C}$ is in the feasible set of the 
problem (\ref{eq:spar revised noise}). Thus we have the inequality: 
$P_a(\beta_{C}^n) \leq P_a(\beta_{C}) \leq 1$. 
By (\ref{ineq: rho vs 1}), $\beta_{C,i}^n \leq 1$ for each $i$. 
So, we have
\begin{equation}
   |\beta_{C,i}^n| \leq \rho_a(|\beta_{C,i}^n |).
\end{equation}
It follows that 
\begin{equation*}
\begin{array}{lll}
   \vspace{1mm}
   \|e\|_2 & \leq & \|e_T\|_2 + \|e_{T^c}\|_2 = \|e_T\|_2 + \|\beta_{C,T^c}^n\|_2 \\
   \vspace{1mm}
           & \leq & \dfrac{ \|A_T e_T\|_2 }{\sqrt{1-\delta_T}} + \| \beta_{C,T^c}^n \|_1 \\
           \vspace{1mm}
           & \leq & \dfrac{ \|A_T e_T\|_2 }{\sqrt{1-\delta_T}} + P_a(\beta_{C,T^c}^n) 
           = \dfrac{ \|A_T e_T\|_2 }{\sqrt{1-\delta_T}} + P_a(e_{T^c}) \\
           & \leq & \dfrac{\tau}{\sqrt{1-\delta_R}} + P_a(e_T) \leq D\tau,  
\end{array}
\end{equation*}
for a positive constant $D$ depending only on $\delta_R$ and $\delta_{R+|T|}$. 
The second inequality uses the definition of RIP, 
while the first inequality in the last row comes from (\ref{ineq: A e vs tau}) and 
(\ref{Ineq:TTc}). 
\end{proof}





\begin{thebibliography}{1}

\bibitem{Miju}M. Ahn, J-S. Pang, and J. Xin, 
{\em Difference-of-convex learning: directional stationarity, optimality, and sparsity}
SIAM Journal on Optimization 27 (3), 1637-1665, 2017

\bibitem{Boyd_11}S. Boyd, N. Parikh, E. Chu, B. Peleato, and J. Eckstein, 
{\em Distributed Optimization and Statistical Learning via the 
Alternating Direction Method of Multipliers}, 
Foundations and Trends in Machine Learning, 3(1):1-122, 2011.

\bibitem{candes2005decoding}E. Cand\`es, T. Tao,
{\em Decoding by linear programming}, IEEE Trans. Info. Theory, 51(12):4203-4215, 2005.

\bibitem{candes2005error}E. Cand\`es, M. Rudelson, T. Tao, R. Vershynin, 
{\em Error correction via linear programming}, in 46th Annual IEEE Symposium on
Foundations of Computer Science, pp. 668-681, 2005.

\bibitem{CRT_06}E. Cand\`es, J. Romberg, T. Tao,
{\em Robust uncertainty principles: Exact signal reconstruction from highly incomplete Fourier information},
IEEE Trans. Info. Theory, 52(2), 489-509, 2006.

\bibitem{candes2006stable}E. Cand\`es, J. Romberg, T. Tao,
{\em Stable signal recovery from incomplete and inaccurate measurements},
Comm. Pure Applied Mathematics, 59(8):1207-1223, 2006.

\bibitem{Super:candes2013mini}E. Cand\`es, C. Fernandez-Granda, 
{\em Super-resolution from noisy data}, Journal of Fourier Analysis and Applications, 19(6):1229-1254, 2013.

\bibitem{xian-half-cao2013fast-image} W. Cao, J. Sun, and Z. Xu,
\newblock Fast image deconvolution using closed-form thresholding formulas of
  regularization,
\newblock {\em Journal of Visual Communication and Image Representation},
  24(1):31-41, 2013.
  
\bibitem{chartrand2007nonconvex}R. Chartrand,
{\em Nonconvex compressed sensing and error correction}, ICASSP 2007, vol. 3, p. III 889. 

\bibitem{lprls-chartrand2008iteratively}R. Chartrand, W. Yin, 
{\em Iteratively reweighted algorithms for compressive sensing}, ICASSP 2008, pp. 3869-3872.

\bibitem{CDD_09}A. Cohen, W. Dahmen, R. DeVore, 
{\em Compressed sensing and the best k-term approximation}, J. Amer. Math. Soc., 
22, pp. 211-231, 2009.

\bibitem{Don_06}D. Donoho,
{\em Compressed sensing}, IEEE Trans. Info. Theory, 52(4), 1289-1306, 2006.

\bibitem{Don_03}D. Donoho, M. Elad,
{\em Optimally sparse representation in general (nonorthogonal)
dictionaries via $\ell_1$ minimization}, Proc. Nat. Acad. Scien. USA, vol. 100, pp. 2197-2202, 2003.

\bibitem{ELX} E. Esser, Y. Lou and J. Xin,
{\em A Method for Finding Structured Sparse Solutions to Non-negative Least Squares
Problems with Applications}, SIAM J. Imaging Sciences, 6(2013), pp. 2010-2046.

\bibitem{fan2001variable} J. Fan, and R. Li, 
{\em Variable selection via nonconcave penalized likelihood and its oracle properties,}
Journal of the American Statistical Association, 96(456):1348-1360, 2001.

\bibitem{Fann_12}A. Fannjiang, W. Liao,
{\em Coherence Pattern-Guided Compressive Sensing with Unresolved Grids},
SIAM J. Imaging Sciences, Vol. 5, No. 1, pp. 179--202, 2012.



\bibitem{splitBregman}T. Goldstein and S. Osher,
{\em The Split Bregman Method for $\ell_1$-regularized Problems}, 
SIAM Journal on Imaging Sciences, 2(1):323-343, 2009.

\bibitem{reweighted-l1/2-WTYing-2013improved}M. Lai, Y. Xu, and W. Yin,
{\em Improved Iteratively Reweighted Least Squares for Unconstrained Smoothed $\ell_q$ Minimization,}
SIAM Journal on Numerical Analysis, 51(2):927-957, 2013.
 
\bibitem{DCA:le2013sparse}H.A. Le Thi, B.T.A. Thi, and H.M. Le, 
{\em  Sparse signal recovery by difference of convex functions algorithms},
Intelligent Information and Database Systems, pp. 387-397. Springer, 2013.

\bibitem{DCA:le2014}H.A. Le Thi, V. Ngai Huynh and T. Pham Dinh,
{\em DC programming and DCA for general DC programs}, 
Advanced Computational Methods for Knowledge Engineering. Springer International Publishing, 2014. 15-35.

\bibitem{DCA:le2015}H.A. Le Thi, T. Pham Dinh, H.M. Le, and X.T. Vo,  
{\em DC approximation approaches for sparse optimization}, 
European Journal of Operational Research, 244.1 (2015): 26-46.

\bibitem{LYX_16}Y. Lou, P. Yin, and J. Xin,  
{\em Point Source Super-resolution via Non-convex L1 Based Methods}, 
J. Sci. Computing, 68(3), pp. 1082-1100, 2016. 
 
\bibitem{l1-l2-lou2014computing}Y. Lou, P. Yin, Q. He, and J. Xin, 
{\em Computing Sparse Representation in a Highly Coherent Dictionary Based on Difference of L1 and L2,} 
J. Scientific Computing, 64, 178--196, 2015.

\bibitem{penalty-decomp} Z. Lu and Y. Zhang, 
{\em Sparse approximation via penalty decomposition methods}, SIAM J. Optimization, 23(4):2448-2478, 2013.

\bibitem{transformed-l1}J. Lv, and Y. Fan, 
{\em A unified approach to model selection and sparse recovery using regularized least squares},
Annals of Statistics, 37(6A), pp. 3498-3528, 2009.

\bibitem{MP-mallat1993matching}S. Mallat and Z. Zhang, 
{\em Matching pursuits with time-frequency dictionaries}, 
IEEE Trans. Signal Processing, 41(12):3397-3415, 1993.

\bibitem{spnet_11}R. Mazumder, J. Friedman, and T. Hastie, 
{\em SparseNet: Coordinate descent with nonconvex penalties}, 
J. Amer. Stat. Assoc., 106(495):1125–1138, 2011.

\bibitem{L0-NP-natarajan1995sparse}B. Natarajan, 
{\em Sparse approximate solutions to linear systems}, 
SIAM Journal on Computing, 24(2):227-234, 1995.

\bibitem{ROMP-needell2010signal}D. Needell and R. Vershynin, 
{\em Signal recovery from incomplete and inaccurate measurements via regularized orthogonal matching pursuit},
IEEE Journal of Selected Topics in Signal Processing, 4(2):310-316, 2010.

\bibitem{Nguyen-DCA-2015}T.B.T. Nguyen, H.A. Le Thi, H.M. Le and X.T. Vo, 
{\em DC Approximation Approach for $\ell_0$-minimization in Compressed Sensing},
Advanced Computational Methods for Knowledge Engineering, 37-48, Springer, 2015.

\bibitem{original TL1}M. Nikolova, 
{\em Local strong homogeneity of a regularized estimator},
SIAM Journal on Applied Mathematics 61(2), (2000): 633-658.

\bibitem{DCA-ong2013learning}C.S. Ong, H.A. Le Thi, 
{\em Learning sparse classifiers with difference of convex 
functions algorithms}, Optimization Methods and Software, 28(4):830-854, 2013.

\bibitem{DCA:tao1997convex}T. Pham Dinh and H.A. Le Thi,
{\em Convex analysis approach to d.c. programming: Theory, algorithms
and applications}, Acta Mathematica Vietnamica, vol. 22, no. 1, pp. 289-355, 1997.

\bibitem{DCA:tao1998dc}T. Pham Dinh and H.A. Le Thi,
{\em A DC optimization algorithm for solving the trust-region subproblem},
SIAM Journal on Optimization, 8(2), pp. 476--505, 1998. 

\bibitem{CEL0}E. Soubies, L. Blanc-F\'eraud, and G. Aubert,
{\em A Continuous Exact $\ell_0$ Penalty (CEL0) for Least Squares Regularized Problem}, 
SIAM Journal on Imaging Sciences, 8.3 (2015): 1607-1639.

\bibitem{lasso}R. Tibshirani,
{\em  Regression shrinkage and selection via the lasso}, J. Royal. Statist. Soc, 58(1):267-288, 1996.

\bibitem{TW_17} H. Tran, and C. Webster, 
{\em Unified sufficient conditions for uniform recovery of sparse signals 
via nonconvex minimizations}, arXiv:1701.07348, Oct. 19, 2017. 

\bibitem{OMP-tropp2005signal}J. Tropp and A. Gilbert,
{\em Signal recovery from partial information via orthogonal matching pursuit},  
IEEE Trans. Inform. Theory, 53(12):4655-4666, 2007 

\bibitem{xian-half-sa}F. Xu and S. Wang,
{\em A hybrid simulated annealing thresholding algorithm for compressed sensing}, 
Signal Processing, 93:1577-1585, 2013.

\bibitem{xian-half}Z. Xu, X. Chang, F. Xu, and H. Zhang,
{\em $L_{1/2}$ regularization: A thresholding representation theory and a fast solver},
 IEEE Transactions on Neural Networks and Learning Systems,  
23(7):1013-1027, 2012.
  
\bibitem{yall1}J. Yang and Y. Zhang,
{\em Alternating direction algorithms for $l_1$ problems in compressive sensing}, 
 SIAM Journal on Scientific Computing, 33(1):250-278, 2011.

\bibitem{l1-l2-yinminimization}P. Yin, Y. Lou, Q. He, and J. Xin, 
{\em Minimization of $\ell_{1-2}$ for compressed sensing}, 
SIAM Journal on Scientific Computing, 37(1): A536 --A563, 2015.

\bibitem{YX_17}P. Yin, and J. Xin,
{\em Iterative $\ell_1$ minimization for non-convex compressed sensing}, 
J. Computational Mathematics, 35(4), 2017, pp. 437--449. 

\bibitem{Bregman:yin2008}
W. Yin, S. Osher, D. Goldfarb, and J. Darbon,
{\em Bregman iterative algorithms for $l_1$-minimization with applications to compressed sensing},
 SIAM Journal on Imaging Sciences, 1(1):143-168, 2008.

\bibitem{xian-half-theory-2014} J. Zeng, S. Lin, Y. Wang, and Z. Xu,
{\em $L_{1/2}$ regularization: Convergence of iterative half thresholding
  algorithm}, IEEE Transactions on Signal Processing, 62(9):2317-2329, 2014. 
 
\bibitem{MCP} C. Zhang, 
{\em Nearly unbiased variable selection under minimax concave penalty}, 
Ann. Statist., 38 (2010), pp. 894-942. 
 
\bibitem{TL1 threshold} S. Zhang and J. Xin,
{\em Minimization of Transformed $L_1$ Penalty: Closed Form Representation 
and Iterative Thresholding Algorithms}, Comm. Math Sci, 15(2), pp. 511--537, 2017. 

\bibitem{TS1} S. Zhang, P. Yin, and J. Xin, 
{\em Transformed Schatten-1 Iterative Thresholding Algorithms for 
Low Rank Matrix Completion}, Comm. Math Sci, 15(3), pp. 839--862, 2017. 
 
\end{thebibliography}
%

\end{document}